\newtheorem{theorem}{Theorem} [section]
\newtheorem{proposition}[theorem]{Proposition}	
\newtheorem{corollary}[theorem]{Corollary}	
\newtheorem{lemma}[theorem]{Lemma}		
\newtheorem{assumptions}[theorem]{Assumptions}
\newtheorem{remark}[theorem]{Remark}
\theoremstyle{definition}
\newcommand{\C}{\mathbb{C}}
\newcommand{\R}{\mathbb{R}}
\newcommand{\re}{\text{\upshape Re\,}}
\newcommand{\im}{\text{\upshape Im\,}}
\newcommand{\Ai}{{\rm Ai}}
\tikzset{->-/.style={decoration={
				markings,
				mark=at position #1 with {\arrow{latex}}},postaction={decorate}}}
	\tikzset{-<-/.style={decoration={
				markings,
				mark=at position #1 with {\arrowreversed{latex}}},postaction={decorate}}}
\tikzset{cross/.style={cross out, draw, 
         minimum size=2*(#1-\pgflinewidth), 
         inner sep=0pt, outer sep=0pt}}
\numberwithin{equation}{section}
\def\bigO{{\cal O}}
\begin{document}
\title{Uniform tail asymptotics for Airy kernel determinant solutions to KdV and for the narrow wedge solution to KPZ}
\author[1]{Christophe Charlier}
\author[2]{Tom Claeys}
\author[3]{Giulio Ruzza}
\renewcommand\Affilfont{\small}
\affil[1]{\textit{Department of Mathematical Sciences, University of Copenhagen, Universitetsparken 5, \newline 2100 Copenhagen, Denmark;} \texttt{charlier@math.ku.dk}}
\affil[2,3]{\textit{Institut de Recherche en Math\'ematique et Physique,  UCLouvain, Chemin du Cyclotron 2, B-1348 Louvain-la-Neuve, Belgium;} \texttt{tom.claeys@uclouvain.be}, \texttt{giulio.ruzza@uclouvain.be}}
\date{}
\maketitle

\begin{abstract}
We obtain uniform asymptotics for deformed Airy kernel determinants, 
which arise in models of finite temperature free fermions and which characterize the narrow wedge solution of the Kardar--Parisi--Zhang equation.
The asymptotics for the determinants yield uniform initial data for an associated family of solutions to the Korteweg--de\thinspace Vries equation, and uniform lower tail asymptotics for the narrow wedge solution of the Kardar--Parisi--Zhang equation. 
\end{abstract}

\medskip

\noindent
{\small{\sc AMS Subject Classification (2020)}: 41A60, 35Q53, 60H35, 60G55.}

\noindent
{\small{\sc Keywords}: KdV equation, KPZ equation, asymptotics, Riemann--Hilbert problems.}

\medskip

\section{Introduction}
We study Fredholm determinants of deformed Airy kernels of the form
\begin{equation}
\label{def:Q1}
Q_{\sigma}(x,t) := \det (1-\mathbb{K}_{\sigma,x,t}^{\mathrm{Ai}})
\end{equation}
where $\mathbb{K}_{\sigma,x,t}^{\mathrm{Ai}}$ is the integral operator acting on $L^2(\mathbb R)$ with kernel
\begin{align}
\label{def:Q2}
K_{\sigma,x,t}^{\rm Ai}(u,v) =\sigma(r(u;x,t))K^{\mathrm{Ai}}(u,v),\quad r(u;x,t)= \frac{u}{t^{2/3}}+\frac{x}{t},\end{align}
where $K^{\mathrm{Ai}}$ is the Airy kernel
\[
K^{\mathrm{Ai}}(u,v) = \frac{\Ai (u)\Ai' (v)-\Ai' (u)\Ai (v)}{u-v},
\]
$\Ai$ is the Airy function, and $\sigma:\mathbb R\to[0,1]$ belongs to a class of functions that we will specify below, and which is in particular such that $\mathbb K_{\sigma,x,t}^{\rm Ai}$ is trace class for all $x\in\mathbb R$ and $t>0$.
Expressing the Fredholm determinant as a Fredholm series, we have 
\begin{equation}
Q_\sigma(x,t)=\sum_{k=0}^\infty \frac{(-1)^k}{k!}\int_{\mathbb R^k}\det\left(K^{\mathrm{Ai}}(u_i,u_j)\right)_{i,j=1}^k\prod_{j=1}^k \sigma(r(u_j;x,t)) du_j.
\end{equation}

\medskip

Fredholm determinants of this form appear in various contexts. 
If $\sigma=1_{(0,+\infty)}$ is the Heaviside function, $Q_\sigma(x,t)$ is equal to the Tracy--Widom distribution  $F^{\rm TW}(s)$ evaluated at $s=-xt^{-1/3}$ \cite{TW}. In this case, 
the Tracy--Widom formula \cite{TW} states that
\begin{equation}\label{TW1}
\partial_x^2\log Q_{1_{(0,+\infty)}}(x,t)=-t^{-2/3}y^2\left(-xt^{-1/3}\right),
\end{equation}
where $y$ is the Hastings--McLeod solution of the Painlev\'e~II equation $y''=\xi y+2y^3$,
uniquely characterized by the asymptotics
\begin{equation}
\label{eq:asHMminusinfty}
y(\xi)\sim {\rm Ai}(\xi) \qquad \mbox{ as $\xi\to +\infty$.}
\end{equation}
Moreover, this solution also satisfies the asymptotics
\begin{equation}
\label{eq:asHMplusinfty}
y(\xi)=\sqrt{\frac{-\xi}{2}}\left(1+\frac{1}{8\xi^3}+\mathcal O\bigl(\xi^{-6}\bigr)\right) \qquad \mbox{ as $\xi\to -\infty$.}
\end{equation}
Upper and lower tail asymptotics for $\log Q_{1_{(0,+\infty)}}(x,t)$ as $s=-xt^{-1/3}$ goes to $\pm\infty$  can be derived from the Tracy--Widom formula \eqref{TW1}, the fact that $\lim_{x\to +\infty}\partial_x\log Q_{1_{(0,+\infty)}}(x,t)=0$, and the asymptotics \eqref{eq:asHMminusinfty} and~\eqref{eq:asHMplusinfty} for $y$: as $s=-xt^{-1/3}\to -\infty$, we have 
\begin{equation}\label{eq:TWtail}
\log F^{\rm TW}(-xt^{-1/3})=\log Q_{1_{(0,+\infty)}}(x,t)={-\frac{x^3}{12 t}}{-\frac{1}{8}\log (xt^{-1/3})}+c+o(1),
\end{equation}
and the value of $c$ was proven to be $c={\frac{\log 2}{24}}+{\zeta'(-1)}$ in \cite{DIK, BBdF}. Moreover, it is straightforward to verify that
\begin{equation}
\label{def:u0}
u_{1_{(0,+\infty)}}(x,t):=\partial_x^2\log Q_{1_{(0,+\infty)}}(x,t)+\frac{x}{2t}
\end{equation}
is a solution to the Korteweg--de\thinspace Vries (KdV) equation
\begin{equation}
\label{def:KdV}
\partial_t u+ 2u\partial_x u + \frac{1}{6}\partial_x^3u=0.
\end{equation}

\medskip

We will be interested here in $Q_\sigma$ for a class of smooth functions $\sigma$, and the most important choice of $\sigma$ for our concerns will be
\begin{align*}
\sigma(r)=\sigma_{\rm KPZ}(r):=\frac{1}{1+e^{-r}}.
\end{align*}
Then $Q_{\sigma}(x,t)$ characterizes the probability distribution of the narrow wedge solution of the KPZ equation, see e.g.~\cite{AmirCorwinQuastel, SasamotoSpohn, CLDR, BorodinGorin}, and encodes information about the 
tail probabilities of the KPZ solution.
See also \cite{LDMRS} for the $t\to+\infty$ asymptotics of $Q_\sigma(x,t)$ for $\sigma=\sigma_{\rm KPZ}$.
The lower tail expansion is particularly challenging, and it is connected to the $x\to +\infty$ asymptotics for $Q_\sigma(x,t)$, as we will explain in more detail in Section \ref{section:KPZ}. Lower tail  expansions have been obtained as $x\to +\infty$ for fixed $t$ \cite{CorwinGhosal, Tsai, CaCl2019, KPZKPDoussal, KrajenbrinkLeDoussal}, but without uniformity for $t$ small. In this work, we will complete the description of the lower tail of the KPZ solution with narrow wedge initial data, by deriving all growing terms in the $x\to+\infty$ asymptotics, with uniformity for $t$ small.

\medskip

By algebraic manipulations of the Fredholm determinant (see e.g.~\cite[Proof of Proposition 5.1]{AmirCorwinQuastel}), one can derive an alternative expression for $Q_{\sigma_{\rm KPZ}}$,
\begin{equation}
\label{def:Q3}
Q_{\sigma_{\rm KPZ}}(x,t)=\det\left(1-1_{(-xt^{-1/3},\infty)}\mathbb L^{\rm Ai}_{{\sigma_{\rm KPZ}},x,t}\right),
\end{equation}
where $1_{(-xt^{-1/3},\infty)}$ is the projection operator onto $L^2(-xt^{-1/3},\infty)$, and $\mathbb L^{\rm Ai}_{\sigma,x,t}$ is the integral operator acting on $L^2(\mathbb R)$ corresponding to the deformed Airy kernel
\begin{equation}
\label{def:Q4}
L^{\rm Ai}_{\sigma,x,t}(u,v)=\int_{\mathbb R}
\sigma(st^{-2/3}){\rm Ai}(u+s){\rm Ai}(v+s)ds.
\end{equation}
This representation of $Q_{\sigma_{\rm KPZ}}$ is known as the finite temperature analogue of the Tracy--Widom distribution, arising in the physics of free fermion models at finite temperature \cite{DLMS} and as the largest particle distribution in the MNS model \cite{MNS,Johansson, LiechtyWang}.
In this context, $x\to +\infty$ asymptotics of $Q_{\sigma_{\rm KPZ}}(x,t)$ describe the large gap probability for the largest free fermion. 

\medskip

For a large class of functions $\sigma$, including all $\sigma$ satisfying Assumptions \ref{assumptions} below, we have
\begin{equation*}
		\partial_x^2 \log Q_\sigma\left(x,t\right) = -\frac{1}{t}\int_\mathbb R \phi_\sigma^2\left(r;x,t\right)d\sigma\left(r\right),
	\end{equation*}
where $\phi_\sigma$ satisfies the integro-differential Painlev\'e II equation
	\begin{equation}\label{integrodiffPII}
		\partial_x^2\phi_\sigma\left(z;x,t\right) = \left(z - \frac{x}t + \frac{2}t \int_{\mathbb R} \phi_\sigma^2\left(r;x,t\right) d\sigma\left(r\right) \right) \phi_\sigma\left(z;x,t\right),
	\end{equation}
see \cite{AmirCorwinQuastel} and \cite[Corollary 1.4]{CaClR2020}, and like in the case $\sigma=1_{(0,+\infty)}$, we have that 
$u_\sigma$ defined by 
\begin{equation}
\label{def:u}
u_{\sigma}(x,t):=\partial_x^2\log Q_{\sigma}(x,t)+\frac{x}{2t}
\end{equation}
solves the KdV equation \eqref{def:KdV}.
This result was proved in \cite[Theorem 1.3]{CaClR2020}, but the relation between Fredholm determinants of a general class of operators and the KdV equation was already understood much earlier by P\"oppe \cite{Poppe}; for other similar relations between Fredholm determinants and integrable systems (in particular, the KP equation) see also  \cite{PoppeSattinger} and the recent works \cite{BaikLiuSilva,BothnerCafassoTarricone,Krajenbrink, KPZKPDoussal,LNR,QuastelRemenik}.

\medskip

In the case $\sigma=1_{(0,+\infty)}$, the $xt^{-1/3}\to \pm\infty$ asymptotics for the KdV solution are obtained easily from \eqref{TW1}--\eqref{def:u0}: we have 
\begin{align*}
&u_{1_{(0,+\infty)}}(x,t)=\frac{x}{2t}+\mathcal O\bigg( \frac{e^{-\frac{4}{3}(-xt^{-1/3})^{3/2}}}{t^{2/3}(xt^{-1/3})} \bigg),&&\mbox{as $xt^{-1/3}\to -\infty$,} \\
&u_{1_{(0,+\infty)}}(x,t)=\frac{1}{8x^2}+\bigO\bigg(\frac{1}{t^{2/3}(xt^{-1/3})^{5}}\bigg),&&\mbox{as $xt^{-1/3}\to +\infty$.}
\end{align*}

In \cite[Theorem 1.8]{CaClR2020}, it was proved, again for a class of functions $\sigma$ including those satisfying Assumptions \ref{assumptions} below, that the $t\to 0$ asymptotics for $u_\sigma(x,t)$ are as follows, see also Figure \ref{figure: phase diagram}.

\begin{itemize}
\item[(i)] For any $t_0>0$, there exist $M, c>0$ such that we have uniformly for $x\leq-M t^{1/3}$ and for $0<t<t_0$ that
\begin{equation}
\label{estimatei}
u_\sigma\left(x,t\right)=\frac{x}{2t}+\bigO\left({\rm e}^{-c\frac{|x|}{t^{1/3}}}\right).
\end{equation}
\item[(ii)] There exists $\epsilon>0$ such that for any $M>0$, we have uniformly for $|x|\leq M t^{1/3}$ and for $0<t<\epsilon$ that 
\begin{equation}
u_\sigma\left(x,t\right)=\frac{x}{2t}-t^{-2/3}y^2\left(-xt^{-1/3}\right)+{\mathcal O\left(1\right)},
\end{equation}
with $y$ the Hastings--McLeod solution of Painlev\'e II.
\item[(iii)] There exist $\epsilon, M>0$ such that for any $K>0$, we have uniformly for $M t^{1/3}\leq x\leq K$ and for $0<t<\epsilon$ that
\begin{equation}
\label{eq:thmasymptoticsiii}
u_\sigma\left(x,t\right)=v_\sigma\left(x\right)\left(1+\mathcal O\left(x^{-1}t^{1/3}\right)\right),
\end{equation}
where $v_\sigma$ is a function of $x>0$, independent of $t$, with asymptotics
\begin{equation}
\label{as:v}
v_\sigma(x)=\frac{1}{8x^2}+\frac 12\int_{-\infty}^{+\infty}\left(1_{(0,+\infty)}(r)-\sigma(r)\right)d r+\mathcal O\left(x^2\right),\qquad \mbox{as $x\to 0_+$.}
\end{equation}
Moreover, $v_\sigma$ satisfies an integro-differential version of the Painlev\'e V equation \cite[Theorem 1.12]{CaClR2020}.
\end{itemize}

It follows from these asymptotics that the KdV solutions $u_\sigma$ are unbounded for all $t>0$ and have ill-defined initial data at $t=0$; in particular their direct and inverse scattering theory is non-standard (see also \cite{DubrovinMinakov,ItsSukhanov}).

The analogous result for $Q_\sigma(x,t)$ \cite[Theorem 1.14]{CaClR2020} is as follows.
\begin{itemize}
\item[(i)] For any $t_0>0$, there exist $M, c>0$ such that we have uniformly for $x\leq -M t^{1/3}$ and for $0<t<t_0$ that
\begin{equation}
\log Q_\sigma\left(x,t\right)=\mathcal O\left(e^{-c\frac{|x|}{t^{1/3}}}\right).
\end{equation}
\item[(ii)] There exists $\epsilon>0$ such that for any $M>0$, we have uniformly for $|x|\leq M t^{1/3}$ and for $0<t<\epsilon$ that
\begin{equation}
\log Q_\sigma\left(x,t\right)=\log F_{\rm TW}\left(-xt^{-1/3}\right)+\mathcal O\left(t^{1/3}\right),
\end{equation}
where $F_{\rm TW}$ is the Tracy--Widom distribution.
\item[(iii)] There exist $\epsilon, M>0$ such that for any $K>0$, we have uniformly for $M t^{1/3}\leq x\leq K$ and for $0<t<\epsilon$ that
\begin{equation}
\label{logQexpansion}
\log Q_\sigma\left(x,t\right)=-\frac{x^3}{12t}-\frac{1}{8}\log (xt^{-1/3}) +\frac{\log 2}{24} +\zeta'(-1)
+\int_0^x (x-\xi) \left(v_\sigma(\xi)-\frac{1}{8\xi^2}\right)d\xi+\mathcal O(x^{-1}t^{1/3}),
\end{equation}
where $\zeta$ is the Riemann zeta function.\end{itemize}
It is important to note that these asymptotics are uniform for $x$ large and negative, but not for $x$ large and positive. 
The above expansion \eqref{logQexpansion} should be compared with the tail expansion \eqref{eq:TWtail} for the Tracy--Widom distribution.

\medskip

Besides for $\sigma=1_{(0,+\infty)}$, precise asymptotics for $Q_\sigma(x,t)$ as $x\to +\infty$ are only known for  $\sigma=\sigma_{\rm KPZ}$ \cite[Theorem 1.1]{CaCl2019}; they hold for $t>0$ fixed and also as $t\to 0$ slow enough such that $t\geq \delta/x$, where $\delta>0$ is fixed (see also Figure \ref{figure: phase diagram}):\footnote{The parameters $s$ and $T$ of \cite{CaCl2019} correspond respectively to $xt^{-1/3}$ and $t^{-2}$ in this paper.}
\begin{equation}\label{eq:Qlowertail}
\log Q_{\sigma_{\rm KPZ}}(x,t)=-\frac{1}{\pi^6t^4}F_1(\pi^2xt){-\frac{1}6}\sqrt{1+{\pi^2xt}} 
+\mathcal O(\log^2 (xt^{-1/3}))+\mathcal O(t^{-2/3}),
\end{equation}
where
\begin{equation}\label{def:phi}
F_1(y) := \frac{4}{15}(1+y)^{5/2}-\frac{4}{15}-\frac{2}{3}y-\frac{1}{2}y^2.
\end{equation}
Observe that $F_1(y)\sim\frac{y^3}{12}$ as $y\to 0$, and  $F_1(y)\sim \frac{4}{15}y^{5/2}$ as $y\to\infty$.

\medskip

Our main objective is to derive $x\to +\infty$ asymptotics for $u_\sigma(x,t)$ and $Q_\sigma(x,t)$ that are uniform for $t$ sufficiently small (also for $t\leq\delta/x$), for a rather broad class of functions $\sigma$, in order to complete the description of the initial data for the KdV solutions $u_\sigma(x,t)$, and to describe the cross-over between the small $t$ asymptotics \eqref{logQexpansion} and the tail expasion \eqref{eq:Qlowertail} of $Q_\sigma(x,t)$. In the particular case $\sigma=\sigma_{\rm KPZ}$, the asymptotics for  $Q_\sigma(x,t)$  will allow us to derive uniform lower tail asymptotics for the KPZ solution with narrow wedge initial data, see Section \ref{section:KPZ}.
Our results are more general and also more precise than \eqref{eq:Qlowertail}: Theorem \ref{theorem:main} below, even when specialized to $\sigma=\sigma_{\mathrm{KPZ}}$ and to the subsector $t \geq \delta/x$, $t \leq t_{0}$, contains new results, see Remark \ref{remark:sigmaKPZ}, part 3.

\medskip

Let us now formulate our assumptions on $\sigma$.
\begin{assumptions}\label{assumptions}
The function $\sigma:\mathbb R\to [0,1)$ is such that $F(r):=\frac{1}{1-\sigma(r)}$ satisfies the following conditions.
\begin{enumerate}
\item[{\rm 1.}] $F$ can be extended to an entire function in the complex plane.
\item[{\rm 2.}] $F'(x)\geq 0$ and $(\log F)''(x)\geq 0$ for all $x\in\mathbb R$.
\item[{\rm 3.}] There exist $c_-, c_-'>0$ such that $F(r)=1+c_-'e^{-c_-|r|}(1+o(1))$ as $r\to -\infty$.
\item[{\rm 4.}] There exist $c_+, c_+',\epsilon>0$ such that $F(r)=c_+'e^{c_+r}(1+\mathcal O(e^{-\epsilon r}))$ as $r\to +\infty$,
and for $z\in\mathbb C$, we have
\begin{align*}
\left|F(z)\right|=\mathcal O(e^{c_+\re z}), \qquad \mbox{as } \re z\to +\infty.
\end{align*}
\end{enumerate}
\end{assumptions}

\begin{remark}
\label{remark:logFdecay}
\begin{enumerate}
\item[{\rm 1.}] Functions $\sigma$ with discontinuities like $\sigma=1_{(0,+\infty)}$ are not admissible.

\item[{\rm 2.}] A class of admissible functions $F$ consists of the Laplace transforms of measures of the form $\mu=\delta_0+a_-\delta_{c_-}+\nu +a_+\delta_{c_+}$, where $c_\pm, a_\pm>0$, $c_-\leq c_+$, where $\delta_x$ is the Dirac distribution at $x$, and where $\nu$ is a finite Borel measure whose support is a subset of $[c_-,c_+)$ without atom at $c_-$.
Then, it is indeed straightforward to verify that 
\[F(z)=\int_\R e^{tz}d\mu(t)=1+a_-e^{c_-z}+\int_\R e^{tz}d\nu(t)+a_+e^{c_+z}\]
satisfies the assumptions, with $c_\pm'=a_\pm$ if $c_-<c_+$, and $c_+'=c_-'=a_++a_-$ if $c_+=c_-$.
Observe that these functions are completely monotonic, i.e.\ $F$ and all its derivatives are positive.
In particular, our prototype example $\sigma=\sigma_{\rm KPZ}$ is admissible and belongs to this class, with
\[F(z)=1+e^{z},\qquad \mu=\delta_0+\delta_1,\qquad c_\pm=c_\pm'=1,\qquad \epsilon\leq 1.\]

\item[{\rm 3.}]
Assumptions \ref{assumptions} imply that
\begin{align*}
(\log F)'(r) \leq \frac{\log F(2r) - \log F(r)}{r} = c_{+} + \bigO(e^{-\epsilon r}) \leq (\log F)'(2r), \qquad r \to + \infty,
\end{align*}
from which we conclude that 
\begin{equation}
\label{eq:vplusinfty}
(\log F)'(r) = c_{+} + \bigO(e^{-\frac{\epsilon}{2} r}),\qquad r \to + \infty.
\end{equation}
Similarly, since
\begin{align*}
(\log F)'(2r) \leq \frac{\log F(2r) - \log F(r)}{r} = \bigO(e^{-c_{-} |r|}) \leq (\log F)'(r), \qquad r \to - \infty,
\end{align*}
we deduce that 
\begin{equation}
\label{eq:vminusinfty}
(\log F)'(r) = \bigO(e^{-\frac{c_{-}}{2} |r|}),\qquad r \to - \infty.
\end{equation}

\item[{\rm 4.}]
{In \cite{CaClR2020} a broader class of functions $\sigma$ is considered. Namely, it is assumed in \cite[Assumptions~1.1]{CaClR2020} that $\sigma$ is non-decreasing and $\mathscr C^\infty$ on the real line (except possibly at a finite number of points), that $r\mapsto r^2\sigma(r)\in L^2((-\infty,0),dr)$, and in~\cite[equations~(1.21) and (1.22)]{CaClR2020} that $\sigma$ is exponentially close to $1_{(0,+\infty)}$ and $\sigma'$ vanishes at least quadratically at $\pm\infty$.
It is straightforward to check that if $\sigma$ satisfies Assumptions~\ref{assumptions}, then it also satisfies all the assumptions made in \cite{CaClR2020} which we just recalled.
In particular, the operator $\mathbb K_{\sigma,x,t}$ is trace-class \cite[Remark~1.2]{CaClR2020} and all the above mentioned results from \cite{CaClR2020} hold.}
The stronger Assumptions~\ref{assumptions} are needed here for technical reasons in the asymptotic analysis.
\end{enumerate}
\end{remark}

\medskip

\begin{theorem}\label{theorem:main}
Let $\sigma$ satisfy Assumptions \ref{assumptions}, let $Q_\sigma(x,t)$ be as in \eqref{def:Q1}--\eqref{def:Q2}, and let $u_\sigma(x,t)$ be as in \eqref{def:u}.
For any $t_0>0$, there exists $K>0$ such that we have the asymptotics 
\begin{align}
\label{thm:asu}
u_\sigma(x,t)&=
\frac x{2t}a_0\left(\frac{\pi^2}{c_+^2}xt\right)+\frac 1{2\sqrt{xt}}a_1\left(\frac{\pi^2}{c_+^2}xt\right)+\frac{t^{1/2}}{2x^{3/2}}a_2\left(\frac{\pi^2}{c_+^2}xt\right)
+\bigO(x^{-2}),
\\
\label{thm:asQ}
\log Q_\sigma(x,t)&=-\frac{c_+^6}{\pi ^6t^4}F_1\left(\frac{\pi^2}{c_+^2}xt\right)-\frac{c_+^3\log c'_+}{\pi^4t^2}F_2\left(\frac{\pi^2}{c_+^2}xt\right)+F_3\left(\frac{\pi^2}{c_+^2}xt\right)
-C\log t
+\mathcal O(1),
\end{align}
with error terms that are uniform for $x\geq K$ and $0<t\leq t_0$.
Here 
\begin{align}
\label{eq:thm-a0a1}
a_0(y)&=\frac{\left(\sqrt{y+1}-1\right)^2}y,\qquad
a_1(y)=-\frac{\log c_+'}{\pi}\sqrt{\frac y{1+y}},
\\
\label{eq:thm-a2}a_2(y) &= \frac{y^{3/2}}{\sqrt{y+1}(\sqrt{y+1}-1)^{2}}\bigg( \frac{1}{4\pi c_{+}}\frac{1-2\sqrt{y+1}}{y+1}\log^{2} c_{+}' - j_{\sigma} \bigg)
,\\\label{eq:thm-F1F2}F_1(y) &=\frac{4}{15}\left(1+y\right)^{5/2}-\frac{4 }{15}-\frac{2}{3}y-\frac{1}{2}y^2,
\qquad F_2(y) =\frac 23\left(1+y\right)^{3/2}-\frac 23-y =F_1'(y),
\\
\label{eq:thm-F3}
F_3(y) &=\frac{2c_+ j_\sigma}\pi\sqrt{1+y}-\frac{1}{48}\log(1+y)+\left(\frac{2c_+ j_\sigma}\pi+\frac{\log^2c'_+}{2\pi^2}+\frac{1}{24}\right)\log\left(\sqrt{1+y}-1\right),
\end{align}
and 
\[
C=\frac{2c_+ j_\sigma}\pi+\frac{\log^2c'_+}{2\pi^2},\qquad
j_\sigma = \frac 1{2\pi}\int_{-\infty}^{+\infty}\left[\log(1-\sigma(r))+(c_+r+\log c_+')1_{(0,+\infty)}(r)\right]dr.\]
\end{theorem}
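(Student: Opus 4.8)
The plan is to set up and analyze a Riemann--Hilbert problem (RHP) for the integro-differential Painlev\'e II solution $\phi_\sigma$ associated to $Q_\sigma$, extracting $x \to +\infty$ asymptotics uniform in $t \in (0,t_0]$, and then to integrate these asymptotics up through the differential identities to obtain \eqref{thm:asu} and \eqref{thm:asQ}. Concretely, I would start from the Lax-pair/RHP characterization of $\phi_\sigma$ and of $\partial_x^2 \log Q_\sigma$ provided by \cite{CaClR2020} (recall $\partial_x^2 \log Q_\sigma = -\tfrac1t\int_\mathbb R \phi_\sigma^2\,d\sigma$ and \eqref{integrodiffPII}), and rescale variables so that the natural large parameter is $\pi^2 xt/c_+^2$ together with $x$ itself. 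The key structural point is that in the regime $x \to +\infty$ the relevant ``potential'' is governed by the exponential growth $F(r) \sim c_+' e^{c_+ r}$ of $F = (1-\sigma)^{-1}$ at $+\infty$; the function $j_\sigma$ is precisely the renormalized integral measuring the deviation of $\log(1-\sigma)$ from its linear asymptote, and it will enter through a one-parameter family of ``constant of integration'' contributions.

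The main steps, in order, are: (1) Reformulate $\phi_\sigma$ via the RHP from \cite{CaClR2020}, introduce the $g$-function / equilibrium-measure normalization adapted to the relevant band structure (I expect a single-cut situation whose endpoints are encoded by $\sqrt{1+y}$ with $y = \pi^2 xt/c_+^2$, explaining the ubiquitous $(\sqrt{y+1}\pm 1)$ factors in $a_0,a_1,a_2,F_3$). (2) Perform the standard sequence of transformations (opening of lenses, global parametrix built from elementary functions, local Airy-type parametrices at the soft edge) and prove a small-norm estimate for the final error RHP, with bounds uniform in both $x$ large and $t \in (0,t_0]$ --- this uniformity, especially down to $t \to 0$ where the problem degenerates, is the crux. (3) Read off $\phi_\sigma(z;x,t)$ and hence $\partial_x^2 \log Q_\sigma$ from the $1/z$-expansion of the solution, obtaining \eqref{thm:asu} after adding $x/(2t)$; the functions $a_0,a_1,a_2$ emerge from the leading, subleading, and sub-subleading terms of the global parametrix, with $a_1$ carrying the $\log c_+'$ from the edge behavior of $F$ and $a_2$ carrying both $\log^2 c_+'$ and the nonlocal constant $j_\sigma$. (4) Integrate in $x$ twice (fixing constants of integration using the $t \to 0$ matching region \eqref{logQexpansion}, or equivalently the known behavior as $y \to 0$) to pass from $u_\sigma$ to $\log Q_\sigma$; here $F_1 = \int\!\int a_0$-type primitive with $F_2 = F_1'$, the term $-C\log t$ collects the logarithmically divergent pieces as the band closes, and $F_3$ is the finite remainder. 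One should check the consistency relations $F_2 = F_1'$ and the stated small-$y$ expansions ($F_1(y)\sim y^3/12$, matching $-x^3/(12t)$) as a sanity test.

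I would expect the principal obstacle to be establishing the \emph{uniformity of the RHP analysis as $t \to 0^+$} for $x$ large but with $xt$ possibly small: in that corner the equilibrium problem degenerates (the band shrinks, $y = \pi^2 xt/c_+^2 \to 0$), and one must show that the local parametrices and the error estimate remain valid, interpolating between the ``finite-temperature'' regime $xt \gtrsim 1$ (where \eqref{eq:Qlowertail} lives) and the ``zero-temperature'' regime $xt \to 0$ (where \eqref{logQexpansion} lives). Technically this likely requires either a careful rescaling of the Airy parametrices near the closing endpoint or a separate matching argument gluing the RHP output to the Tracy--Widom/Hastings--McLeod asymptotics, and tracking that all error terms genuinely collapse into the claimed $\mathcal O(x^{-2})$ and $\mathcal O(1)$ bounds. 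A secondary technical point is the precise bookkeeping of the nonlocal quantity $j_\sigma$: one must verify that the integral defining it converges (which follows from Assumptions \ref{assumptions} parts 3 and 4 via \eqref{eq:vplusinfty}--\eqref{eq:vminusinfty}) and that it appears with exactly the stated coefficients in $a_2$, $F_3$, and $C$, which amounts to carefully computing the contribution of the slowly varying part of $\log F$ to the subleading terms of the $g$-function.
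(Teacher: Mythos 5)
Your high-level strategy matches the paper's: characterize $\log Q_\sigma$ via the $\Psi$-RHP from \cite{CaClR2020}, construct a $g$-function, run Deift--Zhou with a global parametrix and a local Airy parametrix at a single soft edge, and then integrate the differential identity for $\partial_x \log Q_\sigma$ and fix the integration constant by matching to the known $t\to 0$ expansion \eqref{logQexpansion}. You also correctly predict that $j_\sigma$ enters through the renormalized slowly-varying part of $\log F$. (One small deviation: the paper integrates \emph{once} from the differential identity \eqref{xdiffid} for $\partial_x\log Q_\sigma$, rather than twice from $u_\sigma$; that sidesteps the bookkeeping you describe.)

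The place where your plan misidentifies the central difficulty, and hence would likely stall, is your claim that as $t\to0^+$ with $xt$ small ``the equilibrium problem degenerates (the band shrinks, $y\to 0$).'' In the paper nothing degenerates, because the analysis is split into two overlapping regimes with \emph{different rescalings} chosen so that the relevant soft edge stays bounded away from zero. For $K\le x\le \delta/t$ one rescales $z=x^2\zeta$ and the edge is at $\alpha(xt)\to \pi^2/(4c_+^2)>0$ as $xt\to 0$ (Remark \ref{rmk:uniformity}); for $x\ge\delta/t$ one rescales $w=(t/x)z$ and the edge $a(x,t)$ is bounded away from zero by Proposition \ref{prop:endpointCC}. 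So no interpolating Painlev\'e~II/Hastings--McLeod parametrix and no shrinking-band local analysis are needed. Moreover, the $g$-function you sketch as an ``equilibrium measure'' is in the first regime really a scattering-type $g$-function associated to the piecewise-linear kink potential $V(\zeta)=c_+\zeta\,1_{(0,\infty)}(\zeta)$, built via a Cauchy integral on $(0,\alpha(xt))$ (see \eqref{gprime 1}, \eqref{a def}--\eqref{def:a}); getting this right, and proving the uniform sign conditions on $\phi$ (Proposition \ref{prop:lenses}) on the vertical lens boundaries, is where the genuinely new work is. Your proposal as written does not reach the key idea of how the uniformity down to $t=0$ is actually achieved.
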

\begin{remark}\label{remark:sigmaKPZ}
\begin{enumerate}
\item[{\rm 1.}]
The form of the above expansions was predicted
on a non-rigorous basis in \cite{KPZKPDoussal} by checking consistency with the cylindrical KdV equation, which is a reduction of the KP equation and equivalent to the KdV equation through a change of variables.
\item[{\rm 2.}] {
Using \eqref{eq:thm-F1F2}--\eqref{eq:thm-F3}, we obtain
\begin{multline*}
-\frac{c_+^6}{\pi^6t^4}F_1 \left(\frac{\pi^2}{c_+^2}Kt\right) - \frac{c_+^3\log c_+'}{\pi^4t^2}F_2  \left(\frac{\pi^2}{c_+^2}Kt\right) + F_3  \left(\frac{\pi^2}{c_+^2}Kt\right) \\
=  - \frac{K^3}{12t} + \left(\frac{2c_+ j_\sigma}\pi+\frac{\log^2c'_+}{2\pi^2}+\frac{1}{24}\right)\log t + \bigO(1), \qquad \mbox{as } t \to 0.
\end{multline*}
Substituting the above in \eqref{thm:asQ} yields $\log Q_\sigma(K,t) = -\frac{K^3}{12t}+\frac{1}{24}\log t +\mathcal O(1)$ as $t \to 0$, which is consistent with \eqref{logQexpansion}. 
}
\item[{\rm 3.}]
Recall the relation $\partial_x^2\log Q_\sigma(x,t)+\tfrac x{2t} = u_\sigma(x,t)$, see \eqref{def:u}.
Even though we have no control on the second derivative of the error term in the asymptotic expansion of $\log Q_\sigma$, it is instructive to formally substitute the asymptotics for $\log Q_\sigma$ and $u_\sigma$ as $x\to\infty$ with $t>0$ fixed into this relation (note that $\partial_x =\frac yx \partial_y$ with $y=\pi^2xt/c_+^2$), and to verify order by order whether the result is consistent.
We then observe that the leading and sub-leading order in the relation are consistent because of the identities
\[
-\left(\frac yx\right)^2 \frac{c_+^6}{\pi^6t^4}F_1''(y)+\frac x{2t} = \frac x{2t} a_0(y),\qquad
-\left(\frac yx\right)^2 \frac{c_+^3\log c_+'}{\pi^4t^2}F_2''(y) = \frac 1{2\sqrt{xt}} a_1(y).
\]
For the next order, we should have
\[
\left(\frac yx\right)^2F_3''(y) = \frac {t^{1/2}}{2x^{3/2}} a_2(y)+\mathcal O(x^{-2}),
\]
and this is indeed the case, by the identity
\[
\left(\frac yx\right)^2F_3''(y) - \frac {t^{1/2}}{2x^{3/2}} a_2(y) = \frac{y^2 \left(-3 y+2 \sqrt{y+1}-3\right)}{96 x^2 (y+1)^{5/2} \left(\sqrt{y+1}-1\right)^2},
\]
where we note that the right-hand side is a uniformly bounded function of $y\in[0,+\infty)$ times $x^{-2}$.
\item[{\rm 4.}]
For $xt$ bounded, the term $\frac{t^{1/2}}{2x^{3/2}}a_2\left(\frac{\pi^2}{c_+^2}xt\right)$ in the expansion \eqref{thm:asu} of $u_\sigma$ can simply be absorbed by the error term. However, as $xt\to\infty$, this term is of order $\bigO(\frac{(xt)^{1/2}}{x^{2}})$.
\item[{\rm 5.}]
For $\sigma=\sigma_{\mathrm{KPZ}}$, we have $c_+=c'_+=1$, $j_{\sigma_{\rm KPZ}}=-\frac\pi{12}$, $C=-\frac 16$, and the asymptotics \eqref{thm:asQ} become
\begin{align}
\log Q_{\sigma_{\rm KPZ}}(x,t) = & -\frac{F_1(\pi^2xt)}{\pi^6t^4}-\frac{\sqrt{1+\pi^2xt}}{6} - \frac{\log(1+\pi^{2} x t)}{48} \nonumber \\
& - \frac{\log \big( \sqrt{1+\pi^{2}xt}-1 \big)}{8} + \frac{\log t}{6} + \bigO(1), \label{KPZasymp}
\end{align}
uniformly for $x \geq K$ and $0<t \leq t_{0}$. This result  is more precise and valid for a wider sector of the $(x,t)$-half-plane than \eqref{eq:Qlowertail}.
Up to the coefficient of the logarithmic term in $t$, this expansion was predicted  in \cite[equation (50)]{KPZKPDoussal}. We thus confirm the result of \cite{KPZKPDoussal} rigorously with uniform error terms, and refine it by computing the logarithmic term.
\end{enumerate}
\end{remark}

\begin{figure}[htbp]
\centering
\hspace{1cm}
\begin{tikzpicture}[scale=1]

\draw[->] (-5,0) -- (8,0) node[right] {$x$};
\draw[-] (0,0) -- (0,0.6);
\draw[->](0,1.5) -- (0,4) node[above] {$t$};
\draw[domain=0:2.1, smooth, variable=\x, dashed] plot ({\x}, {.2*\x*\x*\x}) node[above] {${\scriptstyle x=Mt^{1/3}}$};
\draw[domain=0:2.65, smooth, variable=\x, dashed] plot ({-\x}, {.2*\x*\x*\x}) node[above] {${\scriptstyle x=-Mt^{1/3}}$};
\draw[-, dashed] (-2.54,3.3) -- (-5,3.3) node[left] {${\scriptstyle t=t_0}$};
\draw[-, dashed] (5,3.7) -- (5,0) node[below] {${\scriptstyle x=K}$};
\draw[-, dashed] (-1.98,1.55) -- (5,1.55) node[right] {${\scriptstyle t=\epsilon}$};

\node at (-4,1.6) {\Large{$\frac{x}{2t}$}};
\node at (-4,2.2) {singular};
\node at (0,.9) {$\frac{x}{2t}-\frac 1{t^{2/3}}y^2\left(-\frac x{t^{1/3}}\right)$};
\node at (0,1.35) {Painlev\'e II};
\node at (3.3,.5) {$v_\sigma\left(x\right)$};
\node at (3.4,1.1) {$\begin{array}{c}\mbox{integro-differential}\\\mbox{Painlev\'e V}\end{array}$};

\draw[domain=5:8, smooth, variable=\x, dashed] plot ({\x}, {6/\x}) node[above] {${\scriptstyle xt=\delta}$};
\draw[-, dashed] (8,3.3) -- (5,3.3) node[left] {${\scriptstyle t=t_0}$};
\node at (6.5,2.2) {\Large{$\frac{x}{2t}$}};
\node at (6.5,.4) {\Large{$\frac{\pi^2x^2}{8c_+^2}$}};

\end{tikzpicture}
\caption{Phase diagram showing the different leading asymptotics for $u_\sigma(x,t)$ when $t$ is small or $x$ is large (or both, uniformly in the indicated regions); sub-leading corrections are omitted and can be found in Theorem \ref{theorem:main} and \cite[Theorem 1.8]{CaClR2020}.
When $x$ is large and positive, the leading asymptotics are given by $\frac {x}{2t}a_0(\pi^2 x t/c_+^2)$.
Since $a_0(y)\sim y/4$ as $y\to 0$ and $a_0(y)\sim 1$ as $y\to+\infty$, this asymptotic behavior exhibits a phase transition in the critical regime when $x\to+\infty$ and $t\to 0_+$ in such a way that $xt$ is bounded away from zero and infinity.
Outside of this critical regime we have the two asymptotic behaviors indicated in the diagram.
It is worth mentioning that the leading behavior $x/(2t)$, emerging when $x\to\pm\infty$, is itself a KdV solution.}
\label{figure: phase diagram}
\end{figure}
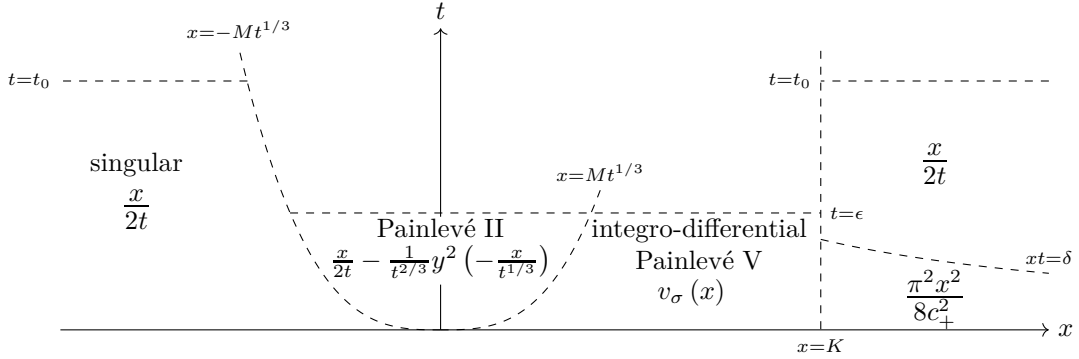

The starting point of our analysis is a characterization of $\log Q_\sigma(x,t)$ in terms of a Riemann--Hilbert (RH) problem, obtained in \cite{CaClR2020} by relying on the theory of integrable operators developed by Its, Izergin, Korepin, and Slavnov \cite{IIKS}: from \cite[(3.14) and the second equation after (5.3)]{CaClR2020}, we know that 
\begin{align}\label{xdiffid}
& \partial_{x} \log Q_{\sigma}(x,t) = p(x,t)-\frac{x^2}{4t}, 
\end{align}
and that
\begin{equation}\label{eq:upq}
u_\sigma(x,t)=\partial_x p(x,t)=-2q(x,t)-p(x,t)^2,
\end{equation}
where $p(x,t), q(x,t)$, which depend also on $\sigma$, are defined in terms of the unique solution $\Psi(z;x,t)$ to the following RH problem.
\subsubsection*{RH problem for $\Psi$}
\begin{itemize}
\item[(a)] $\Psi : \mathbb{C}\setminus \mathbb{R} \to \mathbb{C}^{2\times 2}$ is analytic. 
\item[(b)] The boundary values of $\Psi$ are continuous on $\mathbb{R}$ and are related by
\begin{align*}
\Psi_{+}(z;x,t) = \Psi_{-}(z;x,t) \begin{pmatrix}
1 & 1-\sigma(z) \\
0 & 1
\end{pmatrix}, \qquad z \in \mathbb{R}.
\end{align*}
The subscript $+$ (resp. $-$) indicates the boundary value from above (resp. below) the real axis.
\item[(c)] As $z \to \infty$, we have
\begin{align*}
\Psi(z;x,t) = \bigg( I + \frac{\Psi_{1}(x,t)}{z} + \bigO(z^{-2}) \bigg) z^{\frac{1}{4}\sigma_{3}}A^{-1}e^{(-\frac{2}{3}tz^{\frac 32}+xz^{\frac 12})\sigma_{3}} \begin{cases}
I, & |\arg z | < \pi -\delta, \\
\begin{pmatrix}
1 & 0 \\
\mp 1 & 1
\end{pmatrix}, & \pi - \delta < \pm \arg z < \pi,
\end{cases}
\end{align*}
for any $\delta \in (0,\frac{\pi}{2})$, where {we take the principal branches of $z^{\frac 14\sigma_3}${, $z^{\frac{3}{2}}$} and $z^{\frac 12}$, analytic in $\C\setminus[0,+\infty)$ and positive for $z>0$, and}
\begin{align}
\label{eq:A}
\Psi_1(x,t) = \begin{pmatrix}
q(x,t) & -i r(x,t) \\
i p(x,t) & -q(x,t)
\end{pmatrix}, \qquad \sigma_{3} = \begin{pmatrix}
1 & 0 \\
0 & -1
\end{pmatrix}, \qquad A = \frac{1}{\sqrt{2}} \begin{pmatrix}
1 & i \\ i & 1
\end{pmatrix},
\end{align}
\end{itemize}
Our main task will be to derive asymptotics for the RH solution $\Psi(z;x,t)$ as $x\to +\infty$, uniform in $t\leq t_0$ for any $t_0>0$ and uniform for large $z$; this is achieved via the Deift--Zhou nonlinear steepest descent method \cite{DZ}.
The large $z$ asymptotics in condition (c) of the RH problem for $\Psi$ will allow us to derive asymptotics for $\log Q_\sigma(x,t)$ and $u_\sigma(x,t)$ using the identities \eqref{xdiffid} and \eqref{eq:upq}.

\medskip

In order to obtain asymptotics for $\Psi$, we will need to consider two asymptotic regimes separately.
First, we will consider $K\leq x\leq \delta/t$ for a sufficiently large $K>0$ and a sufficiently small $\delta>0$. The asymptotic RH analysis in this regime requires novel ideas, in particular the manner in which the jump is decomposed into an
explicit part taken care of by the $g$-function and a $\sigma$-dependent
part taken care of by the global parametrix, and is the most challenging part of the asymptotic analysis; 
this is the subject of Section \ref{section:3}.
Secondly, we will consider the regime $x\geq \delta/t$, and in this part of the analysis $\delta>0$ will be arbitrary. Here, the asymptotic analysis is a generalization of the analysis done in \cite{CaCl2019} for $\sigma=\sigma_{\rm KPZ}$, which we also refine in order to cover general $\sigma$ and in order to compute all growing terms in the asymptotic expansions; this is done in Section \ref{section:4}. Finally, the results for the two regimes are combined in Section \ref{finalsec}, where the proof of Theorem \ref{theorem:main} is completed.

\section{Lower tail of the KPZ solution with narrow wedge initial data}\label{section:KPZ}

The KPZ equation was introduced by Kardar, Parisi, and Zhang in 1986 \cite{KPZ} as a universal model for interface growth, and is believed to model a variety of real-life phenomena such as bacterial colony growth, see e.g.\ the review article \cite{HHT}.
It is a stochastic PDE, given by
\begin{equation}
\label{eq:KPZ}
\partial_T\mathcal H(T,X)=\frac{1}{2}\partial_X^2\mathcal H(T,X)+\frac{1}{2}(\partial_X\mathcal H(T,X))^2+\xi(T,X),
\end{equation}
where the first term in the right-hand side stands for relaxation, the second term for non-linear slope-dependent growth, and the random forcing term $\xi(T,X)$ is space-time white noise. Mathematical sense can be given to this equation through the Hopf--Cole transformation $\mathcal H(T,X)=\log Z(T,X)$, which transforms the KPZ equation to the stochastic heat equation 
\[\partial_T Z(T,X)=\frac{1}{2}\partial_X^2Z(T,X)+Z(T,X)\xi(T,X),\]
a stochastic PDE for which the solution theory is classical. 
A more general notion of solution was constructed by Hairer \cite{Hairer}.
The narrow wedge solution of KPZ is one of the physically relevant solutions, characterized by Dirac initial data of the Hopf--Cole transform
$Z(0,X)=\delta_{X=0}.$ With this initial condition, typical KPZ solutions behave roughly like a narrowing parabola $\frac{X^2}{2T}$ as $T\to 0$.
We refer to \cite{Corwin} for more details about this solution and to \cite{Quastel} for a general discussion on Hopf--Cole solutions of KPZ.

\medskip

In what follows, $\mathcal H(T,X)$ denotes the Hopf--Cole solution of KPZ equation with narrow wedge initial condition. Amir, Corwin, and Quastel \cite{AmirCorwinQuastel} proved in 2010 that the probability distribution of $\mathcal H(T,X)$ can be characterized by the Fredholm determinant $Q_{\sigma_{\rm KPZ}}$. Such a characterization was obtained independently, but without rigorous justification of certain steps, by Sasamoto and Spohn \cite{SasamotoSpohn}, and predicted also by Dotsenko \cite{Dotsenko} and Calabrese, Le Doussal, and Rosso \cite{CLDR}.
Let us now describe this characterization.

\medskip
The probability distribution of $\mathcal H(T,X)-\frac{X^2}{2T}$ is independent of $X$  \cite[Theorem 1.1]{Corwin}), hence we can restrict ourselves to studying the probability distribution of $\mathcal H(T,0)$, or to the distribution of the suitably re-scaled random variable
\begin{equation}\label{Upsilon}\Upsilon_T=\frac{\mathcal H(2T,0)+\frac{T}{12}}{T^{1/3}}.\end{equation}
Borodin and Gorin \cite[Theorem 2.1]{BorodinGorin} proved that the results from \cite{AmirCorwinQuastel} imply the identity
\begin{equation}\label{eq:BorodinGorin}
\mathbb E\left[{\rm e}^{-{\rm e}^{T^{1/3}(\Upsilon_T+s)}}\right]=Q_{\sigma_{\rm KPZ}}\left(x=sT^{-1/6},t=T^{-1/2}\right),\end{equation} 
or, in other words, $Q_{\sigma_{\mathrm{KPZ}}}(x,t)$ is a Laplace-type transform of the solution to the stochastic heat equation.

\medskip

Corwin and Ghosal \cite[Theorem 1.1]{CorwinGhosal} obtained robust upper and lower bounds for the lower tail of the probability distribution of $\Upsilon_{T}$: given $\epsilon, \delta\in(0,\frac{1}{3})$ and $T_0>0$, there exist $S = S(\epsilon, \delta, T_0)$, $C=C(T_0)>0$, $K_1=K_1(\epsilon,\delta, T_0)>0$, and $K_2=K_2(T_0)>0$ such that for all $s\geq S$ and $T\geq T_0$,
\begin{align}\label{eq:MainUpperBound}
\mathbb{P}\big(\Upsilon_T\leq -s\big)\leq  e^{-\frac{4(1-C\epsilon) }{15\pi}T^{\frac{1}{3}} s^{5/2}} + e^{-K_1s^{3-\delta}-\epsilon T^{1/3}s} + e^{-\frac{(1-C\epsilon)}{12}s^3},
\end{align}
and
\begin{equation}\label{eq:MainLowerBound}
\mathbb{P}\big(\Upsilon_T\leq -s\big) \geq e^{-\frac{4(1+C\epsilon)}{15\pi}T^{\frac{1}{3}}s^{5/2}}+  e^{-K_2 s^{3}}.
\end{equation}
For different values of $s$ and $T$, different terms at the right can dominate the others.
In \cite{CaCl2019}, these bounds were refined in the region where $M^{-1}\leq T\leq Ms^{3/2}$, for any $M>0$: then for any $\epsilon>0$,
\[B(s+(3+\epsilon)T^{-1/3}\log s,T)\leq \log \mathbb P(\Upsilon_T<-s)\leq B(s,T),\]
with
\begin{equation}\label{eq:KPZlowertailCC}
B(s,T)=-\frac{T^{2}}{\pi^6}F_1\bigg(\frac{\pi^2s}{T^{2/3}}\bigg) {-\frac{1}6}\sqrt{1+\frac{\pi^2s}{T^{2/3}}}
+\mathcal O(\log^2 s)+\mathcal O(T^{1/3}),\qquad\mbox{as $s\to +\infty$,}
\end{equation}
where $F_1$ is defined in \eqref{eq:thm-F1F2}.

We extend these estimates uniformly to $s\geq s_0,T\geq T_0$ for any $s_0,T_0>0$. To this end, we first obtain the asymptotic properties of $\log Q_{\sigma_{\rm KPZ}}$ in this regime, see also Figure \ref{figure: Q phase diagram}.

\begin{theorem}
\label{thm:KPZ}
We have, uniformly for $s\geq s_0,T\geq T_0$,
\begin{equation}
\label{eq:asympQsigmaKPZG}
\log Q_{\sigma_{\rm KPZ}}(x=sT^{-1/6},t=T^{-1/2}) = -G(s,T) +\bigO(1),
\end{equation}
where
\begin{equation}
\label{eq:functionG}
G(s,T) := \frac{T^2}{\pi^6}F_1\bigg(\frac{\pi^2s}{T^{2/3}}\bigg)+\frac 16\sqrt{1+\frac{\pi^2s}{T^{2/3}}}
+\frac 1{48}\log\left(1+\frac{\pi^2s}{T^{2/3}}\right)+\frac 18\log\left(\sqrt{1+\frac{\pi^2s}{T^{2/3}}}-1\right)+\frac 1{12}\log T,
\end{equation}
with $F_1$ defined in \eqref{eq:thm-F1F2}.
\end{theorem}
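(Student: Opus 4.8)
The plan is to read off Theorem~\ref{thm:KPZ} from Theorem~\ref{theorem:main} specialized to $\sigma=\sigma_{\rm KPZ}$, i.e.\ from \eqref{KPZasymp}, after the change of variables $x=sT^{-1/6}$, $t=T^{-1/2}$, and to cover the remaining small-$x$ part of the parameter range with the $t\to 0$ asymptotics of \cite[Theorem~1.14]{CaClR2020}. The change of variables gives $T=t^{-2}$, $s=xt^{-1/3}$ and $\pi^2 s/T^{2/3}=\pi^2 xt=:y$ (the $c_+=1$ instance of the manipulation behind \eqref{thm:asQ}), so the region $\{s\geq s_0,\ T\geq T_0\}$ becomes $\{x\geq s_0 t^{1/3},\ 0<t\leq t_0\}$ with $t_0:=T_0^{-1/2}$, and $G(s,T)$ becomes
\[
\widetilde G(x,t):=\frac{F_1(y)}{\pi^6 t^4}+\frac{\sqrt{1+y}}{6}+\frac{\log(1+y)}{48}+\frac{\log(\sqrt{1+y}-1)}{8}-\frac{\log t}{6}.
\]
Thus it suffices to prove $\log Q_{\sigma_{\rm KPZ}}(x,t)=-\widetilde G(x,t)+\bigO(1)$, uniformly for $x\geq s_0 t^{1/3}$ and $0<t\leq t_0$.

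I would then split this region at $x=K$, where $K=K(t_0)$ is the threshold from Theorem~\ref{theorem:main}. For $x\geq K$ the expansion \eqref{KPZasymp} is, term by term, exactly $\log Q_{\sigma_{\rm KPZ}}(x,t)=-\widetilde G(x,t)+\bigO(1)$, so nothing remains there. On the strip $s_0 t^{1/3}\leq x\leq K$ the variable $y=\pi^2 xt$ lies in a fixed bounded interval, and the statement reduces to the pair of \emph{elementary} estimates $\widetilde G(x,t)=\tfrac{x^3}{12t}+\tfrac18\log(xt^{-1/3})+\bigO(1)$ and $\log Q_{\sigma_{\rm KPZ}}(x,t)=-\tfrac{x^3}{12t}-\tfrac18\log(xt^{-1/3})+\bigO(1)$ on that strip. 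The first holds by inspection: $F_1(y)-\tfrac1{12}y^3$ is analytic at $y=0$ and vanishes to order four, hence is $\bigO(y^4)$ on the compact $y$-range, so $\tfrac{F_1(y)}{\pi^6 t^4}=\tfrac{x^3}{12t}+\bigO(x^4)=\tfrac{x^3}{12t}+\bigO(1)$; the $\sqrt{1+y}$ and $\log(1+y)$ terms are $\bigO(1)$; and, using $(\sqrt{1+y}-1)/y=(\sqrt{1+y}+1)^{-1}$, the last two terms of $\widetilde G$ combine to $\tfrac18\log y-\tfrac16\log t+\bigO(1)=\tfrac18\log(xt^{-1/3})+\bigO(1)$.

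The second estimate is where \cite[Theorem~1.14]{CaClR2020} is invoked. For $0<t<\epsilon$ (with $\epsilon,M$ the constants there, the free constant in part~(iii) taken to be $K$ and the free constant in part~(ii) taken to be $M$): when $x\geq M t^{1/3}$ this is part~(iii), since the constants $\tfrac{\log 2}{24}+\zeta'(-1)$, the term $\int_0^x(x-\xi)\big(v_{\sigma_{\rm KPZ}}(\xi)-\tfrac1{8\xi^2}\big)d\xi=\bigO(x^2)=\bigO(1)$ (by \eqref{as:v}) and the error $\bigO(x^{-1}t^{1/3})=\bigO(1)$ (as $xt^{-1/3}\geq M$) are all absorbed into $\bigO(1)$; when $s_0 t^{1/3}\leq x\leq M t^{1/3}$ (a range that is empty unless $s_0\leq M$) this is part~(ii), where $xt^{-1/3}\in[s_0,M]$ is bounded away from $0$ and $\infty$, so $\log F_{\rm TW}(-xt^{-1/3})$, $\tfrac{x^3}{12t}$ and $\log(xt^{-1/3})$ are each $\bigO(1)$. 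For $\epsilon\leq t\leq t_0$ (if nonempty) the image of the strip in the $(s,T)$-plane is a compact subset of $\{s>0\}$, on which $\log Q_{\sigma_{\rm KPZ}}$ is continuous and positive and all the explicit quantities above are continuous, hence bounded. Combining the two estimates finishes the strip.

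The only point that needs care is the bookkeeping of ranges: Theorem~\ref{theorem:main} controls $x\geq K$ only, while the KPZ scaling $x=sT^{-1/6}$ forces $x\to 0$ as $T\to\infty$ at fixed $s$, so the small-$x$ input of \cite{CaClR2020} is genuinely needed; one then has to observe that the logarithmic contributions of $\widetilde G$ and of \eqref{logQexpansion} recombine into the common $\bigO(1)$-equivalent expression $-\tfrac{x^3}{12t}-\tfrac18\log(xt^{-1/3})$ uniformly across the transition strip. I do not expect a genuine obstacle: the analytic substance is entirely contained in Theorem~\ref{theorem:main}, and what remains is careful matching of parameter ranges together with the routine asymptotics of the explicit function $\widetilde G$.
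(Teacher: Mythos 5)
Your proof is correct and mirrors the paper's own argument: specialize Theorem \ref{theorem:main} to $\sigma=\sigma_{\rm KPZ}$ for $x\geq K$, cover $s_0\leq s\leq KT^{1/6}$ via \cite[Theorem~1.14]{CaClR2020}, and match the two via the elementary expansion $G(s,T)=\tfrac{s^3}{12}+\tfrac18\log s+\bigO(1)$ on the overlap. Your version is in fact slightly more careful than the paper's, which invokes \cite[Theorem~1.14~(iii)]{CaClR2020} without addressing the subranges $s_0t^{1/3}\leq x\leq Mt^{1/3}$ (where part~(ii) is needed if $s_0<M$) or $\epsilon\leq t\leq t_0$ (handled by compactness and continuity), both of which you treat explicitly.
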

\begin{proof}
The asymptotics \eqref{KPZasymp} imply
\begin{equation}
\label{eq:generalasympQ}
\log Q_{\sigma_{\rm KPZ}}(x=sT^{-1/6},t=T^{-1/2}) = -G(s,T) +\bigO(1),
\end{equation}
uniformly for $s\geq K T^{1/6}, T\geq T_0$.
The remaining region $s_0\leq s \leq KT^{1/6}$, $T\geq T_0$ is covered precisely by \cite[Theorem 1.14 (iii)]{CaClR2020}, see also \eqref{logQexpansion}; in terms of the variables $s,T$ we obtain
\begin{equation}
\log Q_{\sigma_{\rm KPZ}}(x=sT^{-1/6},t=T^{-1/2}) = -\frac {s^3}{12}-\frac 18\log s+\bigO(1),
\end{equation}
uniformly in $s_0\leq s\leq KT^{1/6}$, $T\geq T_0$.
We complete the proof by observing that
\begin{equation}
G(s,T) = \frac {s^3}{12}+\frac 18\log s+\bigO(1)
\end{equation}
uniformly in the same regime.
\end{proof}

\begin{figure}[htbp]
\centering
\hspace{1cm}
\begin{tikzpicture}[scale=1]

\draw[->] (0,0) -- (4,0) node[right] {$s$};
\draw[->](0,0) -- (0,4) node[left] {$T$};
\draw[domain=0.91:1.56, smooth, variable=\w, dashed] plot (1.2*\w*\w,\w*\w*\w) node[right] {$\begin{matrix} {\scriptstyle s=yT^{2/3}:}\\ \frac{T^2F_1(\pi^{2}y)}{\pi^6}\end{matrix}$};
\node at (2,3.5) {$\begin{matrix}{\scriptstyle s=o(T^{2/3}):}\\ \frac{s^3}{12}\end{matrix}$};
\node at (3.5,1.5) {$\begin{matrix}{\scriptstyle T=o(s^{3/2}):}\\ \frac{4s^{5/2}T^{1/3}}{15\pi}\end{matrix}$};

\draw[dashed] (1,3.8)--(1,0) node[below]{$s_0$};
\draw[dashed] (3.8,.6)--(0,.6) node[left]{$T_0$};

\end{tikzpicture}
\caption{Phase diagram showing the different leading asymptotics for $-\log Q_{\sigma_{\rm KPZ}}(sT^{-1/6},T^{-1/2})$ when $s,T\to+\infty$, uniform for $s\geq s_0,T\geq T_0$. This is given by the leading behavior for large $s,T$ of the function $G(s,T)$ in \eqref{eq:functionG}. A phase transition occurs in the critical regime when $s,T\to+\infty$ such that $sT^{-2/3}$ remains bounded away from zero and infinity.}
\label{figure: Q phase diagram}
\end{figure}
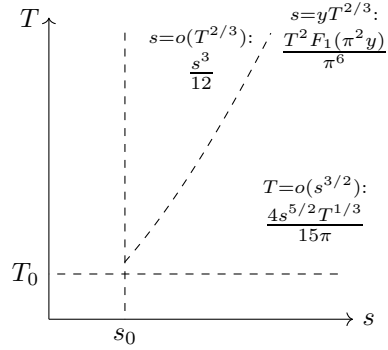

The following result is a direct consequence of Theorem \ref{thm:KPZ}, whose proof is almost identical to that in \cite[Section 3.1]{AmirCorwinQuastel} or \cite[Corollary 1.2]{CaCl2019}.

\begin{corollary}\label{corollary:KPZ}
Let $\Upsilon_T$ be as defined in \eqref{Upsilon} and $G(s,T)$ be as defined in \eqref{eq:functionG}.
\begin{enumerate}
\item[{\rm 1.}] For any $s_0,T_0>0$, there exists a real constant $D_+=D_+(s_0,T_0)$ such that the inequality
\begin{equation}
\label{eq:upperboundp}
\log \mathbb P(\Upsilon_T<-s)\leq p-G(s+T^{-1/3}\log p,T)+D_+
\end{equation}
holds for all $s\geq s_0,T\geq T_0$, and $p\geq 1$.
\item[{\rm 2.}] For any $\epsilon>0$ and for any $s_0,T_0>0$ sufficiently large, there exists a real constant $D_-=D_-(s_0,T_0)$ such that the inequality
\begin{equation}
\label{eq:lowerboundp}
\log \mathbb P(\Upsilon_T<-s)\geq -G(s+T^{-1/3}\log(s^{3+\epsilon}+T^\epsilon),T)+D_-
\end{equation}
holds for all $s\geq s_0,T\geq T_0$.
\end{enumerate}
\end{corollary}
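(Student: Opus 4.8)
The plan is to derive both inequalities from the identity \eqref{eq:BorodinGorin}, which expresses $Q_{\sigma_{\rm KPZ}}(x=sT^{-1/6},t=T^{-1/2})$ as the Laplace-type transform $\mathbb E[\exp(-\exp(T^{1/3}(\Upsilon_T+s)))]$, combined with the sharp asymptotics $\log Q_{\sigma_{\rm KPZ}}(x=sT^{-1/6},t=T^{-1/2})=-G(s,T)+\bigO(1)$ from Theorem \ref{thm:KPZ}. The argument is a standard ``exponential Markov/Chebyshev'' sandwich, exactly as in \cite[Section 3.1]{AmirCorwinQuastel} and \cite[Corollary 1.2]{CaCl2019}; the only novelty is that Theorem \ref{thm:KPZ} now supplies the needed estimate uniformly on the whole region $s\ge s_0$, $T\ge T_0$, so the same reasoning goes through with uniform constants.

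For the upper bound, I would write, for any $p\ge 1$,
\[
\mathbb P(\Upsilon_T<-s)\le \frac{1}{1-e^{-e^{0}}}\,\mathbb E\!\left[\exp\!\big(p-e^{e^{T^{1/3}(\Upsilon_T+s)}}\big)\,;\,\Upsilon_T<-s\right]
\]
— more precisely, on the event $\{\Upsilon_T<-s\}$ one has $e^{T^{1/3}(\Upsilon_T+s)}<1$, hence $p-e^{e^{T^{1/3}(\Upsilon_T+s)}}\ge p-e$, so $\mathbb P(\Upsilon_T<-s)\le e^{e-p}\,\mathbb E[\exp(p-e^{e^{T^{1/3}(\Upsilon_T+s)}})]$. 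Now shift the tail variable: replace $s$ by $s+T^{-1/3}\log p$ inside the transform, so that $\mathbb E[\exp(-e^{e^{T^{1/3}(\Upsilon_T+s+T^{-1/3}\log p)}})]=\mathbb E[\exp(-e^{p\,e^{T^{1/3}(\Upsilon_T+s)}})]\ge \mathbb E[\exp(p-p\,e^{T^{1/3}(\Upsilon_T+s)})]\ge \dots$; choosing the comparison point appropriately (this is the place where one uses $1+\log x\le x$ type convexity bounds, as in the references) yields
\[
\mathbb P(\Upsilon_T<-s)\le e^{C_1}\,e^{p}\,Q_{\sigma_{\rm KPZ}}\big(x=(s+T^{-1/3}\log p)T^{-1/6},\,t=T^{-1/2}\big)
= e^{C_1+p}\,e^{-G(s+T^{-1/3}\log p,\,T)+\bigO(1)},
\]
using Theorem \ref{thm:KPZ}. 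Taking logarithms gives \eqref{eq:upperboundp} with $D_+$ absorbing all the $\bigO(1)$ and additive constants, uniformly for $s\ge s_0$, $T\ge T_0$, $p\ge 1$.

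For the lower bound, I would instead bound the transform from above by splitting the expectation according to whether $\Upsilon_T<-s'$ or $\Upsilon_T\ge -s'$ for a shifted threshold $s'=s+T^{-1/3}\log(s^{3+\epsilon}+T^\epsilon)$: on the first event $\exp(-e^{e^{T^{1/3}(\Upsilon_T+s)}})\le 1$ contributes at most $\mathbb P(\Upsilon_T<-s')$ times a constant, while on the complement $e^{T^{1/3}(\Upsilon_T+s)}\ge e^{T^{1/3}(s-s')}=(s^{3+\epsilon}+T^\epsilon)^{-1}$ is bounded below, making $\exp(-e^{e^{T^{1/3}(\Upsilon_T+s)}})$ exponentially small — small enough that, after using the robust a priori upper tail bound for $\Upsilon_T$ (which follows e.g.\ from \eqref{eq:MainUpperBound} applied with the roles of the tails, or from the crude estimates already available in \cite{CorwinGhosal, CaCl2019}), this second contribution is negligible compared to the first. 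This yields $Q_{\sigma_{\rm KPZ}}(x=sT^{-1/6},t=T^{-1/2})\le e^{C_2}\,\mathbb P(\Upsilon_T<-s')$, and inserting Theorem \ref{thm:KPZ} and rearranging gives \eqref{eq:lowerboundp}. The requirement that $s_0,T_0$ be ``sufficiently large'' (and $\epsilon$-dependent) enters exactly here: it guarantees that the shift $s'-s$ is large enough that the second contribution is genuinely dominated, which is where the choice of shift $\log(s^{3+\epsilon}+T^\epsilon)$ rather than a smaller one is needed.

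The main obstacle is not conceptual but bookkeeping: one must verify that every constant produced along the way (the comparison-point convexity constants, the constant from the a priori upper tail estimate, the $\bigO(1)$ in Theorem \ref{thm:KPZ}) is genuinely uniform over $s\ge s_0$, $T\ge T_0$ and, for part 1, over $p\ge 1$ as well; and in part 2 one must check that the tail-splitting threshold $s'$ lies in the validity region of Theorem \ref{thm:KPZ} (i.e.\ $s'\ge s_0$), which is automatic since $s'\ge s\ge s_0$. Since the computation is essentially identical to \cite[Corollary 1.2]{CaCl2019} with $G$ in place of $-B$, I would keep the write-up short and refer to those references for the routine estimates.
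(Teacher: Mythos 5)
Your high-level plan (exponential Chebyshev plus a shift of $s$, as in Amir--Corwin--Quastel and Cafasso--Claeys) is exactly the route taken in the paper, but the execution contains a substantive error. Throughout you write the transform as $\exp(-e^{e^{T^{1/3}(\Upsilon_T+s)}})$, a triple exponential, whereas \eqref{eq:BorodinGorin} gives the double exponential $\Omega(s,T):=\exp(-e^{T^{1/3}(\Upsilon_T+s)})$. This is not cosmetic: the whole argument rests on the exact algebraic identity
\[
\Omega(s,T)^p \;=\; e^{-p\,e^{T^{1/3}(\Upsilon_T+s)}} \;=\; e^{-e^{T^{1/3}(\Upsilon_T+s+T^{-1/3}\log p)}} \;=\; \Omega(s+T^{-1/3}\log p,\,T),
\]
which converts the $p$th power into a shift of $s$ with no approximation whatsoever. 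With your triple-exponential version this identity fails, and the compensating inequality you invoke, $\mathbb E[\exp(-e^{p\,e^{T^{1/3}(\Upsilon_T+s)}})]\ge\mathbb E[\exp(p-p\,e^{T^{1/3}(\Upsilon_T+s)})]$, would need $e^{py}\le p(e^y-1)$ pointwise, which already fails at $p=1$. Once the double exponential and the identity above are in place, part 1 needs nothing beyond Markov: on $\{\Upsilon_T\le -s\}$ one has $\Omega(s,T)^p\ge e^{-p}$, so $\mathbb P(\Upsilon_T\le -s)\le e^p\,\mathbb E[\Omega(s,T)^p]=e^p\,Q_{\sigma_{\rm KPZ}}\big((s+T^{-1/3}\log p)T^{-1/6},T^{-1/2}\big)$, and Theorem \ref{thm:KPZ} finishes.

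Part 2 has a further structural gap. You split the un-raised transform according to $\{\Upsilon_T<-s'\}$ and its complement and claim the integrand on the complement is exponentially small; it is not. On $\{\Upsilon_T\ge -s'\}$ one only gets $e^{T^{1/3}(\Upsilon_T+s)}\ge(s^{3+\epsilon}+T^\epsilon)^{-1}\to 0$, so $\Omega(s,T)\to 1$ there: no decay at all, and the bound does not close even after your appeal to an a priori upper-tail estimate (which is in fact unnecessary). What makes the complement piece small is again the power. The paper splits $\mathbb E[\Omega(s,T)^q]$ with $q=s^{3+\epsilon}+T^\epsilon$ according to $\{\Upsilon_T+s<0\}$ and its complement; on the latter $e^{T^{1/3}(\Upsilon_T+s)}\ge 1$, hence $\Omega(s,T)^q\le e^{-q}$, which is genuinely negligible, while on the former $\Omega\le 1$ gives at most $\mathbb P(\Upsilon_T+s<0)$. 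Converting the $q$th power to a shift via the identity then yields
\[
\mathbb P(\Upsilon_T<-s)\;\ge\; Q_{\sigma_{\rm KPZ}}\big((s+T^{-1/3}\log q)T^{-1/6},T^{-1/2}\big)-e^{-q},
\]
and the hypothesis that $s_0,T_0$ be large serves only to make $e^{-q}$ negligible relative to the first term. With the correct $\Omega$ and this identity in hand both parts are short, exact computations, and the vague convexity and a priori-bound steps in your sketch become unnecessary.
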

\begin{proof}
Let $\Omega(s,T):=e^{-e^{T^{1/3}(\Upsilon_T+s)}}$ so that $\mathbb E[\Omega(s,T)] = Q_{\sigma_{\rm KPZ}}(sT^{-1/6},T^{-1/2})$ by \eqref{eq:BorodinGorin}. Moreover, for all $p>0$ we have
\[
\Omega(s,T)^p = e^{-pe^{T^{1/3}(\Upsilon_T+s)}} = e^{-e^{T^{1/3}(\Upsilon_T+s+T^{-1/3}\log p)}} = \Omega(s+T^{-1/3}\log p,T).
\]
\begin{enumerate}
\item Markov's inequality implies (for any $p>0$)
\[\mathbb P(\Upsilon_T\leq -s)=\mathbb P(\Omega(s,T)^p\geq e^{-p})\leq e^p \,\mathbb E[\Omega(s,T)^p]=e^p\, Q_{\sigma_{\rm KPZ}}\left((s+T^{-1/3}\log p)T^{-1/6},T^{-1/2}\right)\]
and taking logarithms it follows that 
\[
\log \mathbb P(\Upsilon_T\leq -s)\leq p+\log Q_{\sigma_{\rm KPZ}}\left((s+T^{-1/3}\log p)T^{-1/6},T^{-1/2}\right).
\]
Then it suffices to use \eqref{eq:asympQsigmaKPZG}.

\item For any $q>0$ we have
\[
\mathbb E\left[\Omega(s,T)^q\right]
=\mathbb E\left[1_{\{\Upsilon_T+s<0\}}\Omega(s,T)^q\right]+\mathbb E\left[1_{\{\Upsilon_T+s\geq 0\}}\Omega(s,T)^q\right]
\leq \mathbb P(\Upsilon_T+s<0)+e^{-q}
\]
where in the last step we used that $\Omega(s,T)\leq 1$ and $1_{\{\Upsilon_T+s\geq 0\}}\Omega(s,T)^q\leq e^{-q}$.
Therefore
\[
\mathbb P(\Upsilon_T+s<0) \geq Q_{\sigma_{\rm KPZ}}\left((s+T^{-1/3}\log q)T^{-1/6},T^{-1/2}\right)-e^{-q}.
\]
On the other hand, the uniform asymptotics \eqref{eq:asympQsigmaKPZG} imply that if $q = s^{3+\epsilon}+T^\epsilon$ (for any $\epsilon>0$) then $e^{-q}$ is negligible with respect to the other term.
Therefore, for this choice of $q$ and $s_0,T_0$ sufficiently large, we have
\[
Q_{\sigma_{\rm KPZ}}\left((s+T^{-1/3}\log q)T^{-1/6},T^{-1/2}\right)-e^{-q}\geq \frac 12 Q_{\sigma_{\rm KPZ}}\left((s+T^{-1/3}\log q)T^{-1/6},T^{-1/2}\right)
\]
for all $s\geq s_0,T\geq T_0$.
The proof is completed by combining the last two inequalities, taking logarithms, and using \eqref{eq:asympQsigmaKPZG}.
\end{enumerate}
\end{proof}

\begin{remark}
The bounds of Corollary \ref{corollary:KPZ} imply the following estimates in various regimes.
\begin{itemize}
\item[{\rm 1.}]
When $s=yT^{2/3}$ and $T$ sufficiently large with $y$ in any compact subset of $(0,+\infty)$, we set $p=1$ in the above result, and use
the expressions \eqref{eq:thm-F1F2} for $F_1$ and \eqref{eq:functionG} for $G$. After a straightforward computation, we then obtain that for any compact $K\subset (0,+\infty)$, there exist $s_0,T_0,C>0$ such that for $s\geq s_0, T\geq T_0$ and $y=sT^{-2/3}\in K$, we have
\[
-CT\log T
\leq \log\mathbb P(\Upsilon_T<-yT^{2/3})+\frac{T^2F_1(\pi^2 y)}{\pi^6}\leq \frac{1}{12}\log T+C.
\]
This confirms the large deviation function which was predicted using different approaches in \cite{SasorovMeersonProlhac,KLDP18,CGKLDT,KLD2} and  rigorously proved in \cite{Tsai}; our approach provides an alternative proof via the RH method and allows to quantify the error term.
\item[{\rm 2.}]
When  $s\to+\infty$ and $sT^{-2/3}\to +\infty$ (in particular when $s\to +\infty$ with $T>0$ fixed), we obtain a sharper bound by choosing $p = s^{3/2}$. We then get
\begin{align*}
\log \mathbb P(\Upsilon_{T}<-s)\leq & -\frac{4 s^{\frac 52}T^{\frac 13}}{15 \pi }+\frac{s^2 T^{\frac 23}}{2 \pi ^2}-\frac{2s^{\frac 32}T}{3\pi^3}
+\frac{2sT^{\frac 43}}{3\pi^4}
+\frac{s^{\frac 12}T^{\frac 53}}{2\pi^5}
 \\
 & +\frac{4T^2}{15\pi^6}
-\frac{s^{\frac 32}\log s}{\pi} + \bigO(T^{7/3}s^{-1/2} + s^{3/2} + T^{1/3}s \log s).
\end{align*}
Combining this with the lower bound \eqref{eq:lowerboundp}, we obtain that for any $T_0>0$, there exist $C,s_0,y_0>0$ such that for  $s\geq s_0, T\geq T_0, sT^{-2/3}\geq y_0$, we have the upper and lower bounds
\begin{align*}
& -C(T^{7/3}s^{-1/2} + T^{1/3}s \log s) \leq \bigg( \log \mathbb P(\Upsilon_{T}<-s)+\frac {4s^{\frac 52}T^{\frac 13}}{15\pi}-\frac{s^2T^{\frac 23}}{2\pi^2}+\frac{2s^{\frac 32}T}{3\pi^3} \\
& \qquad -
\frac{2sT^{\frac 43}}{3\pi^4} + \frac{s^{\frac 12}T^{\frac 53}}{2\pi^5}-\frac{4T^2}{15\pi^6}
+\frac {s^{\frac 32}\log s}{\pi} \bigg) \leq C(T^{7/3}s^{-1/2} + s^{3/2} + T^{1/3}s \log s).
\end{align*}
\item[{\rm 3.}]
When $T$ is sufficiently large and $sT^{-2/3}$ is sufficiently small, we take $p=1$, and obtain the following bounds. For any $\epsilon>0$, there exist $s_0, T_0, y_0, C$ such that for $s\geq s_0, T\geq T_0$, $sT^{-2/3}\leq y_0$, we have the bounds
\begin{align*}
-\frac{s^2\log\left(s^{3+\epsilon}+T^\epsilon\right)}{4T^{\frac 13}}-Cs^4T^{-\frac 23}-CsT^{-\frac 23}\log^2T-C&\leq \log \mathbb P(\Upsilon_{T}<-s)+\frac{s^3}{12}+\frac{1}{8}\log s
\\
&\leq C+Cs^4T^{-\frac 23}.
\end{align*}
\end{itemize}
\end{remark}
\section{Asymptotic analysis for $\Psi$ for $K\leq x\leq \delta/t$}\label{section:3}
\subsection{Rescaling}\label{subsection:rescaling section 3}
In this section, $x$ is large while $xt$ remains bounded. Since the exponential functions in the large $z$ asymptotics of $\Psi(z;x,t)$ are $e^{\pm (-\frac{2}{3}tz^{\frac{3}{2}}+xz^{\frac{1}{2}})}$, it is not convenient to directly work with $\Psi$. The change of variables $z=x^2\zeta$ is suited for this section as it transforms these exponential functions into $e^{\pm x^{2}(-\frac{2}{3}xt\zeta^{\frac 32}+\zeta^{\frac{1}{2}})}$. Therefore, we define
\begin{align*}
\widehat\Psi(\zeta;x,t)= x^{-\frac{\sigma_{3}}{2}}\Psi(x^{2}\zeta;x,t).
\end{align*}
The RH problem for $\Psi$ then transforms into the follow RH problem for $\widehat\Psi$.
\subsubsection*{RH problem for $\widehat{\Psi}$}
\begin{itemize}
\item[(a)] $\widehat{\Psi} : \mathbb{C}\setminus  \mathbb{R} \to \mathbb{C}^{2 \times 2}$ is analytic.
\item[(b)] $\widehat{\Psi}$ has the jumps
\begin{align*}
& \widehat{\Psi}_{+}(\zeta;x,t) = \widehat{\Psi}_{-}(\zeta;x,t)\begin{pmatrix}
1 & \frac{1}{F(x^{2}\zeta)} \\
0 & 1
\end{pmatrix}, & & \zeta \in \mathbb{R},
\end{align*}
where we recall that $F=\frac{1}{1-\sigma}$.
\item[(c)] As $\zeta \to \infty$, we have
\begin{multline*}
\widehat{\Psi}(\zeta;x,t) = \bigg( I + \frac{1}{\zeta}\widehat{\Psi}_{1}(x,t) + \bigO(\zeta^{-2}) \bigg) \zeta^{\frac{1}{4}\sigma_{3}}A^{-1}e^{{x^{2}}(-\frac{2}{3}{xt}\zeta^{\frac 32}{+}\zeta^{\frac{1}{2}})\sigma_{3}}
\\
\times
\begin{cases}
I, & |\arg \zeta | < \pi -\varepsilon, \\
\begin{pmatrix}
1 & 0 \\
\mp 1 & 1
\end{pmatrix}, & \pi - \varepsilon < \pm \arg \zeta < \pi,
\end{cases}
\end{multline*}
for any $\varepsilon \in (0,\frac{\pi}{2})$, where
\begin{align}\label{eq:hatPsi1}
\widehat{\Psi}_1(x,t) = 
\renewcommand*{\arraystretch}{1.4}
\begin{pmatrix}
\frac{q(x,t)}{x^{2}} & -\frac{i r(x,t)}{x^{3}} \\
\frac{i p(x,t)}{x} & -\frac{q(x,t)}{x^{2}}
\end{pmatrix}.
\end{align}
\end{itemize}
\subsection{Construction of the $g$-function}
By Assumptions \ref{assumptions}, we have, for $M>0$ sufficiently large, the estimates
\begin{equation}\label{eq:estimatelogsigma}
\log F(x^{2}\zeta) =
\begin{cases}
\mathcal O\left(e^{-c_-x^2|\zeta|}\right),& \mbox{for $\zeta<-M/x^2$,}\\
\mathcal O\left(1\right),& \mbox{for $-M/x^2\leq \zeta\leq M/x^2$,}\\
\log c_+'+c_+x^2\zeta+\mathcal O\left(e^{-\epsilon x^2\zeta}\right),& \mbox{for $M/x^2<\zeta$,}
\end{cases}
\end{equation}
which are valid as $x\to +\infty$, uniformly in $\zeta\in\mathbb R$.
Let us decompose $\frac{1}{F(x^2\zeta)}$ as follows:
\begin{align*}
\frac{1}{F(x^2\zeta)} = e^{-x^{2}V(\zeta)}e^{W(\zeta;x)},
\end{align*}
where
\begin{align}
\label{eq:defVW}
& V(\zeta) := \begin{cases}
0, & \zeta < 0, \\
c_+ \zeta, & \zeta > 0,
\end{cases} & & W(\zeta;x) :=- \log F(x^{2}\zeta) + x^{2}V(\zeta).
\end{align}
Whereas $e^{-x^2V(\zeta)}$ decays rapidly as soon as $\zeta>0$ moves away from $0$, $e^{W(\zeta;x)}$ is uniformly bounded and bounded away from $0$ for $\zeta\in\mathbb R$ and $x$ sufficiently large. More precisely, for sufficiently large $M>0$, we have the uniform in $\zeta$ large $x$ asymptotics
\begin{equation}
\label{eq:estimateW}
W(\zeta;x) =
\begin{cases}
\mathcal O\left(e^{-c_-x^2|\zeta|}\right),& \mbox{for $\zeta<-M/x^2$,}\\
\mathcal O\left(1\right),& \mbox{for $-M/x^2\leq \zeta\leq M/x^2$,}\\
-\log c_+'+\mathcal O\left(e^{-\epsilon x^2\zeta}\right),& \mbox{for $M/x^2<\zeta$.}
\end{cases}
\end{equation}
Note that $W$ and $V$ are both (Lipschitz) continuous at $\zeta=0$, but not differentiable. In view of our upcoming RH analysis, we need to construct a $g$-function, which will be the key towards a successful application of the Deift--Zhou nonlinear steepest descent method \cite{DZ}, and which will later turn out to determine the leading order asymptotic behavior of $Q_\sigma(x,t)$ as $t\to 0$ and $K\leq x\leq \delta/t$. 
We aim to construct $g$ such that
\subsubsection*{Conditions for $g$}
\begin{itemize}
\item[(a)] $g(\zeta;xt)$ is analytic in $\zeta\in\mathbb C\setminus(-\infty,{\alpha}(xt)]$,
\item[(b)] $g_+(\zeta;xt) + g_-(\zeta;xt) + \ell(xt) = V(\zeta)$ for $\zeta\in(-\infty,{\alpha}(xt))$,
\item[(c)] $g(\zeta;xt) = -\frac{2}{3}xt \zeta^{3/2} + \zeta^{1/2}+g_{0}(xt)+g_1(xt)\zeta^{-1/2}+ \bigO(\zeta^{-3/2})$ as $\zeta \to\infty$,
\item[(d)] $g_{+}(\zeta;xt)-g_{-}(\zeta;xt) = \bigO((\zeta-{\alpha}(xt))^{3/2})$ as $\zeta \to {\alpha}(xt)$, $\zeta<{\alpha}(xt)$,
\end{itemize}
Here, ${\alpha}(xt)>0$, $g_0(xt)$, $g_1(xt)$, and $\ell(xt)$ are to be determined. The derivative $g'$ then has to satisfy the following.
\subsubsection*{Conditions for $g'$}
\begin{itemize}
\item[(a)] $g'(\zeta;xt)$ is analytic in $\zeta\in\mathbb C\setminus(-\infty,{\alpha}(xt)]$,
\item[(b)] $g_+'(\zeta;xt)+g_-'(\zeta;xt)=V'(\zeta)$ for $\zeta\in(-\infty,{\alpha}(xt)){\setminus\lbrace 0\rbrace}$,
\item[(c)] $g'(\zeta;xt)=-xt\zeta^{1/2}+\frac{1}{2}\zeta^{-\frac{1}{2}}-\frac{1}{2}g_1(xt)\zeta^{-\frac{3}{2}}+\mathcal O(\zeta^{-5/2})$ as $\zeta\to\infty$,
\item[(d)] $g_{+}'(\zeta;xt)-g_{-}'(\zeta;xt) = \bigO((\zeta-{\alpha}(xt))^{1/2})$ as $\zeta \to {\alpha}(xt)$, $\zeta<{\alpha}(xt)$.
\end{itemize}
We can construct $g'$ as 
\begin{align}
\label{gprime 1}
g'(\zeta;xt)&=\frac{1}{(\zeta-{\alpha}(xt))^{1/2}}\bigg(  -xt \zeta + \frac{1+{\alpha}(xt) xt}{2}+\frac{c_+}{2\pi}\int_{0}^{{\alpha}(xt)}\sqrt{{\alpha}(xt)-s}\frac{ds}{s-\zeta} \bigg),
\end{align}
with $(\zeta-{\alpha}(xt))^{1/2}$ analytic except for $\zeta\leq{\alpha}(xt)$ and positive for $\zeta>{\alpha}(xt)$. Then, conditions (a), (b), and (c) are straightforward to verify. In order to have also (d), we need to impose that ${\alpha}(xt)$ is such that $g_{+}'(\zeta;xt)-g_{-}'(\zeta;xt)$ vanishes as $\zeta\to{\alpha}(xt)$.
This is the case if 
\begin{align}
\label{a def}
\frac{c_+}{2\pi}\int_{0}^{{\alpha}(xt)}\frac{ds}{\sqrt{{\alpha}(xt)-s}}=\frac{1-{\alpha}(xt)xt}{2}.
\end{align}
This equation admits two solutions, and we take the solution for ${\alpha}(xt)$ which remains bounded as $xt \to 0$, namely we take
\begin{equation}
\label{def:a}
{\alpha}(xt) := \frac{2c_+^{2}+\pi^{2}xt - 2c_+ \sqrt{c_+^{2}+\pi^{2}xt}}{\pi^{2}(xt)^{2}}.
\end{equation}
\begin{remark}
\label{rmk:uniformity}
The function ${\alpha}(xt)$ is a smooth monotonically decreasing function of $xt>0$, with ${\alpha}(xt)\to\frac{\pi^2}{4c_+^2}$ as $xt\to 0_+$ and ${\alpha}(xt)\to 0_+$ as $xt\to+\infty$. For later convenience we point out that for $xt<\delta$, ${\alpha}(xt)$ is positive and bounded away from $0$.
\end{remark}
\begin{remark}
Observe that $g'$ can equivalently be constructed as 
\begin{align}
g'(\zeta;xt) = \sqrt{\zeta-{\alpha}(xt)} \bigg( -xt- \frac{c_+}{2\pi}\int_{0}^{{\alpha}(xt)} \frac{1}{\sqrt{{\alpha}(xt)-s}} \frac{ds}{s-\zeta} \bigg), \label{gprime 2}
\end{align}
where ${\alpha}(xt)$ is found by requiring that $g'(\zeta;xt) = -xt\zeta^{1/2}+\frac{1}{2}\zeta^{-\frac{1}{2}} + \bigO(\zeta^{-3/2})$ as $\zeta \to \infty$. Here too, we see that ${\alpha}(xt)$ must satisfy \eqref{a def}. Using \eqref{a def}, we easily show that the right-hand sides of \eqref{gprime 1} and \eqref{gprime 2} are indeed equal. With this expression, we immediately obtain that $g'(\zeta;xt)$ remains bounded as $\zeta\to {\alpha}(xt)$, and that condition (d) holds.
\end{remark}
Using a direct primitive, we note that $g'$ can also be expressed more explicitly as
\begin{align}\label{explicit expression for g'}
g'(\zeta;xt) = -xt({\zeta-{\alpha}(xt)})^{1/2}+\frac{c_+}{2\pi i}\log \bigg( \frac{({\zeta-{\alpha}(xt)})^{1/2}+i \sqrt{{\alpha}(xt)}}{({\zeta-{\alpha}(xt)})^{1/2}-i\sqrt{{\alpha}(xt)}} \bigg),
\end{align}
where the principal branches for the log and the square roots are taken. Using this and \eqref{a def}, we also easily see that
\begin{equation}
\label{def:g1}
g_{1}(xt) = - \bigg( \frac{{\alpha}(xt)^{3/2}c_+}{3\pi} + \frac{{\alpha}(xt)^{2}xt}{4} \bigg) = {\alpha}(xt) \bigg( \frac{\sqrt{{\alpha}(xt)}c_+}{6\pi}-\frac{1}{4} \bigg).
\end{equation}
We can now construct $g$ from $g'$ by setting
\begin{align*}
g(\zeta;xt):=\int_{{\alpha}(xt)}^\zeta g'(s;xt)ds.
\end{align*}
The values of the remaining constants $\ell(xt), g_0(xt)$ now follow easily.
Since $g({\alpha}(xt))=0$, we must choose 
\begin{align}
\label{eq:lca}
\ell(xt) = V({\alpha}(xt)) = c_+{\alpha}(xt).
\end{align}
By construction, $g$ satisfies the required jumps.
Because $g_{+}(\zeta;xt)+g_{-}(\zeta;xt)+\ell(xt) = 0$ for all $\zeta \in (-\infty,0]$, we have
\begin{align*}
g_{0}(xt) = - \frac{\ell(xt)}{2} = - \frac{c_+{\alpha}(xt)}{2}.
\end{align*}
\subsection{Normalization of the RH problem}
Let us define 
\begin{align}
\nonumber
T(\zeta;x,t):={}& \begin{pmatrix}
1 & ig_{1}(xt)x^{2} \\
0 & 1
\end{pmatrix} \widehat{\Psi}(\zeta;x,t)e^{-x^{2} (g(\zeta;xt)-g_0(xt))\sigma_3},
\\
\label{def of phi}
\phi(\zeta;xt):={}&2g(\zeta;xt)+\ell(xt)-V(\zeta).
\end{align}
We define an analytic continuation of $V(\zeta)$ on $\mathbb C\setminus i\mathbb R$ by setting it equal to $0$ for $\re\zeta<0$, and to $c_+\zeta$ for $\re\zeta>0$, so that we can also extend $\phi$ to an analytic function in $\C\setminus\left(i\R\cup(-\infty,{\alpha}(xt)]\right)$.
By the jump condition for $g$, we have the relation
\begin{align}
\label{eq:phipm}
\phi_\pm(\zeta;xt)=\pm \left(g_+(\zeta;xt) - g_-(\zeta;xt)\right),\qquad \zeta\in(-\infty,{\alpha}(xt)),
\end{align}
from which RH conditions for $T$ follow in a straightforward way. 

\subsubsection*{RH problem for $T$}
\begin{itemize}
\item[(a)] $T:\mathbb{C}\setminus \mathbb{R} \to \mathbb{C}^{2\times 2}$ is analytic.
\item[(b)] $T$ has the jumps
\begin{align*}
& T_{+}(\zeta;x,t) = T_{-}(\zeta;x,t) \begin{pmatrix}
e^{-x^{2} \phi_{+}(\zeta;xt)} & e^{W(\zeta;x)} \\
0 & e^{-x^{2} \phi_{-}(\zeta;xt)}
\end{pmatrix}, & & \zeta \in (-\infty,\alpha(xt)),\\
& T_{+}(\zeta;x,t) = T_{-}(\zeta;x,t) \begin{pmatrix}
1 & e^{x^{2} \phi(\zeta;xt)}e^{W(\zeta;x)} \\
0 & 1
\end{pmatrix}, & & \zeta \in (\alpha(xt),+\infty).
\end{align*}
\item[(c)] We have
\begin{align}\label{asT}
T(\zeta;x,t) = \bigg( I + \frac{1}{\zeta}T_{1}(x,t) + \bigO(\zeta^{-2}) \bigg) \zeta^{\frac{1}{4}\sigma_{3}}A^{-1}, \qquad \mbox{as } \zeta \to \infty,
\end{align}
where
\begin{equation}
\label{eq:T1}
T_1(x,t) = \begin{pmatrix}
 -\frac{1}{2} g_1^2 x^4 - x g_1 p + \frac{q}{x^2} & \frac{i x^6 g_1^3 }3+ i x^3 g_1^2 p -i x^2 g_2-2 i g_1 q-\frac{i r}{x^3} \\
 i x^2 g_1 +\frac{i p}{x} & \frac{1}{2} g_1^2 x^4+ x g_1 p - \frac{q}{x^2} \\
\end{pmatrix}
\end{equation}
where $g_i=g_i(xt)$, $p=p(x,t)$, $q=q(x,t)$, and $r=r(x,t)$.
\item[(d)] As $\zeta \to 0$ and as $\zeta\to{\alpha}(xt)$, we have $T(\zeta;x,t) = \bigO(1)$.
\end{itemize}

\subsection{Opening of the lenses}
Note that the jump of $T$ for $\zeta \in (-\infty,{\alpha}(xt))$ can be written as
\begin{multline}\label{factorization of the jump}
\begin{pmatrix}
e^{-x^{2} \phi_{+}(\zeta;xt)} & e^{W(\zeta;x)} \\
0 & e^{-x^{2} \phi_{-}(\zeta;xt)}
\end{pmatrix} \\= \begin{pmatrix}
1 & 0 \\
e^{-W(\zeta;x)-x^{2} \phi_{-}(\zeta;xt)} & 1
\end{pmatrix}
\begin{pmatrix}
0 & e^{W(\zeta;x)} \\ -e^{-W(\zeta;x)} & 0
\end{pmatrix} \begin{pmatrix}
1 & 0 \\ e^{-W(\zeta;x)-x^{2} \phi_{+}(\zeta;xt)} & 1
\end{pmatrix}.
\end{multline}
Now we will use this factorization to open lenses around $(-\infty,{\alpha}(xt))$: we will split it into two parts, and deform the upper part into the upper half plane, and the lower part into the lower half plane.
Although the general principles of RH asymptotic analysis \cite{DKMVZ1} suggest to choose these curves close to the real line, we prefer to deform the upper and lower parts of $(-\infty,{\alpha}(xt))$ all the way to the vertical half-lines $\alpha(xt)\pm i\mathbb R^+$.
This is not essential, but it will be convenient later to prove that the jump matrices are small on these parts of the jump contour.
The function $\phi$ is analytic in $\C\setminus(i\R\cup(-\infty,{\alpha}(xt)])$ only, but the identity
\begin{align*}
x^{2}\phi(\zeta;xt)+W(\zeta;x) = 2x^{2}(g(\zeta;xt)-g_{0})-\log F(x^{2}\zeta)
\end{align*}
shows that the expressions
\begin{equation}
\exp\left(-x^{2}\phi_\pm(\zeta;xt)-W(\zeta;x)\right)=F(x^2\zeta){\exp\left(-2x^2(g_\pm(\zeta;xt)-g_0)\right)},\qquad \zeta\in\R,
\end{equation}
appearing in the factorization \eqref{factorization of the jump}, are boundary values of a function analytic in $\C\setminus(-\infty,{\alpha}(xt)]$, due to the conditions defining $g$ and Assumptions \ref{assumptions}.
The next transformation can then be defined by
\begin{align}
\label{eq:StoT}
S(\zeta;x,t) :=
\begin{cases}
T(\zeta;x,t)\begin{pmatrix}
1 & 0 \\
\mp e^{-x^{2} \phi(\zeta;xt)}e^{-W(\zeta;x)} & 1
\end{pmatrix}, & \mbox{if } \re\zeta < {\alpha}(xt) \mbox{ and } \pm\im \zeta >0, 
\\
T(\zeta;x,t), & \mbox{if } \re\zeta > {\alpha}(xt).
\end{cases}
\end{align}
Indeed, by the above discussion, $e^{-x^{2} \phi(\zeta;xt)}e^{-W(\zeta;x)} = F(x^2\zeta){\exp\left(-2x^2(g(\zeta;xt)-g_0)\right)}$ is an analytic function of $\zeta$ in the relevant regions.
RH conditions for $S$ follow from those for $T$ along with the identity \eqref{factorization of the jump}.

\subsubsection*{RH problem for $S$}
\begin{itemize}
\item[(a)] $S: \mathbb{C}\setminus \big( \mathbb{R} \cup (i\R+{\alpha}(xt) \big) \to \mathbb{C}^{2\times 2}$ is analytic.
\item[(b)] $S$ has the jumps
\begin{align*}
& S_{+}(\zeta;x,t) = S_{-}(\zeta;x,t)\begin{pmatrix}
1 & 0 \\
e^{-x^{2} \phi(\zeta;xt)}e^{-W(\zeta;x)} & 1
\end{pmatrix}, & & \zeta \in i\R+{\alpha}(xt),
\\
& S_{+}(\zeta;x,t) = S_{-}(\zeta;x,t) \begin{pmatrix}
0 & e^{W(\zeta;x)} \\
-e^{-W(\zeta;x)} & 0
\end{pmatrix}, & & \zeta \in (-\infty,{\alpha}(xt)),
\\
& S_{+}(\zeta;x,t) = S_{-}(\zeta;x,t) \begin{pmatrix}
1 & e^{x^{2} \phi(\zeta;xt)}e^{W(\zeta;x)} \\
0 & 1
\end{pmatrix}, & & \zeta \in ({\alpha}(xt),+\infty),
\end{align*}
where $\mathbb R$ is oriented from left to right, and the vertical half-lines $\alpha(xt)\pm i\mathbb R^+$ are oriented towards $\alpha(xt)$. According to this orientation, the subscript $+$ (resp. $-$) indicates the boundary value from the left (resp. right).
\item[(c)] We have
\begin{align*}
S(\zeta;x,t) = \bigg( I + \frac{1}{\zeta}T_{1}(x,t) + \bigO(\zeta^{-2}) \bigg) \zeta^{\frac{1}{4}\sigma_{3}}A^{-1}, \qquad \mbox{as } \zeta \to \infty,
\end{align*}
where $T_1(x,t)$ is the same as in \eqref{asT}.
\item[(d)] As $\zeta\to 0$ and as $\zeta\to{\alpha}(xt)$, we have $S(\zeta;x,t) = \bigO(1)$.
\end{itemize}

We now show that the jump matrix for $S$ is close to the identity except on $(-\infty,{\alpha}(xt))$ and near ${\alpha}(xt)$.

\begin{proposition}\label{prop:lenses}
{\it 1.}
For any $\rho>0$, there exists $\eta>0$ such that the inequality
\[\phi({\alpha}(xt)+v;xt)<-\eta v\]
holds for $v>\rho$ and $xt\leq \delta$.

\noindent {\it 2.}
For any $\rho>0$, there exists $\eta>0$ such that the inequalities
\[\re\phi({\alpha}(xt)\pm iv;xt)>\eta\sqrt v\]
hold for $v>\rho$ and $xt\leq \delta$.
\end{proposition}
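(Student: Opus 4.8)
The plan is to work directly from the explicit formula \eqref{explicit expression for g'} for $g'$ and to obtain $\phi$ by integrating. Recall that for $\zeta>\alpha(xt)$ we have $\phi(\zeta;xt)=2g(\zeta;xt)+\ell(xt)-V(\zeta)=2\int_{\alpha(xt)}^\zeta g'(s;xt)\,ds + c_+\alpha(xt)-c_+\zeta$ (using $\ell=c_+\alpha(xt)$ and $V(\zeta)=c_+\zeta$ for $\re\zeta>0$, valid since $\alpha(xt)>0$). Hence $\phi'(\zeta;xt)=2g'(\zeta;xt)-c_+$, and so it suffices to understand the sign of $\phi'$. From \eqref{explicit expression for g'},
\[
\phi'(\zeta;xt) = -2xt(\zeta-\alpha(xt))^{1/2}+\frac{c_+}{\pi i}\log\!\bigg(\frac{(\zeta-\alpha(xt))^{1/2}+i\sqrt{\alpha(xt)}}{(\zeta-\alpha(xt))^{1/2}-i\sqrt{\alpha(xt)}}\bigg)-c_+.
\]
For $\zeta=\alpha(xt)+v$ with $v>0$ the logarithmic term equals $\frac{2c_+}\pi\arctan\!\big(\sqrt{\alpha(xt)}/\sqrt v\big)$, which lies strictly between $0$ and $c_+$; therefore $\phi'(\alpha(xt)+v;xt)<-2xt\sqrt v<0$, so $\phi$ is strictly decreasing along $(\alpha(xt),+\infty)$. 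Since $\phi(\alpha(xt);xt)=0$, this already gives $\phi(\alpha(xt)+v;xt)<0$; to get the linear bound claimed in part 1, I would note that for $v\geq\rho$ the arctangent is at most $\frac{2c_+}\pi\arctan(\sqrt{\alpha(xt)}/\sqrt\rho)$, and since $\alpha(xt)$ is bounded for $xt\leq\delta$ (Remark \ref{rmk:uniformity}), there is a fixed $\theta_0<c_+$ with the logarithmic term $\leq\theta_0$ for all such $v$, whence $\phi'(\alpha(xt)+v;xt)\leq -(c_+-\theta_0)=:-\eta<0$ uniformly. Integrating from $\rho$ gives $\phi(\alpha(xt)+v;xt)\leq\phi(\alpha(xt)+\rho;xt)-\eta(v-\rho)$, and since $\phi(\alpha(xt)+\rho;xt)<0$ one can absorb the constant and shrink $\eta$ to obtain $\phi(\alpha(xt)+v;xt)<-\eta v$ for $v>\rho$; the uniformity in $xt\leq\delta$ follows because all constants depend only on the range of $\alpha(xt)$, which is compact in $(0,+\infty)$ on that range (here one also uses that $\alpha(xt)$ stays bounded away from $0$ for $xt\leq\delta$).

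For part 2, I would compute $\re\phi$ on the vertical segment $\zeta=\alpha(xt)\pm iv$, $v>0$. Here $(\zeta-\alpha(xt))^{1/2}=\sqrt v\, e^{\pm i\pi/4}$, so $-2xt(\zeta-\alpha(xt))^{1/2}$ contributes $-2xt\sqrt v\cos(\pi/4)=-\sqrt2\,xt\sqrt v$ to $\re\phi'$, and the log-term can be evaluated explicitly; the key point is that the $-V(\zeta)$ piece of $\phi$ contributes $-c_+\re\zeta=-c_+\alpha(xt)$, a bounded quantity, while the dominant growth as $v\to\infty$ comes from $2\re g(\alpha(xt)\pm iv;xt)$. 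Rather than integrating $\phi'$ I would instead use the large-$\zeta$ expansion from condition (c) of the $g$-problem, $g(\zeta;xt)=-\tfrac23 xt\,\zeta^{3/2}+\zeta^{1/2}+g_0(xt)+O(\zeta^{-1/2})$: on $\zeta=\alpha(xt)\pm iv$ for large $v$ this gives $2g\approx \mp\tfrac{2i}3 xt\,(\pm i v)^{3/2}\cdot(\text{lower order})+\dots$, and taking real parts one finds $\re\phi(\alpha(xt)\pm iv;xt)=2\re\zeta^{1/2}+O(1)=\sqrt2\sqrt v + O(1)$ when $xt$ is small (the $xt\,\zeta^{3/2}$ term has positive real part along these rays, so it only helps), giving the lower bound $\eta\sqrt v$ for $v$ large. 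For $v$ in a fixed compact range $[\rho,R]$ one uses continuity together with the fact that $\re\phi(\alpha(xt)\pm iv;xt)>0$ strictly there (which itself follows from $\phi'$ having the right sign, or from an explicit check), and compactness of the parameter range $xt\in(0,\delta]$ together with continuity in $xt$, to get a uniform positive lower bound; then combine the compact and the large-$v$ estimates.

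The main obstacle I anticipate is making the uniformity in $xt\leq\delta$ completely rigorous near the two "ends": near $v\to 0^+$ (where the $\sqrt v$ and $v$ bounds degenerate — but this is excluded by $v>\rho$) and, more seriously, controlling the behaviour as $xt\to 0^+$ versus as $v\to\infty$ simultaneously, since the relative size of the $xt\,\sqrt v$ term and the $O(1)$ terms shifts. The cleanest way around this is to prove a cruder but fully uniform statement first — namely $\re\phi(\alpha(xt)\pm iv;xt)\geq c\,\sqrt v$ for $v\geq v_0$ with $v_0$ and $c$ depending only on $\delta$, obtained by keeping the exact expression $\re\phi=2\re g-c_+\alpha(xt)$ and bounding $2\re g$ from below via \eqref{explicit expression for g'} integrated along the ray (the integrand's real part is $\geq \sqrt2\,\tfrac{v^{-1/2}}2 + (\text{nonneg. log contribution})$ after a short computation, giving $2\re g\geq \sqrt{2v}+O(1)$) — and then patch in the compact range $\rho\leq v\leq v_0$ by continuity and compactness of $(0,\delta]$ in $xt$, noting $\re\phi>0$ strictly there. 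The analogous but easier argument handles part 1. I do not expect any genuinely new idea to be needed beyond careful bookkeeping of these explicit elementary functions.
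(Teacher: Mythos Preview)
Your Part 1 is fine and close to the paper's argument: both of you compute $\phi'$ (equivalently $g'$) on the real ray, observe the $\arctan$ term is bounded strictly below $c_+$ once $v\ge\rho$ and $\alpha(xt)$ is bounded, and integrate. The paper integrates $g'$ explicitly to obtain $\phi(\alpha+v)\le\mathcal F(\alpha,v):=\tfrac{2c_+}{\pi}\bigl(\sqrt{\alpha v}-(\alpha+v)\arctan\sqrt{v/\alpha}\bigr)$ and then uses concavity of $\mathcal F$ in $v$; you instead bound $\phi'$ and integrate. Both work. One small point: your ``absorb the constant and shrink $\eta$'' step needs that $\sup_{0<xt\le\delta}\phi(\alpha(xt)+\rho;xt)<0$, not just $<0$ pointwise; this holds because $\phi(\alpha+\rho)$ extends continuously to $xt=0$ (where it equals $\mathcal F(\pi^2/4c_+^2,\rho)<0$), but you should say so rather than invoke compactness of $(0,\delta]$, which is not compact.

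Part 2 has a genuine gap. You write that along the vertical ray ``the integrand's real part is $\ge \sqrt 2\,\tfrac{v^{-1/2}}2 + (\text{nonneg.\ log contribution})$''. But the exact expression \eqref{explicit expression for g'} has only two pieces, $-xt(\zeta-\alpha)^{1/2}$ and the logarithm; there is no $\tfrac12\zeta^{-1/2}$ term in the exact $g'$ --- that is purely an asymptotic relation as $\zeta\to\infty$, coming \emph{from} the log term. The $-xt(\zeta-\alpha)^{1/2}$ piece contributes $\tfrac{2\sqrt 2}{3}xt\,v^{3/2}$ to $\re\phi$, which is nonnegative but vanishes as $xt\to 0$, so it cannot supply the uniform $\eta\sqrt v$ you need. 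All of the uniform $\sqrt v$-growth must come from the logarithmic term, which you have relegated to a ``nonneg.\ log contribution'' without quantifying it. In other words, your cruder bound ``$2\re g\ge\sqrt{2v}+O(1)$'' is precisely the statement whose proof requires the work you have skipped.

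The paper handles this by computing $\re\phi(\alpha\pm iv)$ as an integral in $s\in[0,v]$ of $\tfrac{c_+}{2\pi}\log\bigl(1+4/(\sqrt{2\alpha/s}+\sqrt{2s/\alpha}-2)\bigr)$ plus the $xt$-term, then using the elementary inequality
\[
\log\Bigl(1+\frac{4}{\sqrt 2(\xi+\xi^{-1})-2}\Bigr)\ \ge\ \frac{2\sqrt 2}{\xi+\xi^{-1}},\qquad \xi>0,
\]
which after integration yields the explicit lower bound $\mathcal H(\alpha,v)=\tfrac{2\sqrt 2 c_+}{\pi}\bigl(\sqrt{\alpha v}-\alpha\arctan\sqrt{v/\alpha}\bigr)$. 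This function is convex in $\sqrt v$, vanishes at $v=0$, and is strictly positive for $v>0$, giving $\mathcal H(\alpha,v)\ge\eta(\alpha)\sqrt v$ for $v>\rho$; uniformity then reduces to $\alpha(xt)$ staying bounded away from $0$ for $xt\le\delta$. You would need either this inequality or an equivalent quantitative lower bound on the log term; the soft compactness/asymptotic argument you outline does not close the gap, because your large-$v$ bound (for $v\ge v_0$) must already be uniform in $xt$ before you can combine it with the compact-range estimate.
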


\begin{proof}
{\it 1.}
Let us rewrite \eqref{explicit expression for g'} for $\zeta={\alpha}(xt)+v$, $v>0$, as
\begin{align}
g'({\alpha}(xt)+v)=-xt\sqrt v + \frac{c_+}\pi\arctan\sqrt{\frac {{\alpha}(xt)}v}.
\end{align}
Integration of this expression, using $g({\alpha}(xt))=0$, yields for $v>0$,
\begin{align*}
g({\alpha}(xt)+v)
=-\frac 23xt v^{3/2} + \frac{c_+}\pi\left(\sqrt{{\alpha}(xt)v}-\frac{{\alpha}(xt)\pi}2+({\alpha}(xt)+v)\arctan\sqrt{\frac{{\alpha}(xt)}v}\right).
\end{align*}
By omitting the negative term $-\frac 23xt v^{3/2}$ we obtain, using \eqref{eq:lca} and \eqref{def of phi},
\begin{align*}
\phi({\alpha}(xt)+v) = 2g({\alpha}(xt)+v) - c_+v
< \frac{2c_+}\pi\left(\sqrt{{\alpha}(xt)v}-({\alpha}(xt)+v)\arctan\sqrt{\frac v{{\alpha}(xt)}}\right).
\end{align*}
The function $\mathcal{F}({\alpha},v):=\frac{2c_+}\pi\left(\sqrt{{\alpha} v}-({\alpha}+v)\arctan\sqrt{\frac v{{\alpha}}}\right)$ (for ${\alpha},v>0$) is concave in $v$, $\mathcal{F}({\alpha},0)=0$, and $\mathcal{F}({\alpha},v)<0$ for $v>0$; therefore, for any $\rho>0$ we have $\mathcal{F}({\alpha},v)<-\eta v$ for $v>\rho$, where $\eta({\alpha}):=-\mathcal{F}({\alpha},\rho)/\rho>0$.
To see that we can take $\eta$ in the statement of the proposition independent of $x,t>0$, note that $\eta({\alpha})$ is decreasing in ${\alpha}$ and ${\alpha}(xt)\leq \pi^2/(4c_+^2)$, see Remark \ref{rmk:uniformity}, so that the proof is complete by taking $\eta = \eta( \pi^2/(4c_+^2))$.

\noindent{\it 2.}
Let us rewrite \eqref{explicit expression for g'} for $\zeta={\alpha}(xt) \pm i v$, $v>0$, as
\begin{align*}
g'({\alpha}(xt)\pm iv)=-xt e^{\pm i\pi/4}\sqrt v+\frac{c_+}{2\pi i}\log\left(\frac{e^{\pm i\pi/4}\sqrt v+i\sqrt{{\alpha}(xt)}}{e^{\pm i\pi/4}\sqrt v-i\sqrt{{\alpha}(xt)}}\right).
\end{align*}
For $v>0$ we have $\re\phi({\alpha}(xt)\pm i v)=2\re g({\alpha}(xt)\pm i v)$, see \eqref{eq:lca} and \eqref{def of phi}, so that
\begin{align*}
\re \phi({\alpha}(xt)\pm i v) &=\pm 2\,\re i\int_0^v g'({\alpha}(xt)\pm is) ds 
\\
&=
\int_0^v \left(xt\sqrt{2s}\pm
\frac{c_+}{\pi}\log\left|
\frac{e^{\pm i\pi/4}\sqrt s+i\sqrt{{\alpha}(xt)}}{e^{\pm i\pi/4}\sqrt s-i\sqrt{{\alpha}(xt)}}
\right|\right)ds
\\
&=\frac {2\sqrt 2}3 xt v^{3/2} + \frac{c_+}{2\pi}\int_0^v\log\left(1+\frac{4}{\sqrt{\frac{2{\alpha}(xt)}s}+\sqrt{\frac{2s}{{\alpha}(xt)}}-2}\right)ds
\\
&\geq \frac{\sqrt{2}c_+}{\pi}\int_0^v\frac{ds}{\sqrt{\frac s{{\alpha}(xt)}}+\sqrt{\frac {{\alpha}(xt)}s}} 
= \frac{2\sqrt{2}c_+}\pi\left (\sqrt{{\alpha}(xt)v} - {\alpha}(xt)\arctan\sqrt{\frac v{{\alpha}(xt)}}\right),
\end{align*}
where we use the inequality $\log\left(1+4(\sqrt 2(\xi+\xi^{-1})-2)^{-1}\right)\geq 2\sqrt 2\left(\xi+\xi^{-1}\right)^{-1}$, valid for all $\xi>0$.
The function $\mathcal H({\alpha},v):=\frac{2\sqrt{2}c_+}\pi\left(\sqrt{{\alpha} v}-{\alpha}\arctan\sqrt{\frac v{{\alpha}}}\right)$ (for ${\alpha},v>0$) is convex with respect to the variable $\sqrt v$, $\mathcal H({\alpha},0)=0$, and $\mathcal H({\alpha},v)>0$ for $v>0$; therefore, for any $\rho>0$ we have $\mathcal H({\alpha},v)>\eta\sqrt v$ for $v>\rho$, where $\eta({\alpha}):=\mathcal H({\alpha},\rho)/\sqrt\rho>0$.
To see that we can take $\eta$ in the statement of the proposition independent of $x,t>0$, note that $\eta({\alpha})$ is bounded away from $0$ provided ${\alpha}$ does not tend to $0$, and that $\alpha(xt)$ does not tend to $0$ for $xt\leq\delta$ by Remark \ref{rmk:uniformity}.
\end{proof}

\begin{corollary}
\label{corollary:smalljumpS}
For any sufficiently large $K$ and any $\rho>0$, there exists $\eta>0$ such that for $\zeta \in (\alpha(xt)+i\R)\cup ({\alpha}(xt),+\infty)$ and satisfying $|\zeta-{\alpha}(xt)|>\rho$, the jump matrix for $S$ is $I+\mathcal O\left(\exp(-\eta x^2\sqrt{|\zeta-{\alpha}(xt)|})\right)$, uniformly in $x,t>0$ such that $xt\leq \delta$, $x \geq K$.
\end{corollary}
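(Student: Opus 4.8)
The plan is to exploit that, on the two contour components in the statement, the jump matrix of $S$ is unipotent with a single nonzero off-diagonal entry, so that the operator norm of $J_{S}-I$ equals the modulus of that entry. It then suffices to bound $|e^{x^{2}\phi(\zeta;xt)}e^{W(\zeta;x)}|$ on $({\alpha}(xt),+\infty)$ and $|e^{-x^{2}\phi(\zeta;xt)}e^{-W(\zeta;x)}|$ on ${\alpha}(xt)+i\R$, in both cases for $|\zeta-{\alpha}(xt)|>\rho$. Throughout we take $K$ large; by Remark \ref{rmk:uniformity}, ${\alpha}(xt)$ is bounded below by a positive constant for $xt\leq\delta$, so $x^{2}{\alpha}(xt)\to+\infty$ uniformly as $K\to\infty$, which we use to ensure that the estimates \eqref{eq:estimatelogsigma}--\eqref{eq:estimateW} and Assumptions \ref{assumptions}, part 4, apply in the relevant range and that multiplicative constants can be absorbed into the exponential decay.

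On $({\alpha}(xt),+\infty)$, write $\zeta={\alpha}(xt)+v$ with $v>\rho$. For $K$ large we have $x^{2}\zeta>M$, so \eqref{eq:estimateW} gives $e^{W(\zeta;x)}=\mathcal O(1)$ uniformly. Proposition \ref{prop:lenses}, part 1, applied with this $\rho$, yields $\eta_{0}>0$ with $\phi({\alpha}(xt)+v;xt)<-\eta_{0}v$; since $v>\sqrt{\rho}\,\sqrt{v}$ this gives $e^{x^{2}\phi(\zeta;xt)}<e^{-\eta_{0}\sqrt{\rho}\,x^{2}\sqrt{v}}$. Combining, $|e^{x^{2}\phi(\zeta;xt)}e^{W(\zeta;x)}|=\mathcal O(e^{-\eta_{0}\sqrt{\rho}\,x^{2}\sqrt{|\zeta-{\alpha}(xt)|}})$, and for $K$ large the implied constant is swallowed by halving the exponent, since $x^{2}\sqrt{v}\geq K^{2}\sqrt{\rho}$.

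On ${\alpha}(xt)+i\R$ the entry $e^{-x^{2}\phi(\zeta;xt)}e^{-W(\zeta;x)}$ is genuinely complex; here I would write its modulus as $e^{-x^{2}\re\phi(\zeta;xt)}\,e^{-\re W(\zeta;x)}$ and treat the two factors separately. Since $\re\zeta={\alpha}(xt)>0$ we have $V(\zeta)=c_{+}\zeta$, hence $\re W(\zeta;x)=-\log|F(x^{2}\zeta)|+c_{+}x^{2}{\alpha}(xt)$; because $\re(x^{2}\zeta)=x^{2}{\alpha}(xt)$ is large, Assumptions \ref{assumptions}, part 4, gives $|F(x^{2}\zeta)|=\mathcal O(e^{c_{+}x^{2}{\alpha}(xt)})$, so that $e^{-\re W(\zeta;x)}=\mathcal O(1)$ uniformly on the vertical lines. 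Proposition \ref{prop:lenses}, part 2, then gives $\re\phi({\alpha}(xt)\pm iv;xt)>\eta_{0}\sqrt{v}$ for $v>\rho$, whence $|e^{-x^{2}\phi(\zeta;xt)}e^{-W(\zeta;x)}|=\mathcal O(e^{-\eta_{0}x^{2}\sqrt{|\zeta-{\alpha}(xt)|}})$, and again the constant is absorbed for $K$ large. Taking $\eta$ smaller than both exponents produced above finishes the proof.

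The one step deserving care is the vertical-line estimate: $\phi$ and $W$ are complex-valued there, so the real-axis asymptotics of $F$ are unavailable, and the argument relies on the exact cancellation between the growth $|F(x^{2}\zeta)|=\mathcal O(e^{c_{+}x^{2}{\alpha}(xt)})$ from Assumptions \ref{assumptions}, part 4, and the factor $e^{-c_{+}x^{2}{\alpha}(xt)}$ coming from the analytic continuation of $V$ across $\re\zeta>0$. Checking that this cancellation --- and the absorption of the multiplicative constants --- is uniform is precisely where we use that ${\alpha}(xt)$ stays bounded away from $0$ for $xt\leq\delta$ and that $K$ is large; the half-line estimate and the remaining bookkeeping with Proposition \ref{prop:lenses} are routine.
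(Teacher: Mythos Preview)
Your proof is correct and follows essentially the same approach as the paper: bound the $W$-factor using \eqref{eq:estimateW} on $(\alpha(xt),+\infty)$ and Assumptions~\ref{assumptions}, part~4, on the vertical line, then invoke Proposition~\ref{prop:lenses} for the $\phi$-factor. Your discussion of absorbing multiplicative constants by taking $K$ large is more explicit than the paper's, but not strictly necessary since the $\mathcal O$ in the statement already permits such constants.
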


\begin{proof}
Equation \eqref{eq:estimateW} implies that $|e^{W(\zeta;x)}|$ remains bounded for $\zeta \in ({\alpha}(xt),+\infty)$.
Also, for $\zeta \in {\alpha}(xt)+i\R$, by part 4 of Assumptions \ref{assumptions} and the definition \eqref{eq:defVW} of $W$, we have $|e^{-W(\zeta;x)}| = \big|\frac{F(x^{2}\zeta)}{e^{x^{2}c_{+} \zeta}}\big| = \bigO(1)$.
The claim now directly follows from Proposition \ref{prop:lenses} and the definition of the jump matrix of $S$.
\end{proof}

\subsection{Global parametrix}

Let us for a moment ignore the jumps on $(\alpha(xt),+\infty)\cup(i\mathbb{R}+\alpha(xt))$ in the RH problem for $S$.
We then obtain the following RH problem for $P^{(\infty)}$. 

\subsubsection*{RH problem for $P^{(\infty)}$}
\begin{itemize}
\item[(a)] $P^{(\infty)}: \mathbb{C}\setminus (-\infty,{\alpha}(xt)] \to \mathbb{C}^{2\times 2}$ is analytic.
\item[(b)] $P^{(\infty)}$ has the jumps
\begin{align*}
& P^{(\infty)}_{+}(\zeta;x,t) = P^{(\infty)}_{-}(\zeta;x,t) \begin{pmatrix}
0 & e^{W(\zeta;x)} \\
-e^{-W(\zeta;x)} & 0
\end{pmatrix}, & & \zeta \in (-\infty,{\alpha}(xt)).
\end{align*}
\item[(c)] We have
\begin{align}\label{asymp for Pinf}
P^{(\infty)}(\zeta;x,t) = \bigg( I + \frac{1}{\zeta}P^{(\infty)}_{1}(x,t) + \bigO(\zeta^{-2}) \bigg) \zeta^{\frac{1}{4}\sigma_{3}}A^{-1}, \qquad \mbox{as } \zeta \to \infty.
\end{align}
\end{itemize}
If we impose in addition that we have the condition 
\begin{itemize}
\item[(d)] As $\zeta \to {\alpha}(xt)$, $P^{(\infty)}(\zeta;x,t) = \bigO(|\zeta-{\alpha}(xt)|^{-\frac{1}{4}})$,
\end{itemize}
there is a unique solution to this RH problem, and it will turn out to be a good approximation to $S$ for large $x$, for $\zeta$ not too close to $0$ and ${\alpha}(xt)$.

The unique solution to this RH problem is given by
\begin{align*}
P^{(\infty)}(\zeta;x,t)=\begin{pmatrix}
1 & i d_{1}(x,t) \\
0 & 1
\end{pmatrix} (\zeta-{\alpha}(xt))^{\frac{1}{4}\sigma_{3}}A^{-1}e^{-D(\zeta;x,t)\sigma_{3}},
\end{align*}
with
\begin{align*}
D(\zeta;x,t)=\frac{\sqrt{\zeta-{\alpha}(xt)}}{2\pi} \int_{-\infty}^{{\alpha}(xt)} \frac{W(s;x)}{\sqrt{{\alpha}(xt)-s}}\frac{ds}{\zeta-s}.
\end{align*}
Indeed, the function $D$ satisfies $D_{+}(\zeta;x,t)+D_{-}(\zeta;x,t)=W(\zeta;x)$ for $\zeta\in(-\infty,{\alpha}(xt))$, from which the jump condition for $P^{(\infty)}$ follows.
Moreover, we have the asymptotics
\begin{align}
\label{eq:di}
D(\zeta;x,t)=\zeta^{-\frac 12}d_1(x,t)+\bigO(\zeta^{-3/2})\mbox{  as } \zeta \to \infty,\qquad d_1(x,t) =  \frac 1{2\pi} \int_{-\infty}^{\alpha(xt)}\frac{W(s;x)}{\sqrt{\alpha(xt)-s}}ds,
\end{align}
from which \eqref{asymp for Pinf} follows.
In particular, let us record the asymptotic behavior of $d_1(x,t)$ for later use.
\begin{proposition}\label{prop:d1}
As $x\to +\infty$, we have uniformly in $xt\leq\delta$ for any $\delta>0$, that
\begin{align}
\nonumber
d_1(x,t)&=-\frac{\sqrt{{\alpha}(xt)}}{\pi}
\log c_+'+\frac{j_\sigma}{x^{2} \sqrt{{\alpha}(xt)}}+\bigO(x^{-4}),
\\ 
\label{eq:Isigma}
j_\sigma & := -\frac 1{2\pi}\int_{-\infty}^{+\infty}\left[\log F(r)-(c_+r+\log c_+')1_{(0,+\infty)}(r)\right]dr.
\end{align}
\end{proposition}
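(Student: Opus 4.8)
\emph{Proof proposal.} The plan is to evaluate the integral defining $d_1(x,t)$ in \eqref{eq:di} by rescaling and peeling off the dominant contribution. First I would substitute $r=x^2s$ to obtain
$$d_1(x,t)=\frac1{2\pi x^2}\int_{-\infty}^{x^2\alpha(xt)}\frac{W(r/x^2;x)}{\sqrt{\alpha(xt)-r/x^2}}\,dr,$$
and then use the exact identity $W(r/x^2;x)=-\log F(r)+c_+r\,1_{(0,+\infty)}(r)$, which is immediate from \eqref{eq:defVW}. It is convenient to write $W(r/x^2;x)=\widetilde W(r)-\log c_+'\,1_{(0,+\infty)}(r)$, where $\widetilde W(r):=-\log F(r)+(c_+r+\log c_+')1_{(0,+\infty)}(r)$. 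By Assumptions 3 and 4, $\widetilde W$ is bounded, measurable, and decays exponentially: $\widetilde W(r)=\mathcal O(e^{-\frac{c_-}2|r|})$ as $r\to-\infty$ and $\widetilde W(r)=\mathcal O(e^{-\epsilon r})$ as $r\to+\infty$. In particular $\widetilde W$ and $r\widetilde W$ lie in $L^1(\mathbb R)$, and $2\pi j_\sigma=\int_{\mathbb R}\widetilde W(r)\,dr$.

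Accordingly I would split $d_1=A+B$ along this decomposition. The contribution $A$ of the piece $-\log c_+'\,1_{(0,+\infty)}$ is elementary and exact:
$$A=-\frac{\log c_+'}{2\pi x^2}\int_0^{x^2\alpha(xt)}\frac{dr}{\sqrt{\alpha(xt)-r/x^2}}=-\frac{\sqrt{\alpha(xt)}}\pi\log c_+',$$
which produces the leading term. For $B=\frac1{2\pi x^2}\int_{-\infty}^{x^2\alpha(xt)}\frac{\widetilde W(r)}{\sqrt{\alpha(xt)-r/x^2}}\,dr$ I would write $\tfrac1{\sqrt{\alpha-r/x^2}}=\tfrac1{\sqrt\alpha}+R(r)$ with $\alpha=\alpha(xt)$. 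The main part $\frac1{2\pi x^2\sqrt\alpha}\int_{-\infty}^{x^2\alpha}\widetilde W(r)\,dr$ equals $\frac{j_\sigma}{x^2\sqrt\alpha}$ up to the error incurred by extending the integral to $+\infty$, which is $\mathcal O(x^{-2}e^{-\epsilon x^2\alpha(xt)})$ by the decay of $\widetilde W$, hence negligible. For the remainder $\frac1{2\pi x^2}\int_{-\infty}^{x^2\alpha}\widetilde W(r)R(r)\,dr$ I would use $|R(r)|\le C|r|/x^2$ on the bulk region $|r|\le\frac12 x^2\alpha$ (Taylor expansion of $(1-r/(x^2\alpha))^{-1/2}$, using that $\alpha$ is bounded below), giving a contribution $\mathcal O(x^{-4})$ because $\int_{\mathbb R}|r\widetilde W(r)|\,dr<\infty$; on the complementary regions $\frac12 x^2\alpha<r<x^2\alpha$ and $r<-\frac12 x^2\alpha$ the exponential smallness of $\widetilde W$ there dominates the integrable singularity of $1/\sqrt{\alpha-r/x^2}$ and renders the contribution exponentially small. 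Collecting $A$ and $B$ yields the stated expansion.

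Uniformity in $xt\le\delta$ comes for free once the estimates are arranged: by Remark \ref{rmk:uniformity}, $\alpha(xt)$ is bounded above by $\pi^2/4c_+^2$ and bounded away from $0$ on $\{xt\le\delta\}$, so all the constants above---coming from the decay rates of $\widetilde W$ (which depend only on $\sigma$) and from powers of $\alpha(xt)^{\pm1}$---are uniform. The one point requiring genuine care, and the main obstacle, is the behaviour near the right endpoint $r=x^2\alpha(xt)$, where $\sqrt{\alpha(xt)-r/x^2}$ vanishes and $1/\sqrt{\alpha(xt)-r/x^2}$ develops an integrable singularity of size $\sim x$ after the rescaling; the resolution is exactly to exploit the exponential smallness of $\widetilde W$ there (Assumption 4) together with $\alpha(xt)$ staying away from $0$, and to Taylor-expand $(\alpha(xt)-r/x^2)^{-1/2}$ only away from that endpoint rather than uniformly.
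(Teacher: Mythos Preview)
Your argument is correct and follows essentially the same route as the paper's proof: split off the elementary piece $-\log c_+'\,1_{(0,+\infty)}$ (your $A$, the paper's $I_4$), then on the remaining $\widetilde W$-piece break the domain into a bulk $|r|\le \tfrac12 x^2\alpha(xt)$ where a Taylor expansion of $(\alpha-r/x^2)^{-1/2}$ gives the $j_\sigma/(x^2\sqrt{\alpha})$ term plus $\mathcal O(x^{-4})$, and tails where the exponential decay of $\widetilde W$ wins against the integrable endpoint singularity. The only cosmetic difference is the order of operations: the paper splits the $s$-domain into four intervals $I_1,\dots,I_4$ first and changes variable only in the central piece $I_2$, whereas you rescale globally and then split; the estimates and the use of Remark~\ref{rmk:uniformity} for uniformity are the same.
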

\begin{proof}
Let us write, using \eqref{eq:defVW}, $d_1(x,t)=\frac{1}{2\pi}(I_1+I_2+I_3+I_4)$, where
\begin{align*}
I_1:={}&-\int_{-\infty}^{-\frac{{\alpha}(xt)}{2}}\frac{\log F(x^2 s)}{\sqrt{{\alpha}(xt)-s}}d s,\quad
I_2:=-\int_{-\frac{{\alpha}(xt)}{2}}^{\frac{{\alpha}(xt)}{2}}\frac{\log F(x^2 s)-\left(c_+x^2s+\log c'_+\right)1_{(0,+\infty)}(s)}{\sqrt{{\alpha}(xt)-s}}d s,
\\
I_3:={}&-\int_{\frac{{\alpha}(xt)}{2}}^{{\alpha}(xt)}\frac{\log F(x^2 s)-\left(c_+x^2s+\log c'_+\right)}{\sqrt{{\alpha}(xt)-s}}d s,\quad
I_4:=-\int_{0}^{{\alpha}(xt)}\frac{\log c'_+}{\sqrt{{\alpha}(xt)-s}}d s=-2\sqrt{{\alpha}(xt)}\log c'_+.
\end{align*}
The terms $I_1,I_3$ are exponentially small as $x\to+\infty$ by \eqref{eq:estimateW};
\begin{align}
I_1=\bigO\left(\frac{e^{-c_-x^2{\alpha}(xt)/2}}{\sqrt{{\alpha}(xt)}}\right),\qquad I_3=\bigO\left(e^{-\epsilon x^2{\alpha}(xt)/2}\sqrt{{\alpha}(xt)}\right).
\end{align}
The estimate is uniform for $xt\leq \delta$, since ${\alpha}(xt)$ is positive and bounded away from zero (see Remark \ref{rmk:uniformity}).
To estimate $I_2$ we perform the change variable $x^2s=r$ and obtain
\begin{align*}
I_2&=-\frac 1{x^2}\int_{-x^2\frac{{\alpha}(xt)}{2}}^{x^2\frac{{\alpha}(xt)}{2}}\frac{\log F(r)-(c_+r+\log c_+')1_{(0,+\infty)}(r)}{\sqrt{{\alpha}(xt)-r/x^2}}d r
\\
&=\frac{-1}{x^2\sqrt{{\alpha}(xt)}}\sum_{k\geq 0}\frac{(2k-1)!!}{(2k)!!\left({\alpha}(xt)x^2\right)^k}\int_{-x^2\frac{{\alpha}(xt)}{2}}^{x^2\frac{{\alpha}(xt)}{2}}r^k\left[\log F(r)-(c_+r+\log c_+')1_{(0,+\infty)}(r)\right]d r
\\
&=\frac{-1}{x^2\sqrt{{\alpha}(xt)}}\int_\R\left[\log F(r)-(c_+r+\log c_+')1_{(0,+\infty)}(r)\right]d r+\mathcal O(x^{-4}).
\end{align*}
In the last step we use again \eqref{eq:estimatelogsigma} to show that
\begin{multline*}
\int_{-x^2\frac{{\alpha}(xt)}{2}}^{x^2\frac{{\alpha}(xt)}{2}}\left[\log F(r)-(c_+r+\log c_+')1_{(0,+\infty)}(r)\right]d r
\\
=\int_\R\left[\log F(r)-(c_+r+\log c_+')1_{(0,+\infty)}(r)\right]d r+\mathcal O\left(e^{-\min\{\epsilon,c_-\} x^2{\alpha}(xt)/2}\right).
\end{multline*}
\end{proof}
Recall that $W$ is Lipschitz continuous (but not differentiable) at $\zeta=0$. Hence $D$ admits well-defined boundary values at $\zeta=0$. Moreover, expanding $P^{(\infty)}(\zeta)$ as $\zeta\to\infty$, we obtain
\begin{align}\label{P1inf}
P_{1}^{(\infty)}(x,t) = \begin{pmatrix}
-\frac{{\alpha}(xt)}{4}-\frac{d_{1}(x,t)^{2}}{2} & \frac{i}{6}(3{\alpha}(xt)d_{1}(x,t)+2d_{1}(x,t)^{3}-6d_{2}(x,t)) \\ id_{1}(x,t) & \frac{{\alpha}(xt)}{4}+\frac{d_{1}(x,t)^{2}}{2}
\end{pmatrix}
\end{align}
in \eqref{asymp for Pinf}.
Finally, let us discuss the behavior of $P^{(\infty)}$ as $\zeta\to{\alpha}(xt)$; to this end write
\begin{align}
D(\zeta;x,t)&=\frac{W(\zeta;x)}2-\frac{(\zeta-{\alpha}(xt))^{1/2}}{2\pi}\int_{-\infty}^{{\alpha}(xt)}\frac{W(\zeta;x)-W(s;x)}{(\zeta-s)\sqrt{{\alpha}(xt)-s}}ds.
\end{align}
Since $W(\zeta;x)$ is analytic near $\zeta={\alpha}(xt)$, we have the Taylor series
\begin{align}
\frac{W(\zeta;x)-W(s;x)}{\zeta-s} = \sum_{\ell = 0}^{+\infty}\frac{(\zeta-{\alpha}(xt))^\ell}{(s-{\alpha}(xt))^{\ell+1}}\left[W(s;x)-\sum_{j=0}^{\ell}W^{(j)}({\alpha}(xt);x)\frac{(s-{\alpha}(xt))^j}{j!}\right]
\end{align}
and so we have the following Poincar\'e asymptotic series, as $\zeta\to{\alpha}(xt)$ away from $(-\infty,{\alpha}(xt))$,
\begin{align}
\int_{-\infty}^{{\alpha}(xt)}\frac{W(\zeta;x)-W(s;x)}{(\zeta-s)\sqrt{{\alpha}(xt)-s}}ds
&\sim\sum_{\ell=0}^{+\infty}\left(\zeta-{\alpha}(xt)\right)^\ell\int_{-\infty}^{{\alpha}(xt)}\frac{W(s;x)-\sum_{j=0}^{\ell}W^{(j)}({\alpha}(xt);x)\frac{(s-{\alpha}(xt))^j}{j!}}{(s-\alpha(xt))^{\ell+1}\,\sqrt{\alpha(xt)-s}}ds.
\end{align}
In particular, as $\zeta\to {\alpha}(xt)$,
\begin{align}
D(\zeta;x,t)&=\frac{W(\zeta;x)}2-\chi(x,t)(\zeta-{\alpha}(xt))^{1/2}+\bigO((\zeta-{\alpha}(xt))^{3/2}),
\\\label{eq:chi}
\chi(x,t)&=\frac 1{2\pi}\int_{-\infty}^{{\alpha}(xt)}\frac{W({\alpha}(xt);x)-W(s;x)}{({\alpha}(xt)-s)^{3/2}}ds.
\end{align}
\begin{proposition}
\label{prop:chi}
As $x\to +\infty$, we have uniformly for $xt\leq\delta$ for any $\delta>0$, that
\begin{align}
\chi(x,t)=-\frac{\log c_+'}{\pi\sqrt{{\alpha}(xt)}}+\mathcal O(x^{-2}).
\end{align}
\end{proposition}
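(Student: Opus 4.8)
The plan is to start from the integral representation \eqref{eq:chi} and to get rid of the non‑integrable‑looking weight $(\alpha(xt)-s)^{-3/2}$ by an integration by parts. Abbreviating $\alpha=\alpha(xt)$ and using $\frac{d}{ds}\bigl(2(\alpha-s)^{-1/2}\bigr)=(\alpha-s)^{-3/2}$, I would integrate by parts on $(-\infty,0)$ and on $(0,\alpha)$ separately, since $W(\,\cdot\,;x)$ is analytic on each of these intervals. The boundary term at $s=\alpha$ vanishes because $W(\,\cdot\,;x)$ is analytic at $\alpha$, so that $W(\alpha;x)-W(s;x)=\bigO(\alpha-s)$ there; the boundary term at $s=-\infty$ vanishes because $2(\alpha-s)^{-1/2}\to 0$ while $W(s;x)$ stays bounded; and the two contributions at $s=0$ cancel because $W(\,\cdot\,;x)$ is (Lipschitz) continuous at $0$. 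This yields
\[
\chi(x,t)=\frac{1}{\pi}\int_{-\infty}^{\alpha}\frac{\partial_s W(s;x)}{\sqrt{\alpha-s}}\,ds ,
\]
an absolutely convergent integral whose integrand is bounded near $s=0$ (with a jump, but no singularity).

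Next I would substitute $W(s;x)=-\log F(x^2 s)+x^2 V(s)$, so that $\partial_s W(s;x)=-x^2(\log F)'(x^2 s)+x^2 c_+\,1_{(0,+\infty)}(s)$, and rescale $r=x^2 s$ to obtain
\[
\chi(x,t)=-\frac{1}{\pi}\int_{-\infty}^{x^2\alpha}\frac{h(r)}{\sqrt{\alpha-r/x^2}}\,dr,\qquad h(r):=(\log F)'(r)-c_+\,1_{(0,+\infty)}(r).
\]
By part 3 of Remark \ref{remark:logFdecay}, i.e.\ \eqref{eq:vplusinfty}--\eqref{eq:vminusinfty}, the function $h$ decays exponentially as $r\to\pm\infty$, hence $\int_{\R}(1+|r|)|h(r)|\,dr<\infty$. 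Since $\alpha(xt)$ lies in a fixed compact subset of $(0,+\infty)$ for $xt\leq\delta$ (Remark \ref{rmk:uniformity}), the part of the integral with $|r|>x^2\alpha/2$ is $\bigO(e^{-cx^2})$, and on the remaining range one has $\alpha-r/x^2\in[\alpha/2,3\alpha/2]$, so replacing $(\alpha-r/x^2)^{-1/2}$ by $\alpha^{-1/2}$ costs only $\bigO\bigl(x^{-2}\alpha^{-3/2}\int_\R|r||h(r)|\,dr\bigr)=\bigO(x^{-2})$; extending the integral back to all of $\R$ costs another $\bigO(e^{-cx^2})$. Therefore $\chi(x,t)=-\frac{1}{\pi\sqrt{\alpha(xt)}}\int_{\R}h(r)\,dr+\bigO(x^{-2})$, uniformly for $xt\leq\delta$.

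It then remains to evaluate $\int_\R h(r)\,dr=\int_{-\infty}^{0}(\log F)'(r)\,dr+\int_{0}^{+\infty}\bigl((\log F)'(r)-c_+\bigr)\,dr$. The first term equals $\log F(0)$ because $F(r)\to 1$ as $r\to-\infty$ by part 3 of Assumptions \ref{assumptions}, and the second equals $\lim_{R\to+\infty}\bigl(\log F(R)-c_+R\bigr)-\log F(0)=\log c_+'$ by part 4 of Assumptions \ref{assumptions} (both integrands decay exponentially by \eqref{eq:vplusinfty}--\eqref{eq:vminusinfty}). Hence $\int_\R h(r)\,dr=\log c_+'$ and the claimed expansion follows.

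The only delicate point is really the integration by parts, which must respect the non‑smoothness of $W(\,\cdot\,;x)$ at $\zeta=0$ (whence the splitting at $0$) and the boundary behaviour at $s=\alpha$. Everything after that is a routine repetition of the rescale‑and‑split estimates already used in the proof of Proposition \ref{prop:d1}, and the uniformity for $xt\leq\delta$ is inherited, exactly as there, from the fact that $\alpha(xt)$ stays bounded away from $0$ and from $+\infty$.
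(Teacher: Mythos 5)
Your proof is correct, and it takes a genuinely different route from the paper's. The paper proceeds directly from the definition \eqref{eq:chi}: it subtracts the explicit constant $\frac{\log c_+'}{2\pi}\int_{-\infty}^{0}(\alpha-s)^{-3/2}\,ds=\frac{\log c_+'}{\pi\sqrt{\alpha}}$ so that the remaining pieces $\chi_1$ (over $(-\infty,0)$) and $\chi_2$ (over $(0,\alpha)$) are each $\bigO(x^{-2})$. The estimate of $\chi_2$ is somewhat delicate because the integrand is near the endpoint singularity at $s=\alpha$: the paper has to introduce an auxiliary cutoff $\beta$, control a boundary term of size $\bigO(\sqrt{\beta}\,W'(\alpha;x))$ using the exponential decay \eqref{eq:vplusinfty}, and then optimize $\beta$. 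Your integration by parts avoids this entirely: the boundary term at $s=\alpha$ vanishes for free because $W(\alpha;x)-W(s;x)=\bigO(\alpha-s)$, and you end up with the absolutely convergent integral $\frac{1}{\pi}\int_{-\infty}^{\alpha}(\alpha-s)^{-1/2}\partial_s W(s;x)\,ds$, after which the rescaling $r=x^2 s$ and the exponential decay of $h(r)=(\log F)'(r)-c_+1_{(0,+\infty)}(r)$ yield the result by the same elementary estimates used for $d_1$ in Proposition \ref{prop:d1}. The price is that you must carry out the integration by parts carefully across the non-differentiable point $s=0$, which you do correctly (split at $0$, use Lipschitz continuity of $W$ there so the two boundary contributions cancel). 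A minor further difference: the paper identifies the coefficient of the $x^{-2}$ error term explicitly (as $\frac{1}{2\pi\alpha^{3/2}}\int_{\R}[\log F(r)-(c_+r+\log c_+')1_{(0,+\infty)}(r)]\,dr$), whereas you only bound it, which is all the proposition requires; on the other hand your computation $\int_{\R}h(r)\,dr=\log c_+'$ makes the origin of the leading constant more transparent than the paper's upfront subtraction.
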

\begin{proof}
Write $\chi=\chi_1+\chi_2-\frac {\log c'_+}{2\pi}\int_{-\infty}^0({\alpha}(xt)-s)^{-3/2}ds=\chi_1+\chi_2-\log c_+'/(\pi \sqrt{{\alpha}(xt)})$, where
\begin{align*}
\chi_1:=\int_{-\infty}^0\frac{W({\alpha}(xt);x)+\log c_+'-W(s;x)}{({\alpha}(xt)-s)^{3/2}}\frac{ds}{2\pi},\qquad
\chi_2:={}&\int_0^{{\alpha}(xt)}\frac{W({\alpha}(xt);x)-W(s;x)}{({\alpha}(xt)-s)^{3/2}}\frac{ds}{2\pi}.
\end{align*}
As $x\to+\infty$, by using \eqref{eq:estimateW} and the change of variables $r=x^2s$, we have
\begin{align*}
\chi_1&=(W({\alpha}(xt);x)+\log c_+')\int_{-\infty}^0\frac{ds}{2\pi({\alpha}(xt)-s)^{3/2}}
+\int_{-\infty}^0\frac{\log F(x^2s)}{({\alpha}(xt)-s)^{3/2}}\frac{ds}{2\pi}
\\
&=\bigO(e^{-\epsilon x^2 {\alpha}(xt)})+\int_{-\infty}^0\frac{\log F(r)}{({\alpha}(xt)-x^{-2}r)^{3/2}}\frac{dr}{2\pi x^2}
\\
&=\frac{1}{2\pi x^2{\alpha}(xt)^{3/2}}\int_{-\infty}^0\log F(r)dr+\bigO(x^{-4}) = \mathcal O(x^{-2}),
\end{align*}
and these estimates are uniform in $xt\leq\delta$ for any $\delta>0$, see Remark \ref{rmk:uniformity}.
Similarly, let $x\to+\infty$ and $\beta\to 0_+$ (in a manner to be specified below) so that
\begin{align*}
\chi_2&=(W({\alpha}(xt);x)+\log c_+')\int_0^{{\alpha}(xt)-\beta}\frac{ds}{2\pi({\alpha}(xt)-s)^{3/2}}
\\
&\qquad
+\int_0^{{\alpha}(xt)-\beta}\frac{\log F(x^2s)-c_+x^2s-\log c'_+}{({\alpha}(xt)-s)^{3/2}}\frac{ds}{2\pi}+\int_{{\alpha}(xt)-\beta}^{{\alpha}(xt)}\frac{W({\alpha}(xt);x)-W(s;x)}{({\alpha}(xt)-s)^{3/2}}\frac{ds}{2\pi}
\\
&=\bigO\left(\frac{e^{-\epsilon x^2 {\alpha}(xt)}}{\sqrt\beta}\right)
+\int_0^{x^2({\alpha}(xt)-\beta)}\frac{\log F(r)-c_+r-\log c'_+}{({\alpha}(xt)-x^{-2}r)^{3/2}}\frac{dr}{2\pi x^2}+\bigO\left(\sqrt{\beta}W'({\alpha}(xt);x)\right).
\end{align*}
By Assumptions \ref{assumptions}, we obtain
\begin{align*}
W'({\alpha}(xt);x) = x^2\left(c_+-(\log F)'(x^2\alpha(xt))\right)
=
\bigO(x^2e^{-\frac \epsilon 2 x^2\alpha(xt)}),
\end{align*}
where we use \eqref{eq:vplusinfty},
and so, by choosing for instance $\beta=\bigO(e^{-\epsilon x^2\alpha(xt)})$, we get
\begin{align*}
\chi_2=\frac 1{2\pi x^2{\alpha}(xt)^{3/2}}\int_0^{+\infty}(\log F(r)-c_+r-\log c'_+)dr+\bigO(x^{-4})=\mathcal O(x^{-2}).
\end{align*}
These estimates are uniform in $xt\leq \delta$, see again Remark \ref{rmk:uniformity}.
\end{proof}

\subsection{Local Airy parametrix near $\alpha(xt)$}\label{subsection:Airy local param}

We now construct a local parametrix in a neighborhood of $\alpha(xt)$, with exactly the same jumps as $S$.
Let us introduce
\begin{align*}
P^{({\alpha})}(\zeta;x,t) := E(\zeta;x,t) \Phi_{\mathrm{Ai}}^{[j]}(x^{\frac{4}{3}}f(\zeta;xt))e^{-\frac{x^{2}}{2}\phi(\zeta;xt)\sigma_{3}}e^{-\frac{W(\zeta;x)}{2}\sigma_{3}},
\end{align*}
where $j=1$ for $0<\arg(\zeta-\alpha(xt))<\pi/2$, $j=2$ for $\pi/2<\arg(\zeta-\alpha(xt))<\pi$, $j=3$ for $-\pi<\arg(\zeta-\alpha(xt))<-\pi/2$, and $j=4$ for $-\pi/2<\arg(\zeta-\alpha(xt))<0$, with
$\Phi_{\mathrm{Ai}}^{[1]},\ldots,\Phi_{\mathrm{Ai}}^{[4]}$ the entire functions defined in Appendix \ref{app:airy} which together form the solution to the Airy model RH problem.
The functions $f$ and $E$ are given by
\begin{align}\label{def of f and E}
f(\zeta;xt) := \left( -\frac{3}{4}\phi(\zeta;xt) \right)^{2/3}, \quad 
E(\zeta;x,t) := P^{(\infty)}(\zeta;x,t) e^{\frac{W(\zeta;x)}{2}\sigma_{3}} A^{-1} \left( x^{\frac{4}{3}}f(\zeta;xt) \right)^{\frac{\sigma_{3}}{4}}.
\end{align}
We now contend that there exists a sufficiently small $\rho>0$ such that $f(\zeta;xt)$ is a conformal transformation of $\zeta\in\mathcal D :=\{\zeta\in\mathbb C:\ |\zeta-{\alpha}(xt)|<\rho\}$ onto a neighborhood of $0$, satisfying $f'(\alpha(xt);xt)>0$, and that $E(\zeta;x,t)$ is a holomorphic function of $\zeta\in\mathcal D$.

For the first statement, we note from \eqref{explicit expression for g'} that, as $\zeta \to {\alpha}(xt)$, we have
\begin{align}
g'(\zeta;xt)=\frac{c_+}2-\left(\frac{c_+}{\pi \alpha(xt)^{1/2}}+xt\right)\left(\zeta-{\alpha}(xt)\right)^{1/2}+
\frac{c_+}{\pi}\sum_{\ell=1}^{+\infty}\frac{(-1)^{\ell+1}}{2\ell+1}\frac{(\zeta-{\alpha}(xt))^{\ell+1/2}}{{\alpha}(xt)^{\ell+1/2}}.
\end{align}
Since $\phi(\zeta;xt)=2g(\zeta;xt)-c_+(\zeta-{\alpha}(xt))$ for $\re\zeta>0$, see {\eqref{eq:defVW},} \eqref{eq:lca} and \eqref{def of phi}, integrating this relation (using $g({\alpha}(xt))=0$) we conclude that, as $\zeta\to{\alpha}(xt)$, we have 
\begin{align}
\label{eq:seriesphi}
\phi(\zeta;xt)=\left(\zeta-{\alpha}(xt)\right)^{3/2}\left[-\frac 43\left(\frac{c_+}{\pi {\alpha}(xt)^{1/2}}+xt\right)+\frac{4c_+}{\pi}\sum_{\ell=1}^{+\infty}\frac{(-1)^{\ell+1}}{(2\ell+3)(2\ell+1)}\frac{(\zeta-{\alpha}(xt))^{\ell}}{{\alpha}(xt)^{\ell+1/2}}\right].
\end{align}
Hence $f(\zeta;xt)$ defined in \eqref{def of f and E} is holomorphic in a neighborhood of ${\alpha}(xt)$ and $f'({\alpha}(xt);xt)>0$, proving the claim about $f$. For later reference we compute the first terms in the Taylor series at $\zeta={\alpha}(xt)$;
\begin{align}
\nonumber
f(\zeta;xt)&=f_1(xt)(\zeta-{\alpha}(xt))+f_2(xt)(\zeta-{\alpha}(xt))^2+\bigO\left((\zeta-{\alpha}(xt))^3\right),\qquad \zeta\to {\alpha}(xt),
\\
\label{eq:fTaylor}
f_1(xt)&=\left(\frac{c_+}{\pi {\alpha}(xt)^{1/2}}+xt\right)^{2/3},\quad
f_2(xt)=-\frac{2c_+}{15\pi {\alpha}(xt)^{3/2}}\left(\frac{c_+}{\pi {\alpha}(xt)^{1/2}}+xt\right)^{-1/3}.
\end{align}
We now prove that $E(\zeta;x,t)$ is holomorphic for $\zeta\in\mathcal D$.
To this end, first we note that $f(\zeta;xt)$ is analytic in $\mathcal D$ vanishing linearly at $\zeta={\alpha}(xt)$ so that
\begin{align}
\left( f(\zeta;xt)^{\sigma_3/4} \right)_+ = \begin{pmatrix}i & 0 \\ 0 & -i \end{pmatrix}\left( f(\zeta;xt)^{\sigma_3/4}\right)_-.
\end{align}
On the other hand, directly from the jump condition for $P^{(\infty)}$ we get
\begin{align}
\left(P^{(\infty)}(\zeta;x,t)e^{\frac {W(\zeta;x)}2\sigma_3}A^{-1}\right)_+
=
\left(P^{(\infty)}(\zeta;x,t)e^{\frac {W(\zeta;x)}2\sigma_3}A^{-1}\right)_-\begin{pmatrix}-i & 0 \\ 0 & i \end{pmatrix}.
\end{align}
Therefore, $E(\zeta;x,t)$ satisfies $E_+(\zeta;x,t)=E_-(\zeta;x,t)$ for ${\alpha}(xt)-\rho<\zeta<{\alpha}(xt)$ and so it has at worst an isolated singularity at $\zeta={\alpha}(xt)$; however $f(\zeta;xt)^{-1/4},P^{(\infty)}(\zeta;x,t)$ are both $\bigO\left((\zeta-{\alpha}(xt))^{-1/4}\right)$ as $\zeta\to {\alpha}(xt)$ so that $E(\zeta;x,t)=\bigO\left((\zeta-{\alpha}(xt))^{-1/2}\right)$ as $\zeta\to {\alpha}(xt)$.
Hence ${\alpha}(xt)$ is a removable singularity and $E(\zeta;x,t)$ is analytic in the whole disk $\mathcal D$, as claimed.

\subsection{Small norm RH problem}\label{subsection:small norm with xt small}

Define
\begin{align}\label{def of R section 3}
R(\zeta;x,t) := \begin{cases}
S(\zeta;x,t)P^{(\infty)}(\zeta;x,t)^{-1}, & \zeta \in \mathbb{C}\setminus \overline{\mathcal{D}}, \\
S(\zeta;x,t)P^{({\alpha})}(\zeta;x,t)^{-1}, & \zeta \in \mathcal{D}.
\end{cases}
\end{align}
The RH conditions for $R$ are detailed below and follow directly from those of $S$, $P^{(\infty)}$, and $P^{(\alpha)}$. 
Let us denote $\Gamma_R:=\partial \mathcal D \cup \Gamma_{-}\cup\Gamma_0\cup\Gamma_{+}$ where $\Gamma_0:=({\alpha}(xt)+\rho,+\infty)$, $\Gamma_\pm := ({\alpha}(xt)\pm i\rho,{\alpha}(xt)\pm i\infty)$; we consider $\Gamma_{0}$ oriented from left to right, $\Gamma_\pm$ oriented towards $\alpha(xt)$, and $\partial \mathcal D$ oriented clockwise (see Figure \ref{figcontourR}).
As usual, boundary values are labeled by $+$ (resp. $-$) when we approach the oriented contour from the left (resp. right).

\subsubsection*{RH problem for $R$}
\begin{itemize}
\item[(a)] $R: \mathbb{C}\setminus\Gamma_R \to \mathbb{C}^{2\times 2}$ is analytic.
\item[(b)] $R$ has the jumps
\begin{align}
&\label{eq:jumpRconditions1}
R_{+}(\zeta;x,t) = R_{-}(\zeta;x,t)P^{(\infty)}(\zeta;x,t)\begin{pmatrix}
1 & 0 \\
e^{-x^{2} \phi(\zeta;xt)}e^{-W(\zeta;x)} & 1
\end{pmatrix}P^{(\infty)}(\zeta;x,t)^{-1}, & & \zeta \in \Gamma_{\pm},
\\
& R_{+}(\zeta;x,t) = R_{-}(\zeta;x,t) P^{(\infty)}(\zeta;x,t)\begin{pmatrix}
1 & e^{x^{2} \phi(\zeta;xt)}e^{W(\zeta;x)} \\
0 & 1
\end{pmatrix}P^{(\infty)}(\zeta;x,t)^{-1}, & & \zeta \in \Gamma_0,
\\
& R_{+}(\zeta;x,t) = R_{-}(\zeta;x,t)P^{({\alpha})}(\zeta;x,t)P^{(\infty)}(\zeta;x,t)^{-1}, & & \zeta \in \partial\mathcal D.
\label{eq:jumpRconditions}
\end{align}
\item[(c)] We have
\begin{align}\label{asymp for R}
R(\zeta;x,t) = I + \frac{1}{\zeta}R_{1}(x,t) + \bigO(\zeta^{-2}), \qquad \mbox{as } \zeta \to \infty.
\end{align}
\item[(d)] As $\zeta\to\alpha(xt)+\rho$ and as $\zeta\to\alpha(xt)\pm i\rho$, $R(\zeta;x,t)=\bigO(1)$.
\end{itemize}
\begin{figure}[htbp]
\centering
\hspace{1cm}
\begin{tikzpicture}[scale=2]
\draw[very thick,postaction={decorate,decoration={markings,mark=at position .5 with {\arrow[line width=1pt ]{>}}}}] (1.05,0) -- (2.5,0);
\draw[very thick,postaction={decorate,decoration={markings,mark=at position .5 with {\arrow[line width=1pt ]{<}}}}] (.7,.35) -- (.7,1.3);
\draw[very thick,postaction={decorate,decoration={markings,mark=at position .5 with {\arrow[line width=1pt ]{<}}}}] (.7,-.35) -- (.7,-1.3);
\draw [very thick,fill=white,inner sep=1pt,postaction={decorate,decoration={markings,mark=at position .15  with {\arrow[line width=1pt ]{<}},markings,mark=at position .6  with {\arrow[line width=1pt ]{<}},markings,mark=at position .9  with {\arrow[line width=1pt ]{<}}}}] (.7,0) circle (.35);
\node at (.9,1.2) {$\Gamma_+$};
\node at (.9,-1.2) {$\Gamma_-$};
\node at (2.4,.15) {$\Gamma_0$};
\node at (.3,.3) {$\partial \mathcal D$};
\draw[->] (-1,0) -- (2.8,0);
\draw[->] (0,-1.5) -- (0,1.5);
\draw[fill=black] (.7,0) circle (.02) node[above]{\footnotesize{${\alpha}(xt)$}};
\end{tikzpicture}
\caption{Jump contour $\Gamma_R=\partial \mathcal D \cup \Gamma_{-}\cup\Gamma_0\cup\Gamma_{+}$ for $R$.}
\label{figcontourR}
\end{figure}
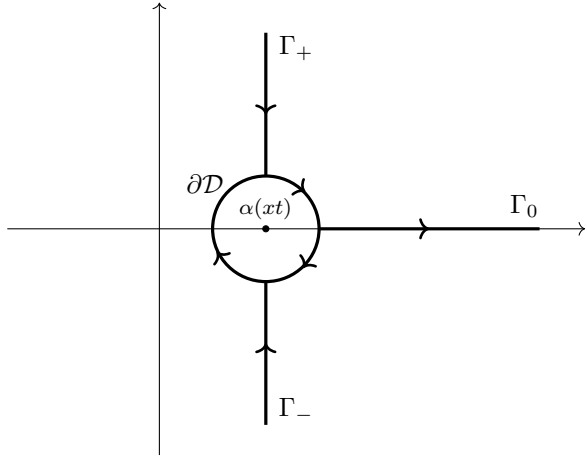

In particular, by construction of the global parametrix $P^{(\infty)}$ there is no jump on $(-\infty,{\alpha}(xt))$, and by construction of the local parametrix $P^{({\alpha})}$ there is no jump inside $\mathcal D$.
Furthermore, since both $S$ and $P^{(\alpha)}$ remain bounded near $\alpha(xt)$, $R$ has no pole at $\alpha(xt)$.

We now show that the jump for $R$ is close to the identity in the appropriate norms, namely that $R$ satisfies a \emph{small norm} RH problem.
To this end we introduce the matrix function $J_R:\Gamma_R\to\C^{2\times 2}$, piece-wise defined according to \eqref{eq:jumpRconditions1}--\eqref{eq:jumpRconditions} so that $R_+=R_-J_R$ on $\Gamma_R$.

\begin{lemma}
We have $\|J_R-I\|_{p} = \mathcal O(x^{-2})$ for $p=1,2,\infty$, uniformly in $K\leq x\leq\delta/t$; here $\|\cdot\|_p$ denotes the norm in $L^p(\Gamma_R,\C^{2\times 2})$ with respect to any matrix norm on $\C^{2\times 2}$.
\end{lemma}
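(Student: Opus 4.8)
The plan is to split $\Gamma_R$ into its bounded part $\partial\mathcal D$ and its unbounded parts $\Gamma_0\cup\Gamma_+\cup\Gamma_-$, and to estimate $J_R-I$ separately on each. Throughout we use that, uniformly for $K\le x\le\delta/t$, the matrices $P^{(\infty)}(\zeta;x,t)$ and $P^{(\infty)}(\zeta;x,t)^{-1}$ are $\bigO(\langle\zeta\rangle^{1/4})$ on $\Gamma_R$, with $\langle\zeta\rangle:=(1+|\zeta|^{2})^{1/2}$: indeed $P^{(\infty)}=\bigl(\begin{smallmatrix}1 & id_1\\ 0 & 1\end{smallmatrix}\bigr)(\zeta-\alpha(xt))^{\frac{\sigma_3}{4}}A^{-1}e^{-D\sigma_3}$ with $\det P^{(\infty)}=1$, where $d_1(x,t)$ is uniformly bounded by Proposition \ref{prop:d1}, $D(\zeta;x,t)$ is uniformly bounded on $\Gamma_R$ (it is $\bigO(\zeta^{-1/2})$ at infinity by \eqref{eq:di} and has the convergent behavior \eqref{eq:chi} near $\alpha(xt)$, with $\chi$ uniformly bounded by Proposition \ref{prop:chi}), and $\alpha(xt)$ stays in a fixed compact subset of $(0,+\infty)$ by Remark \ref{rmk:uniformity}; only the factor $(\zeta-\alpha(xt))^{\frac{\sigma_3}{4}}$ is unbounded, contributing the $\langle\zeta\rangle^{1/4}$.

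On $\Gamma_0\cup\Gamma_\pm$ we have $R=S(P^{(\infty)})^{-1}$ on both sides and $P^{(\infty)}$ is analytic there, so $J_R=P^{(\infty)}J_S(P^{(\infty)})^{-1}$, where $J_S$ is the jump matrix of $S$. Corollary \ref{corollary:smalljumpS} gives $\|J_S(\zeta)-I\|=\bigO\bigl(\exp(-\eta x^{2}\sqrt{|\zeta-\alpha(xt)|})\bigr)$ uniformly, hence, with the growth bound above, $\|J_R(\zeta)-I\|\le C\langle\zeta\rangle^{1/2}\exp(-\eta x^{2}\sqrt{|\zeta-\alpha(xt)|})$ on these contours. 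Since $|\zeta-\alpha(xt)|\ge\rho$ there and $\langle\zeta\rangle^{1/2}\le C_\rho\exp(\sqrt{|\zeta-\alpha(xt)|})$, for $x$ with $\eta x^{2}\ge 2$ this is at most $C_\rho\exp(-\tfrac{\eta}{2}x^{2}\sqrt{|\zeta-\alpha(xt)|})$, bounded on each contour by $C_\rho\exp(-\tfrac{\eta}{2}x^{2}\sqrt\rho)$, which is exponentially small in $x$. Parametrizing $\Gamma_0$ and $\Gamma_\pm$ by arclength and integrating the same bound (the substitution $w=\sqrt{|\zeta-\alpha(xt)|}$ makes the integrals elementary) shows that the contributions of $\Gamma_0\cup\Gamma_\pm$ to $\|J_R-I\|_p$ for $p=1,2,\infty$ are all $\bigO(e^{-cx^{2}})=o(x^{-2})$.

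The decisive estimate is on $\partial\mathcal D$, where $J_R=P^{(\alpha)}(P^{(\infty)})^{-1}$ by \eqref{eq:jumpRconditions}. Put $\xi:=x^{4/3}f(\zeta;xt)$ and insert into $P^{(\alpha)}=E\,\Phi_{\mathrm{Ai}}^{[j]}(\xi)\,e^{-\frac{x^{2}}{2}\phi\sigma_3}e^{-\frac{W}{2}\sigma_3}$ the standard large-$\xi$ asymptotics of the Airy model solution, which by construction of the normalization take the form $\Phi_{\mathrm{Ai}}^{[j]}(\xi)=\xi^{-\frac{\sigma_3}{4}}A\bigl(I+\Delta(\xi)\bigr)e^{-\frac{2}{3}\xi^{3/2}\sigma_3}$ with $\Delta(\xi)=\bigO(\xi^{-3/2})$ (see Appendix \ref{app:airy}). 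Using the defining relation \eqref{def of f and E} of $E$, which gives the identity $E\,\xi^{-\frac{\sigma_3}{4}}A=P^{(\infty)}e^{\frac{W}{2}\sigma_3}$, together with $\tfrac{2}{3}\xi^{3/2}=x^{2}f(\zeta;xt)^{3/2}=-\tfrac{x^{2}}{2}\phi(\zeta;xt)$, all the $\xi^{\pm\sigma_3/4}$ factors and the $\phi$-exponential factors telescope, and one obtains
\[
J_R(\zeta;x,t)=I+P^{(\infty)}(\zeta;x,t)\,e^{\frac{W}{2}\sigma_3}\,\Delta(\xi)\,e^{-\frac{W}{2}\sigma_3}\,P^{(\infty)}(\zeta;x,t)^{-1},\qquad\zeta\in\partial\mathcal D.
\]
On $\partial\mathcal D$ the matrices $P^{(\infty)}(\zeta;x,t)^{\pm1}$ and $e^{\pm\frac{W(\zeta;x)}{2}\sigma_3}$ are uniformly bounded (for the latter, $W$ is uniformly bounded on $\partial\mathcal D$ for $\rho$ small, by \eqref{eq:estimateW} and Assumptions \ref{assumptions}(4)), while $\|\Delta(\xi)\|=\bigO(\xi^{-3/2})$ and $|\xi^{-3/2}|=x^{-2}|f(\zeta;xt)|^{-3/2}=\bigO(x^{-2})$, because $f(\cdot;xt)$ is conformal on $\mathcal D$ with a single zero at $\alpha(xt)$, so $|f|$ is bounded below on $\partial\mathcal D$; the radius $\rho$ and this lower bound can be chosen uniformly in $xt\le\delta$ since the Taylor data in \eqref{eq:fTaylor}--\eqref{eq:seriesphi} depend only on $\alpha(xt)$, which stays in a fixed compact subset of $(0,+\infty)$. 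Hence $\|J_R-I\|_{L^\infty(\partial\mathcal D)}=\bigO(x^{-2})$, and since $\partial\mathcal D$ has fixed finite length the $L^1$ and $L^2$ norms over $\partial\mathcal D$ are $\bigO(x^{-2})$ as well. Adding the contributions of $\partial\mathcal D$ and of $\Gamma_0\cup\Gamma_\pm$ completes the proof.

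The main difficulty is not the algebra — which is the routine Airy-parametrix matching of the Deift--Zhou method — but tracking the uniformity: one must check that the conformal radius $\rho$, the lower bound for $|f|$ on $\partial\mathcal D$, the threshold on $x$ past which the Airy asymptotics are effective on $\partial\mathcal D$, and all the bounds on $P^{(\infty)}$, $E$, and $W$ near $\alpha(xt)$ can be taken independent of $x$ and $t$ throughout $K\le x\le\delta/t$. By Remark \ref{rmk:uniformity} this reduces to $\alpha(xt)$ ranging over a fixed compact subset of $(0,+\infty)$ in that region, together with the uniform estimates of Propositions \ref{prop:d1} and \ref{prop:chi} and Corollary \ref{corollary:smalljumpS}.
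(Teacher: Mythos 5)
Your proof is correct and follows essentially the same route as the paper's: split $\Gamma_R$ into $\partial\mathcal D$ and $\Gamma_0\cup\Gamma_\pm$, use Corollary~\ref{corollary:smalljumpS} together with the $\bigO(\langle\zeta\rangle^{1/4})$ growth of $P^{(\infty)}$ to get exponential decay on the unbounded pieces, and expand the Airy parametrix to order $\xi^{-3/2}$ on $\partial\mathcal D$, controlling uniformity via Remark~\ref{rmk:uniformity} and Propositions~\ref{prop:d1} and \ref{prop:chi}. The only cosmetic difference is that the paper keeps the combination $P^{(\infty)}e^{\frac{W}{2}\sigma_3}$ together and argues that this product (and hence its inverse) is bounded on $\partial\mathcal D$, whereas you bound $P^{(\infty)}$, $(P^{(\infty)})^{-1}$ and $e^{\pm\frac{W}{2}\sigma_3}$ separately; both assertions rest on the same uniform control of $\alpha(xt)$, $d_1$, $\chi$, and $W$ near $\alpha(xt)$, so the conclusion is the same.
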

\begin{proof}
By construction of the global parametrix $P^{(\infty)}$ we see that $P^{(\infty)}(\zeta;x,t)$ and $P^{(\infty)}(\zeta;x,t)^{-1}$ are $\bigO(\sqrt[4]{|\zeta|+1})$ for $|\zeta-{\alpha}(xt)|>\rho$, uniformly in $0<xt\leq \delta$.
It follows from this fact and from Corollary~\ref{corollary:smalljumpS} that there exists $\eta>0$ such that $J_R(\zeta)=I+\bigO(\sqrt{|\zeta|+1}\,e^{-\eta x^2\sqrt{|\zeta-{\alpha}(xt)|}})$ for $|\zeta-{\alpha}(xt)|>\rho$, uniformly in $0<xt\leq \delta$.
Up to possibly choosing a smaller $\eta>0$, it follows that the $L^p(\Gamma_R,\C^{2\times 2})$-norms of $J_R-I$ for $p=1,2,\infty$ are $\bigO(e^{-\eta x^2\sqrt{\rho}})$ uniformly in $0<xt\leq \delta$.
On the remaining part of the contour we have, as $x\to+\infty$,
\begin{align}
\nonumber
J_R(\zeta;x,t)-I&=P^{({\alpha})}(\zeta;x,t)P^{(\infty)}(\zeta;x,t)^{-1} -I
\\
\label{eq:estimatejumpRsec3}
&= \frac{P^{(\infty)}(\zeta;x,t)e^{\frac{W(\zeta;x)}{2}\sigma_{3}} \Phi_{\mathrm{Ai},1} e^{-\frac{W(\zeta;x)}{2}\sigma_{3}}P^{(\infty)}(\zeta;x,t)^{-1} }{f(\zeta;xt)^{3/2}}x^{-2}+ \bigO(x^{-4}),
\end{align}
uniformly for $\zeta \in \partial \mathcal D$.
Here we use \eqref{Asymptotics Airy}; for instance $\Phi_{\mathrm{Ai},1}=\frac 18\begin{pmatrix}
\frac 16 & i \\ i & -\frac 16
\end{pmatrix}$, see Appendix \ref{app:airy}.
It remains to note that \eqref{eq:estimatejumpRsec3} is uniform for $x$ large and $xt\leq \delta$.
First, $f$ is bounded away from zero, uniformly for {the relevant} values of $x,t$, {and $\zeta$}; this is seen by inspection of the series \eqref{eq:seriesphi} and by the definition of $f$ \eqref{def of f and E}.
Similarly, by Propositions \ref{prop:d1} and \ref{prop:chi} we can show that $P^{(\infty)}(\zeta;x,t)e^{\frac{W(\zeta;x)}{2}\sigma_{3}} $ is bounded for $\zeta\in\partial\mathcal D$, uniformly for large $x$ and $xt\leq \delta$.
\end{proof}
It should be noted that the contour $\Gamma_R$ depends on {$xt$} through $\alpha(xt)$.
This is not so convenient to apply 
the standard theory of small-norm RH problems; however one observes easily that the jump contour is independent of $x$ and $t$ after the uniformly bounded shift $\widetilde\zeta=\zeta-{\alpha}(xt)$, so that the standard theory applies to the RH problem in that variable. Back in our variable $\zeta$,
we obtain that, as $x \to + \infty$,
\begin{align}\label{asymp for R easy sector}
R(\zeta;x,t) = I + \frac{1}{x^{2}}R^{(1)}(\zeta;x,t) + \bigO(x^{-4}),
\qquad R^{(1)}(\zeta;x,t) = \frac{1}{2\pi i}\int_{\partial \mathcal D} \frac{J_{R}^{(1)}(s;x,t)}{s-\zeta}ds
\end{align}
uniformly in $\zeta$, where (see \eqref{eq:estimatejumpRsec3})
\begin{align}
J_{R}^{(1)}(\zeta;x,t) := \frac{1}{f(\zeta;xt)^{3/2}}P^{(\infty)}(\zeta;x,t)e^{\frac{W(\zeta;x)}{2}\sigma_{3}} \Phi_{\mathrm{Ai},1}e^{-\frac{W(\zeta;x)}{2}\sigma_{3}} P^{(\infty)}(\zeta;x,t)^{-1}.
\end{align}
By the relations
\begin{align*}
&(f^{3/2}(\zeta;xt))_+=-(f^{3/2}(\zeta;xt))_-,\qquad
\begin{pmatrix} 0 & 1 \\ -1 & 0 \end{pmatrix}\Phi_{\mathrm{Ai},1}\begin{pmatrix} 0 & -1 \\ 1 & 0 \end{pmatrix}=-\Phi_{\mathrm{Ai},1},
\\
&(P^{(\infty)}(\zeta;x,t)e^{\frac{W(\zeta;x)}{2}\sigma_{3}})_+=(P^{(\infty)}(\zeta;x,t)e^{\frac{W(\zeta;x)}{2}\sigma_{3}})_-\begin{pmatrix} 0 & 1 \\ -1 & 0 \end{pmatrix},
\end{align*}
we infer that $J_{R}^{(1)}(\zeta;x,t)$ has no jump for ${\alpha}(xt)-\rho<\zeta<{\alpha}(xt)$, hence $J_R^{(1)}(\zeta;x,t)$ is meromorphic for $\zeta\in\mathcal D$ with a pole at $\zeta={\alpha}(xt)$ only, which is readily seen to be of second order.
Thus we can compute $R^{(1)}$ in \eqref{asymp for R easy sector} for $\zeta\in\C\setminus\overline{\mathcal D}$ by extracting the polar part of $J^{(1)}_R$ at $\zeta={\alpha}(xt)$ (note that $\partial\mathcal D$ is oriented clockwise); namely, if $|\zeta-\alpha(xt)|>\rho$,
\begin{align}
\nonumber
& R^{(1)}(\zeta;x,t) ={} \frac{5}{48f_1^{3/2}} \begin{pmatrix}
-d_{1} & id_{1}^{2} \\ i & d_{1}
\end{pmatrix}\frac 1{(\zeta-{\alpha}(xt))^2} 
\\
\label{eq:Rp1p}
&+ \frac{1}{32 f_{1}^{5/2}}\begin{pmatrix}
-8\chi f_1(1+d_{1}\chi)+5d_{1}f_{2} & \frac{i}{3}\big( 14f_1+48 d_{1}\chi f_1 + 24d_{1}^{2}\chi^{2}f_1-15d_1^2f_{2} \big) \\
i \big( 8\chi^{2}f_1-5f_{2} \big) & 8\chi f_1(1+d_{1}\chi)-5d_{1}f_{2}
\end{pmatrix}\frac 1{\zeta-{\alpha}(xt)},
\end{align}
where $f_1=f_1(xt),f_2=f_2(xt)$ are given in \eqref{eq:fTaylor}, $d_1=d_1(x,t)$ in \eqref{eq:di}, and $\chi=\chi(x,t)$ in \eqref{eq:chi}.

Since \eqref{asymp for R easy sector} is uniform in $\zeta$ we also obtain
\begin{equation}
\label{R1inf}
R_1(x,t) =\frac{1}{32 x^2 f_{1}^{3/2}}\begin{pmatrix}
-8\chi f_1(1+d_{1}\chi)+5d_{1}f_{2} & \frac{i}{3}\big( 14f_1+48 d_{1}\chi f_1 + 24d_{1}^{2}\chi^{2}f_1-15d_1^2f_{2} \big) \\
i \big( 8\chi^{2}f_1-5f_{2} \big) & 8\chi f_1(1+d_{1}\chi)-5d_{1}f_{2}
\end{pmatrix}+\bigO(x^{-4}).
\end{equation}

\subsection{Asymptotics for $u_\sigma(x,t)$ and $\partial_x\log Q_\sigma(x,t)$}

\begin{proposition}
\label{prop:asympusect3}
As $x\to+\infty$ we have, uniformly in $0<xt\leq \delta$ for any $\delta>0$,
\begin{align*}
u_\sigma(x,t) = \frac{x^2\alpha(xt)}{2}-\frac{\log c_+'}{2\pi x t\sqrt{\alpha(xt)}+{2}c_+}+\bigO(x^{-2}),
\end{align*}
or equivalently
\begin{align*}
u_\sigma(x,t) = \frac{x}{2t}a_0\left(\frac{\pi^2}{c_+^2}xt\right)+\frac{1}{2\sqrt{xt}}a_1\left(\frac{\pi^2}{c_+^2}xt\right)+\bigO(x^{-2}),
\end{align*}
with $a_0$ and $a_1$ as in Theorem \ref{theorem:main}.
\end{proposition}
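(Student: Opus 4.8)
The plan is to read off $p(x,t)$ and $q(x,t)$ from the $\zeta\to\infty$ behaviour of $T(\zeta;x,t)$ and then feed them into \eqref{eq:upq}. Since $R=S\,(P^{(\infty)})^{-1}$ for $\zeta$ outside $\overline{\mathcal D}$, comparing the expansions \eqref{asT}, \eqref{asymp for Pinf} and \eqref{asymp for R} as $\zeta\to\infty$ produces the exact identity $T_1(x,t)=R_1(x,t)+P^{(\infty)}_1(x,t)$. From the explicit formulas \eqref{eq:T1}, \eqref{P1inf} and \eqref{R1inf} I would read off the $(2,1)$- and $(1,1)$-entries, solve the $(2,1)$-entry for
\[
p(x,t)=x\,d_1(x,t)-x^3 g_1(xt)-i x\,(R_1)_{21}(x,t),
\]
and substitute into the $(1,1)$-entry to obtain $q(x,t)$ as a polynomial in $x$ whose coefficients are built from $\alpha(xt)$, $g_1(xt)$, $d_1(x,t)$, $(R_1)_{11}$ and $(R_1)_{21}$.

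The central step is then an order-by-order bookkeeping in $u_\sigma=-2q-p^2$. The a priori dominant contributions, of orders $x^6 g_1^2$, $x^4 g_1 d_1$, $x^4 g_1 (R_1)_{21}$ and $x^2 d_1^2$, cancel identically, and using $x^2(R_1)_{21}^2=\bigO(x^{-2})$ one is left with
\[
u_\sigma(x,t)=\frac{x^2\alpha(xt)}2-2x^2(R_1)_{11}(x,t)+2ix^2 d_1(x,t)\,(R_1)_{21}(x,t)+\bigO(x^{-2}).
\]
By the small-norm analysis of Section \ref{subsection:small norm with xt small} (equations \eqref{asymp for R easy sector}–\eqref{eq:Rp1p} and \eqref{R1inf}, which stay uniform for $K\le x\le\delta/t$ after the uniformly bounded shift $\widetilde\zeta=\zeta-\alpha(xt)$), $x^2(R_1)_{11}$ and $x^2(R_1)_{21}$ equal, up to $\bigO(x^{-2})$, the corresponding entries of the residue matrix in \eqref{eq:Rp1p}, expressed through $f_1=f_1(xt)$, $f_2=f_2(xt)$, $d_1=d_1(x,t)$, $\chi=\chi(x,t)$. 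Inserting these, the $f_2$-terms and the $\chi^2$-terms cancel, the dependence on $d_1$ drops out entirely, and one obtains
\[
u_\sigma(x,t)=\frac{x^2\alpha(xt)}2+\frac{\chi(x,t)}{2\,f_1(xt)^{3/2}}+\bigO(x^{-2}).
\]

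It then remains to substitute asymptotics for the surviving coefficients. By Proposition \ref{prop:chi}, $\chi(x,t)=-\tfrac{\log c_+'}{\pi\sqrt{\alpha(xt)}}+\bigO(x^{-2})$; by \eqref{eq:fTaylor}, $f_1(xt)^{3/2}=\tfrac{c_+}{\pi\sqrt{\alpha(xt)}}+xt$; and by Proposition \ref{prop:d1} and Remark \ref{rmk:uniformity}, $d_1$, $\chi$, $f_1$ are all $\bigO(1)$ and $f_1^{3/2}$ is bounded away from $0$ uniformly for $xt\le\delta$ (because $\alpha(xt)$ is), so the error stays uniformly $\bigO(x^{-2})$. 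This yields
\[
u_\sigma(x,t)=\frac{x^2\alpha(xt)}2-\frac{\log c_+'}{2\pi xt\sqrt{\alpha(xt)}+2c_+}+\bigO(x^{-2}),
\]
the first asserted formula. For the second, I would use \eqref{def:a} in the form $\alpha(xt)=\bigl(\sqrt{c_+^2+\pi^2xt}-c_+\bigr)^2\big/\bigl(\pi^2(xt)^2\bigr)$, so that $\pi xt\sqrt{\alpha(xt)}=\sqrt{c_+^2+\pi^2xt}-c_+$; writing $y=\pi^2xt/c_+^2$ gives $xt\,\alpha(xt)=(\sqrt{y+1}-1)^2/y=a_0(y)$ and $2\pi xt\sqrt{\alpha(xt)}+2c_+=2c_+\sqrt{1+y}$, whence $\tfrac12 x^2\alpha(xt)=\tfrac{x}{2t}a_0\bigl(\tfrac{\pi^2}{c_+^2}xt\bigr)$ and $-\log c_+'\big/\bigl(2\pi xt\sqrt{\alpha(xt)}+2c_+\bigr)=\tfrac{1}{2\sqrt{xt}}a_1\bigl(\tfrac{\pi^2}{c_+^2}xt\bigr)$. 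I expect the only delicate point to be the second step: the surviving $\bigO(1)$ term of $u_\sigma$ is a small remainder of several mutually cancelling contributions of orders $x^6$, $x^4$ and $x^2$, so these cancellations must be exact identities rather than asymptotic statements, and must be carried through before the estimates for $d_1$, $\chi$ and $f_1$ are invoked.
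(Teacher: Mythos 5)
Your proposal is correct and follows essentially the same route as the paper: both use $T_1=P_1^{(\infty)}+R_1$ (from $T=RP^{(\infty)}$ outside $\mathcal D$) together with the explicit formulas \eqref{eq:T1}, \eqref{P1inf}, \eqref{R1inf}, and Proposition \ref{prop:chi}. The only cosmetic difference is that you solve for $p$ and $q$ individually and then perform the order-by-order cancellations in $-2q-p^2$ by hand, whereas the paper packages the same algebra compactly via the exact identity $u_\sigma(x,t)=x^2\bigl(T_{1,21}(x,t)^2-2T_{1,11}(x,t)\bigr)$, which makes the cancellations of the $x^6g_1^2$, $x^4g_1 p$, and $d_1$-dependent terms structural rather than bookkeeping; your own remark at the end that these cancellations must be exact identities is precisely what this compact identity captures. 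Worth noting: your intermediate expression $u_\sigma=\tfrac{x^2\alpha}{2}+\tfrac{\chi}{2f_1^{3/2}}+\bigO(x^{-2})$ is correct, and to obtain it you must use the $f_1^{5/2}$ prefactor coming from \eqref{eq:Rp1p} (the displayed \eqref{R1inf} contains a typo $f_1^{3/2}$ in the prefactor; compare also \eqref{eq:R1chif1f2}, which is consistent with $f_1^{5/2}$).
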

\begin{proof}
Combining \eqref{eq:upq} {and \eqref{eq:T1}}, we obtain the identity
\begin{equation}
u_\sigma(x,t) = x^2(T_{1,21}(x,t)^2-2T_{1,11}(x,t)).
\end{equation}
For $\zeta$ satisfying $\re\zeta>\alpha(xt)+\rho$, using \eqref{eq:StoT} {and \eqref{def of R section 3}}, we have
\begin{align}
T(\zeta;x,t)=S(\zeta;x,t)=R(\zeta;x,t)P^{(\infty)}(\zeta;x,t).
\end{align}
Substituting the large $\zeta$ asymptotics \eqref{asT} for $T$, \eqref{asymp for R} for $R$, and \eqref{asymp for Pinf} for $P^{(\infty)}$, we get
\begin{align}
\label{eq:TRP}
T_1(x,t)=P_1^{(\infty)}(x,t)+R_1(x,t).
\end{align}
We then complete the proof using \eqref{P1inf}, \eqref{R1inf}, and Proposition \ref{prop:chi}.
\end{proof}

\begin{proposition}
We have the identity
\begin{align}
\label{eq:finaldiffid}
\partial_x\log Q_\sigma(x,t)=-\frac{x^2}{4t}-x^3g_1(xt)+xd_1(x,t)-ix R_{1,21}(x,t),
\end{align}
where $g_1(xt)$ is given in \eqref{def:g1}, $d_1(x,t)$ in \eqref{eq:di}, and $R_1(x,t)$ in \eqref{asymp for R}.
\end{proposition}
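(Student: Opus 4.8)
The plan is to start from the differential identity \eqref{xdiffid}, which already gives $\partial_x\log Q_\sigma(x,t)=p(x,t)-\frac{x^2}{4t}$, so that the whole content of the proposition reduces to the identity
\[
p(x,t)=-x^3g_1(xt)+xd_1(x,t)-ixR_{1,21}(x,t).
\]
I would obtain this by following the coefficient $p(x,t)$ of the original RH problem through the chain of transformations $\widehat\Psi\to T\to S\to R$, reading off $(2,1)$-entries at each stage.

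First I would extract the $(2,1)$-entry of the matrix $T_1(x,t)$ displayed in \eqref{eq:T1}, namely $T_{1,21}(x,t)=ix^2g_1(xt)+\frac{ip(x,t)}{x}$, and solve for $p$ to get $p(x,t)=-ixT_{1,21}(x,t)-x^3g_1(xt)$. (The formula \eqref{eq:T1} is itself a direct consequence of the definition of $T$ in terms of $\widehat\Psi$, the large-$\zeta$ behaviour (c) in the construction of $g$, and the entry $\widehat\Psi_{1,21}=ip/x$ from \eqref{eq:hatPsi1}; I take it as given.)

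Next I would invoke the relation $T_1(x,t)=P_1^{(\infty)}(x,t)+R_1(x,t)$ from \eqref{eq:TRP}, which was already derived in the proof of Proposition \ref{prop:asympusect3} by matching the $\zeta\to\infty$ expansions in $T=RP^{(\infty)}$ on the sector $\re\zeta>\alpha(xt)+\rho$. Taking the $(2,1)$-entry and using $P_{1,21}^{(\infty)}(x,t)=id_1(x,t)$ from \eqref{P1inf}, this gives $T_{1,21}(x,t)=id_1(x,t)+R_{1,21}(x,t)$. Substituting into the previous display, and using $-i^2=1$, yields $p(x,t)=-ix\big(id_1(x,t)+R_{1,21}(x,t)\big)-x^3g_1(xt)=xd_1(x,t)-ixR_{1,21}(x,t)-x^3g_1(xt)$, and then \eqref{xdiffid} produces \eqref{eq:finaldiffid}.

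There is no genuine obstacle here; the argument is pure bookkeeping once \eqref{eq:T1}, \eqref{P1inf} and \eqref{eq:TRP} are in hand. The only place where one must be attentive is keeping track of the powers of $x$ introduced by the rescaling $z=x^2\zeta$ and of the conjugating matrices in the definitions of $\widehat\Psi$, $T$, $S$ and $R$ — but all of this has already been recorded in \eqref{eq:hatPsi1}, \eqref{eq:T1}, \eqref{P1inf} and \eqref{eq:TRP}, so no new computation is needed.
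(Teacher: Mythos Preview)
Your proposal is correct and follows essentially the same approach as the paper's proof: combine \eqref{xdiffid} with the $(2,1)$-entry of \eqref{eq:T1} to get $\partial_x\log Q_\sigma(x,t)=-\frac{x^2}{4t}-x^3g_1(xt)-ixT_{1,21}(x,t)$, then use \eqref{eq:TRP} and \eqref{P1inf} to replace $T_{1,21}$ by $id_1+R_{1,21}$.
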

\begin{proof}
Combining \eqref{xdiffid} and \eqref{eq:T1}, we obtain the identity
\begin{align}
\partial_x\log Q_\sigma(x,t)=-\frac{x^2}{4t}-x^3g_1(xt)-ixT_{1,21}(x,t).
\end{align}
By \eqref{eq:TRP} we get
\begin{align}
\partial_x\log Q_\sigma(x,t)=-\frac{x^2}{4t}-x^3g_1(xt)-ixP_{1,21}^{(\infty)}(x,t)-ixR_{1,21}(x,t),
\end{align}
and it suffices to use \eqref{P1inf} to obtain the result.
\end{proof}

We now compute {the} large $x$ asymptotics for the above differential identity, and we write the asymptotics as a derivative, so that we will be able to easily integrate it later.

\begin{corollary}\label{corollary:Qsec3}
As $x\to+\infty$ we have, uniformly in $0<xt\leq \delta$ for any $\delta>0$,
\begin{equation}
\label{eq:finalsec2}
\partial_x\log Q_\sigma(x,t)=\partial_x\left[-\frac{c_+^6}{\pi ^6t^4}F_1\left(\frac{\pi^2}{c_+^2}xt\right)-\frac{c_+^3\log c'_+}{\pi^4t^2}F_2\left(\frac{\pi^2}{c_+^2}xt\right)+F_3\left(\frac{\pi^2}{c_+^2}xt\right)\right]+\bigO(x^{-3})
\end{equation}
{with $F_{1}$, $F_{2}$, and $F_{3}$ as in Theorem \ref{theorem:main}.}
\end{corollary}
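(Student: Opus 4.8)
The starting point is the differential identity \eqref{eq:finaldiffid}, which expresses $\partial_x\log Q_\sigma(x,t)$ as a sum of four terms; two of them are explicit and two have already been analysed asymptotically. The plan is to insert the relevant asymptotics and to recognise the outcome, order by order in $x$, as the $x$-derivative of the bracket in \eqref{eq:finalsec2}. Throughout I write $y=\frac{\pi^2}{c_+^2}xt$, so that $\partial_x=\frac{\pi^2t}{c_+^2}\partial_y=\frac yx\partial_y$ and $F_2=F_1'$; I use that \eqref{def:a} simplifies, via $2+y-2\sqrt{1+y}=(\sqrt{1+y}-1)^2$, to $\alpha(xt)=\frac{\pi^2(\sqrt{1+y}-1)^2}{c_+^2y^2}$, equivalently $\sqrt{\alpha(xt)}=\frac{\pi}{c_+(\sqrt{1+y}+1)}$, together with the elementary identities $\frac{y}{\sqrt{1+y}+1}=\sqrt{1+y}-1$ and $F_2'(y)=\sqrt{1+y}-1$.

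First I would dispose of the algebraic pair $-\frac{x^2}{4t}-x^3g_1(xt)$. Substituting the closed form \eqref{def:g1} for $g_1$ and the expression for $\alpha(xt)$ above, a short computation (most transparent after the substitution $w=\sqrt{1+y}-1$, $y=w^2+2w$) shows this pair equals $-\frac{c_+^4}{\pi^4t^3}F_2(y)=\partial_x\big[-\frac{c_+^6}{\pi^6t^4}F_1(y)\big]$. Next, for $xd_1(x,t)$ I would use Proposition \ref{prop:d1}, which gives $xd_1(x,t)=-\frac{x\sqrt{\alpha(xt)}}{\pi}\log c_+'+\frac{j_\sigma}{x\sqrt{\alpha(xt)}}+\bigO(x^{-3})$; rewriting the leading term through $\sqrt{\alpha(xt)}$ and $x=\frac{c_+^2y}{\pi^2t}$ turns it into $-\frac{c_+\log c_+'}{\pi^2t}(\sqrt{1+y}-1)=\partial_x\big[-\frac{c_+^3\log c_+'}{\pi^4t^2}F_2(y)\big]$, while the subleading term $\frac{j_\sigma}{x\sqrt{\alpha(xt)}}$ is exactly the part of $\partial_x F_3(y)$ proportional to $j_\sigma$.

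It then remains to handle $-ixR_{1,21}(x,t)$, which by \eqref{R1inf} equals $\frac{8\chi^2f_1-5f_2}{32xf_1^{3/2}}+\bigO(x^{-3})$, a quantity of size $\bigO(x^{-1})$. Inserting $\chi^2=\frac{\log^2c_+'}{\pi^2\alpha(xt)}+\bigO(x^{-2})$ from Proposition \ref{prop:chi} and the explicit Taylor coefficients \eqref{eq:fTaylor} of $f$, and expressing everything through $y$, this reduces to an explicit elementary function of $y$ times $t$, which one checks against $F_3'(y)$ to account for the remaining part of $\partial_x F_3(y)$ (the one proportional to $\log^2c_+'$ and to a numerical constant), up to $\bigO(x^{-3})$. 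Adding the three contributions yields \eqref{eq:finalsec2}. Uniformity in $0<xt\leq\delta$ is inherited from the uniform error terms in Propositions \ref{prop:d1} and \ref{prop:chi} together with the fact, recorded in Remark \ref{rmk:uniformity}, that $\alpha(xt)$ --- and hence $f_1,f_2$ and the other elementary factors appearing --- stays bounded and bounded away from $0$ on this range.

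The only genuine obstacle is the bookkeeping in the last two paragraphs: one must verify that the pieces of $\partial_x F_3(y)$ produced by the subleading term of $d_1$ and by $R_{1,21}$ combine correctly, which hinges on the precise shape of $f_1,f_2,\chi$ and $\alpha(xt)$. The most economical route is to differentiate the closed forms \eqref{eq:thm-F1F2}--\eqref{eq:thm-F3} once, substitute $y=\frac{\pi^2}{c_+^2}xt$ and $\sqrt{\alpha(xt)}=\frac{\pi}{c_+(\sqrt{1+y}+1)}$ throughout, and match term by term; no integration is needed here, the antidifferentiation being carried out only later when \eqref{thm:asQ} is assembled.
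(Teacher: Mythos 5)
Your proposal follows the paper's own proof very closely: both start from the differential identity \eqref{eq:finaldiffid}, dispose of the algebraic pair $-\frac{x^2}{4t}-x^3g_1(xt)$ exactly, and then insert the asymptotics from Propositions \ref{prop:d1} and \ref{prop:chi} into the $xd_1$ and $-ixR_{1,21}$ contributions, matching the result against the derivatives of $F_1$, $F_2$, $F_3$. The algebraic simplifications you note ($\alpha(xt)=\frac{\pi^2(\sqrt{1+y}-1)^2}{c_+^2y^2}$, $\sqrt{\alpha(xt)}=\frac{\pi}{c_+(\sqrt{1+y}+1)}$, $F_2'(y)=\sqrt{1+y}-1$) are exactly what is needed, and your observation that the $j_\sigma$ piece of $\partial_xF_3$ comes from $xd_1$ while the $\log^2c_+'$ and purely numerical pieces come from $-ixR_{1,21}$ reproduces the paper's decomposition.

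One detail worth flagging: you quote from \eqref{R1inf} the expression $-ixR_{1,21}(x,t)=\frac{8\chi^2f_1-5f_2}{32xf_1^{3/2}}+\bigO(x^{-3})$, but the prefactor $\frac{1}{32x^2f_1^{3/2}}$ displayed in \eqref{R1inf} is a typo in the paper: comparing with \eqref{eq:Rp1p}, the coefficient of $1/(\zeta-\alpha(xt))$ there carries the prefactor $\frac{1}{32f_1^{5/2}}$, so the correct formula is $R_1(x,t)=\frac{1}{32x^2f_1^{5/2}}(\cdots)+\bigO(x^{-4})$, giving
\begin{align*}
-ixR_{1,21}(x,t)=\frac{8\chi^2f_1-5f_2}{32xf_1^{5/2}}+\bigO(x^{-3})=\frac{\chi^2}{4xf_1^{3/2}}-\frac{5f_2}{32xf_1^{5/2}}+\bigO(x^{-3}),
\end{align*}
which is indeed what the paper's proof then uses. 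With the uncorrected power of $f_1$, the term-by-term check against $F_3'(y)$ that you propose would fail by a factor of $f_1$, so when actually carrying out the bookkeeping you would need to trace this back to \eqref{eq:Rp1p}. Apart from that inherited typo, your route is essentially identical to the paper's; the only stylistic difference is that you suggest differentiating the closed forms of $F_1,F_2,F_3$ and matching, rather than presenting antiderivatives of each contribution as the paper does, which is a harmless reorganization of the same computation.
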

\begin{proof}
By \eqref{def:g1} and \eqref{def:a} it is straightforward to check that
\begin{align}
\frac{x^2}{4t}+x^3g_1(xt) = \partial_x\left[\frac{c_+^6}{\pi ^6t^4}F_1\left(\frac{\pi^2}{c_+^2}xt\right)\right].
\end{align}
Moreover, by Proposition \ref{prop:d1} and an explicit integration, as $x\to+\infty$ uniformly in $0<xt\leq\delta$, we have
\begin{align}
xd_1(x,t) = \partial_x\left[-\frac{c_+^3\log c'_+}{\pi^4t^2}F_2\left(\frac{\pi^2}{c_+^2}xt\right)+\frac{2c_+j_\sigma}\pi\left(\log \left(\sqrt{1+\frac{\pi^2}{c_+^2}xt}-1\right)+\sqrt{1+\frac{\pi^2}{c_+^2}xt}\right)\right]+\bigO(x^{-3}). 
\end{align}
Next, using \eqref{R1inf} we get
\begin{align}
\label{eq:R1chif1f2}
-i x R_{1,21}(x,t) =\frac {\chi^2(x,t)}{4xf_1(xt)^{3/2}}
-\frac{5f_2(xt)}{32xf_1(xt)^{5/2}} {+\bigO(x^{-3})}.
\end{align}
It follows from Proposition \ref{prop:chi} that, as $x\to+\infty$, we have
\begin{align}
-i x R_{1,21}(x,t)=\frac {\log^2c_+'}{4\pi^2x{\alpha}(xt)f_1(xt)^{3/2}}
-\frac{5f_2(xt)}{32xf_1(xt)^{5/2}}+\bigO(x^{-3}),
\end{align}
which can be explicitly integrated as
\begin{align}
-i x R_{1,21}(x,t) = \partial_x\left[\left(\frac{\log ^2c'_+}{2\pi^2}+\frac 1{24}\right) \log\left(\sqrt{1+\frac{\pi^2}{c_+^2}xt}-1\right)-\frac 1{48}\log\left(1+\frac{\pi^2}{c_+^2}xt\right)\right]+\bigO(x^{-3}).
\end{align}
The proof is complete by \eqref{eq:finaldiffid}.
\end{proof}

\section{Asymptotic analysis for $\Psi$ if $x \geq K, \, x\geq \delta/t${, $t \leq t_{0}$}}\label{section:4}
In this section, we analyze the solution to the RH problem for $\Psi$ asymptotically as $x\to\infty$ and $xt$ bounded from below. This the same regime as the one studied in \cite{CaCl2019} for $\sigma(r)=\frac{1}{1+e^{-r}}$. Here, we extend the analysis of~\cite{CaCl2019} to any $\sigma$ satisfying Assumptions \ref{assumptions}, and we compute further sub-leading terms in the asymptotic expansion.

Throughout this section, we will construct a number of auxiliary functions which will carry the same names $g,\phi, T, S, R$ as their natural counterparts in Section \ref{section:3}.
We emphasize however that these functions are not identical to the ones constructed in Section \ref{section:3}.

\medskip

\subsection{Rescaling}
In this section, both $x$ and $xt$ are large while $t$ remains bounded, and we therefore need another change of variables than in Subsection \ref{subsection:rescaling section 3}.
Instead of working in the $z$-plane, it will be more convenient to make the change of variable $w=\frac{t}{x} z$ and to work in the $w$-plane. {Indeed, this will transform the functions $e^{\pm (-\frac{2}{3}tz^{\frac{3}{2}}+xz^{\frac{1}{2}})}$, which appear in the large $z$ asymptotics of $\Psi(z;x,t)$, into $e^{\pm x^{3/2}t^{-1/2}(-\frac{2}{3}w^{\frac{3}{2}}+w^{\frac{1}{2}})}$.}
We define
\begin{align*}
\widetilde{\Psi}(w;x,t) := \bigg(\frac{t}{x}\bigg)^{\frac{\sigma_{3}}{4}} \Psi(z=w \frac{x}{t}; x,t).
\end{align*}
Then, the RH problem for $\Psi$ transforms into the following conditions for $\widetilde\Psi$.
\subsubsection*{RH problem for $\widetilde{\Psi}$}
\begin{itemize}
\item[(a)] $\widetilde{\Psi} : \mathbb{C}\setminus \mathbb{R} \to \mathbb{C}^{2\times 2}$ is analytic. 
\item[(b)] $\widetilde{\Psi}$ has continuous boundary values on $\mathbb{R}$, and they satisfy the jump relation
\begin{align*}
\widetilde{\Psi}_{+}(w;x,t) = \widetilde{\Psi}_{-}(w;x,t) \begin{pmatrix}
1 & \frac{1}{F( \tfrac{x}{t} w)} \\
0 & 1
\end{pmatrix}, \qquad w \in \mathbb{R},
\end{align*}
where we recall that $F=\frac{1}{1-\sigma}$.
\item[(c)] As $w \to \infty$, we have
\begin{multline*}
\widetilde{\Psi}(w;x,t) = \bigg( I + \frac{1}{w}\widetilde{\Psi}_{1}(x,t) + \bigO(w^{-2}) \bigg) w^{\frac{1}{4}\sigma_{3}}A^{-1}e^{x^{3/2}t^{-1/2}(-\frac{2}{3}w^{\frac{3}{2}}+w^{\frac{1}{2}})\sigma_{3}}
\\
{\times} \begin{cases}
I, & |\arg w | < \pi -\varepsilon, \\
\begin{pmatrix}
1 & 0 \\
\mp 1 & 1
\end{pmatrix}, & \pi - \varepsilon < \pm \arg w < \pi,
\end{cases}
\end{multline*}
for any $\varepsilon \in (0,\frac{\pi}{2})$, where 
\begin{align}
\widetilde{\Psi}_{1}(x,t) = \frac{t}{x} \begin{pmatrix}
q(x,t) & -i t^{1/2}x^{-1/2}r(x,t) \\
i t^{-1/2}x^{1/2} p(x,t) & -q(x,t)
\end{pmatrix}.\label{eq:Psi1tildepq}
\end{align}
\end{itemize}
\subsection{Construction of the $g$-function}
Similarly as in \cite{CaCl2019}, the key step in the asymptotic analysis of the RH problem is the construction of a $g$-function, which satisfies the following conditions.

\subsubsection*{Conditions for $g$}
\begin{itemize}
\item[(a)] $g(w;x,t)$ is analytic in $w\in\mathbb C\setminus(-\infty,a(x,t)]$,
\item[(b)] $g_+(w;x,t)+g_-(w;x,t)+\frac{t^{1/2}}{x^{3/2}}V_0(x,t)=\frac{t^{1/2}}{x^{3/2}}\log F(\frac{x}{t} w)$ for $w\in(-\infty,a(x,t))$,
\item[(c)] $g(w;x,t)=-\frac{2}{3}w^{\frac{3}{2}}+w^{\frac{1}{2}}+g_{0}(x,t)+g_1(x,t) w^{-1/2}+\mathcal O(w^{-3/2})$ as $w\to\infty$,
\item[(d)] $g_{+}(w;x,t)-g_{-}(w;x,t) = \bigO((w-a(x,t))^{3/2})$ as $w \to a(x,t)$, $w<a(x,t)$.
\end{itemize}
Here, $a(x,t)>0$, $g_0(x,t)$, $g_1(x,t)$ and $V_{0}=V_0(x,t)$ are to be determined.
Similarly as in Section \ref{section:3}, it will be convenient to first construct $g'$, which satisfies the conditions below, and then to integrate.
 \subsubsection*{Conditions for $g'$}
\begin{itemize}
\item[(a)] $g'(w;x,t)$ is analytic in $w\in\mathbb C\setminus(-\infty,a(x,t)]$,
\item[(b)]\label{bforgprime} $g_+'(w;x,t)+g_-'(w;x,t)=\frac{1}{\sqrt{xt}} (\log F)'(\frac{x}{t} w)$ for $w\in(-\infty,a(x,t))$,
\item[(c)] $g'(w;x,t)=-w^{1/2}+\frac{1}{2}w^{-\frac{1}{2}}-\frac 12 g_1(x,t)w^{-\frac{3}{2}}+\mathcal O(w^{-5/2})$ as $w\to\infty$, 
\item[(d)] $g_{+}'(w;x,t)-g_{-}'(w;x,t) = \bigO((w-a(x,t))^{1/2})$ as $w \to a(x,t)$, $w<a(x,t)$.
\end{itemize}
We can construct $g'$ directly in terms of a Cauchy integral: set
\begin{align}
g'(w;x,t)&=\frac{a(x,t)+1-2w}{2(w-a(x,t))^{1/2}}+\frac{1}{2\pi (w-a(x,t))^{1/2}} \frac{1}{\sqrt{xt}} \int_{-\infty}^{a(x,t)} (\log F)'(\tfrac{x}{t}\zeta) \sqrt{a(x,t)-\zeta}\frac{d\zeta}{\zeta-w}, \label{gprime 1 CC}
\end{align}
with $(w-a(x,t))^{1/2}$ analytic in $\mathbb C\setminus(-\infty,a(x,t)]$ and positive for $w>a(x,t)$, and where $\sqrt{a(x,t)-\zeta}$ denotes the positive square root.
Then, it is straightforward to verify that $g'$ satisfies conditions (a), (b), and (c) {(with $g_{1}(x,t)$ given by \eqref{eq:identityg1} below)}.
In order to achieve also the endpoint condition (d), we need to take $a(x,t)$ such that
\begin{align}\label{a def CC}
\int_{-\infty}^{a(x,t)}(\log F)'(\tfrac{x}{t} \zeta)\frac{d\zeta}{\sqrt{a(x,t)-\zeta}}=\pi \sqrt{xt} (1-a(x,t)).
\end{align}
For later convenience we study in detail the behavior of the endpoint $a(x,t)$.
\begin{proposition}
\label{prop:endpointCC}
For any $\sigma$ satisfying Assumptions \ref{assumptions} and for any $x,t>0$ there exists a unique real solution $a(x,t)$ to the equation \eqref{a def CC}.
There exists $K>0$, depending on $\sigma$ only, such that for all $\delta>0$ there exist $a_{\rm min},a_{\rm max}>0$, depending on $\delta$ and $\sigma$ only, such that $a_{\rm min}<a(x,t)<a_{\rm max}$ for all $x,t$ satisfying $x\geq K,xt\geq\delta$.
Moreover, for any $\delta>0$ {and $t_{0}>0$} we have
\begin{align}
\label{def:ainfty}
a(x,t)=a_0\left(\frac{\pi^2}{c_+^2}xt\right)
+\frac{t^{1/2}}{x^{3/2}}a_1\left(\frac{\pi^2}{c_+^2}xt\right)
+\frac{t^{3/2}}{x^{5/2}}a_2\left(\frac{\pi^2}{c_+^2}xt\right)
+\bigO(t^{5/2}x^{-7/2}),
\end{align}
as $x\to +\infty$, uniformly in $xt\geq\delta$ {and $t \leq t_{0}$}, where {$a_{0}$, $a_{1}$ and $a_{2}$ are as in Theorem \ref{theorem:main}.}
\end{proposition}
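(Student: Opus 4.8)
The plan is to prove the three assertions in turn: unique solvability of \eqref{a def CC}, the uniform a priori bounds, and the expansion \eqref{def:ainfty}.

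\emph{Existence, uniqueness, and a priori bounds.} I would write the left-hand side of \eqref{a def CC} as $\Phi(a):=\int_{-\infty}^{a}(\log F)'(\tfrac xt\zeta)(a-\zeta)^{-1/2}d\zeta=\int_{0}^{\infty}(\log F)'(\tfrac xt(a-u))u^{-1/2}du$. By Assumptions \ref{assumptions} (part 2 gives that $(\log F)'$ is non-decreasing, and \eqref{eq:vplusinfty}--\eqref{eq:vminusinfty} give $(\log F)'(r)\to c_+$ as $r\to+\infty$ and exponential decay as $r\to-\infty$), the function $\Phi$ is finite and continuous on $\R$, non-decreasing, with $\Phi(a)\to 0$ as $a\to-\infty$ and $\Phi(a)\to+\infty$ as $a\to+\infty$ (bound below by $\sqrt{2a}\,(\log F)'(\tfrac{xa}{2t})$). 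Hence $\Delta(a):=\Phi(a)+\pi\sqrt{xt}\,(a-1)$ is continuous, strictly increasing, and runs from $-\infty$ to $+\infty$, so it has a unique zero $a(x,t)$, which is the unique solution of \eqref{a def CC}. Since $\Phi\geq 0$ the equation forces $a(x,t)\leq 1$, so $a_{\rm max}=1$ works. For the lower bound, suppose $a:=a(x,t)\leq\tfrac12$; splitting $\Phi(a)=\int_0^a+\int_a^\infty$, using $0\leq(\log F)'\leq c_+$ and a change of variables gives $\Phi(a)\leq 2c_+\sqrt a+C_\sigma\sqrt{t/x}$ with $C_\sigma:=\int_0^\infty(\log F)'(-r)r^{-1/2}dr<\infty$, while \eqref{a def CC} gives $\Phi(a)=\pi\sqrt{xt}(1-a)\geq\tfrac\pi2\sqrt{xt}$. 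Therefore $2c_+\sqrt a\geq\tfrac\pi2\sqrt{xt}-C_\sigma\sqrt{t/x}=\sqrt t\,\bigl(\tfrac\pi2\sqrt x-C_\sigma/\sqrt x\bigr)\geq\tfrac\pi4\sqrt{xt}\geq\tfrac\pi4\sqrt\delta$ once $x\geq K:=\max\{1,4C_\sigma/\pi\}$, so $a(x,t)\geq a_{\rm min}:=\min\{\tfrac12,\pi^2\delta/(64c_+^2)\}$, with $K$ depending on $\sigma$ only and $a_{\rm min}$ on $\sigma$ and $\delta$.

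\emph{Reduction to a scalar equation.} I would proceed as in the proof of Proposition \ref{prop:d1}: substitute $r=\tfrac xt\zeta$ in \eqref{a def CC} and write $(\log F)'(r)=c_+1_{(0,+\infty)}(r)+h(r)$, where $h$ decays exponentially at $\pm\infty$ by \eqref{eq:vplusinfty}--\eqref{eq:vminusinfty}. The $c_+1_{(0,+\infty)}$-part contributes $c_+\int_0^a(a-\zeta)^{-1/2}d\zeta=2c_+\sqrt a$ exactly, and in the $h$-part one expands $(a-\tfrac tx r)^{-1/2}$ in powers of $\tfrac{tr}{xa}$; this is legitimate because the $h$-integrand is concentrated up to exponentially small error on $|r|\lesssim\log(x/t)$, where $\tfrac{t|r|}{xa}\ll 1$ thanks to $a\geq a_{\rm min}$ and $x/t\to+\infty$. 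Integrating term by term and using $\int_{\R}h=\log c_+'$ (from $\log F(R)-c_+R\to\log c_+'$) and $\int_{\R}r\,h(r)\,dr=2\pi j_\sigma$ (integration by parts, matching \eqref{eq:Isigma}), one obtains
\[
2c_+\sqrt a+\frac{t\log c_+'}{x\sqrt a}+\frac{\pi t^2 j_\sigma}{x^2 a^{3/2}}+\bigO\bigl(t^3x^{-3}\bigr)=\pi\sqrt{xt}\,(1-a),
\]
uniformly for $x\geq K$, $xt\geq\delta$, $t\leq t_0$.

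\emph{Solving the scalar equation.} The leading balance $2c_+\sqrt{a_0}=\pi\sqrt{xt}\,(1-a_0)$ is a quadratic $\pi\sqrt{xt}(\sqrt{a_0})^2+2c_+\sqrt{a_0}-\pi\sqrt{xt}=0$ with unique nonnegative root, giving $a_0=a_0(\tfrac{\pi^2}{c_+^2}xt)$, $a_0(y)=(\sqrt{y+1}-1)^2/y$; the identity $\pi\sqrt{xt}\sqrt{a_0}=c_+(\sqrt{y+1}-1)$ shows the linearization $c_+/\sqrt{a_0}+\pi\sqrt{xt}=c_+\sqrt{y+1}/\sqrt{a_0}\geq c_+$ is bounded away from $0$. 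A quantitative implicit-function (Newton) argument then gives $a=a_0-\bigl(c_+\sqrt{y+1}/\sqrt{a_0}\bigr)^{-1}\tfrac{t\log c_+'}{x\sqrt{a_0}}+\cdots$, which equals $a_0+\tfrac{t^{1/2}}{x^{3/2}}a_1(y)+\cdots$ with $a_1$ as in \eqref{eq:thm-a0a1}; iterating once more, retaining the $j_\sigma$-term, the quadratic feedback of the $\log c_+'$-term, and the next order of the Newton correction, yields $\tfrac{t^{3/2}}{x^{5/2}}a_2(y)$ with $a_2$ as in \eqref{eq:thm-a2} and remainder $\bigO(t^{5/2}x^{-7/2})$. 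The main obstacle is the uniformity of the reduced equation over the whole range $x\geq K$, $xt\geq\delta$, $t\leq t_0$: one must control the boundary layer near $\zeta=0$ (handled by the exact evaluation $2c_+\sqrt a$), the endpoint $\zeta=a$ (harmless, $(\log F)'$ being bounded there), and the tails (exponentially small uniformly, using $\tfrac xt a\geq\tfrac{K}{t_0}a_{\rm min}\to+\infty$), while tracking powers of $t/x$ — structurally the same analysis as in Proposition \ref{prop:d1}. The remaining effort is the bookkeeping of solving to third order and checking that $a_1$ and the more intricate $a_2$ coincide with the expressions in Theorem \ref{theorem:main}.
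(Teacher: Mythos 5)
Your proposal is correct and follows essentially the same route as the paper: monotonicity for existence/uniqueness, a priori bounds from comparison of the Cauchy transform with $2c_+\sqrt a$ plus an $\bigO(\sqrt{t/x})$ error, and then the decomposition $(\log F)'=c_+1_{(0,+\infty)}+v$, term-by-term expansion of the kernel, and a quantitative implicit function argument. The only differences are cosmetic — your observation that $\Phi\geq 0$ immediately forces $a\leq 1$ (so $a_{\rm max}=1$) is a minor simplification of the paper's two-sided comparison with explicit $a_\pm(x,t)$, and your lower bound uses a contradiction argument in place of the paper's explicit formula for $a_-(x,t)$, but these buy nothing substantive.
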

\begin{remark}
\label{rem:aalpha}
The leading order in \eqref{def:ainfty} satisfies $a_0(\frac{\pi^2}{c_+^2}xt)=xt\alpha(xt)$, where $\alpha(xt)$ is the endpoint for the RH analysis of Section \ref{section:3}, see \eqref{def:a}.
\end{remark}
\begin{proof}
Let us write the endpoint equation \eqref{a def CC} as
\begin{equation}
h(a;x/t)= \pi \sqrt{xt}(1-a),\qquad 
h(a;\lambda) := \int_{-\infty}^{a}(\log F)'(\lambda \zeta)\frac{d \zeta}{\sqrt{a-\zeta}}.
\end{equation}
Since $F$ is log-convex (see Assumptions \ref{assumptions}), the function $h=h(a;x/t)$ is increasing in $a$, while {$a \mapsto \pi \sqrt{xt}(1-a)$} is a decreasing affine function of $a$, thus there exists exactly one real solution $a=a(x,t)$, proving the first statement.
Next, let us define $v:=(\log F)'-c_+1_{(0,+\infty)}$.
By Assumptions \ref{assumptions}, see in particular \eqref{eq:vplusinfty} and \eqref{eq:vminusinfty}, $v\in L^1(\R)\cap L^\infty(\R)$ so that for all $a>0$ we have
\begin{align*}
\left|h(a;{x/t})-2c_+\sqrt{a}\right| = 
\left|\int_{-\infty}^a v\left(\frac{x}{t} \zeta\right)\frac{d\zeta}{\sqrt{a-\zeta}}\right|
&\leq\sqrt{\frac{t}{x}}\int_{-\infty}^{\frac{x}{t}a} |v(r)|\frac{dr}{\sqrt{\frac{x}{t}a-r}}\\
&\leq \sqrt{\frac{t}{x}}\int_{-\infty}^{\frac{x}{t}a-1}|v(r)|dr +\sqrt{\frac{t}{x}}\,\|v\|_\infty\int_{-1}^0\frac{ds}{\sqrt{-s}}\\
&\leq (\|v\|_1+2\|v\|_\infty)\sqrt{\frac tx}.
\end{align*}
Setting $M:=\|v\|_1+2\|v\|_\infty$ it then follows by the inequalities $2c_+\sqrt{a}-M\sqrt{t/x} \leq h(a;{x/t}) \leq 2c_+\sqrt{a}+M\sqrt{t/x}$, which we have just shown, that $a_-(x,t) \leq a(x,t) \leq a_+(x,t)$ where $a_\pm(x,t)$ are the solutions to
\begin{equation*}
2c_+\sqrt{a_\pm}\mp M\sqrt{\frac tx}=\pi\sqrt{xt}(1-a_\pm)
\ \Rightarrow\
a_\pm(x,t) = 1+\frac{2 c_+^2}{\pi ^2 t x}\pm \frac{M}{\pi  x}-\frac{2 c_+ \sqrt{c_+^2+\pi  t (\pi  x\pm M)}}{\pi ^2 t x}.
\end{equation*}
Here we assume $x>M/\pi$.
It is straightforward to check that $a_-(x,y/x)$ is an increasing function of both $x$ and $y$ provided that $x>M/\pi$, and that $a_-(M/\pi,t)=0$ for all $t>0$. Fixing $K>M/\pi$, for all $x,t>0$ such that $x\geq K,xt\geq\delta$ we have $a(x,t) \geq a_{\rm min}:=a_-(K,\delta/K)$.
A completely analogous argument proves the existence of $a_{\rm max}$ such that $a(x,t) \leq a_{\rm max}$ for all $x,t>0$ such that $x\geq K,xt\geq\delta$. (In fact, we can set $a_{\rm max}:=1+\frac M{\pi K}$.)

\def\wh{\widehat}

Next, we claim that $h(a;\lambda)$ has an expansion
\begin{equation}
\label{eq:claim}
h(a;\lambda) = h_0(a)+\lambda^{-1}h_1(a)+\lambda^{-2}h_2(a)+\bigO(\lambda^{-3}),\qquad\lambda\to+\infty,
\end{equation}
uniform for $a$ in compact subsets of $(0,+\infty)$, where
\begin{equation}
h_0(a)=2c_+\sqrt a,\quad h_1(a)=\frac{\log c_+'}{\sqrt a},\quad h_2(a)=\frac {\pi j_\sigma}{a^{3/2}},
\end{equation}
where $j_\sigma$ is defined in \eqref{eq:Isigma}.
Postponing for a moment the proof of \eqref{eq:claim}, let us note that the Implicit Function Theorem then implies an expansion of the form
\begin{equation}
a(x,t) = \wh a_0(xt)+\frac 1{\sqrt{xt}}\left(\frac tx \wh a_1(xt) + \frac {t^2}{x^2}\wh a_2(xt)+\bigO(t^3x^{-3})\right)
\end{equation}
as $x\to+\infty$, where $\wh a_i(xt)$ are found from the equations
\begin{align*}
h_0(\wh a_0(xt)) &=\pi\sqrt{xt}(1-\wh a_0(xt))
\\
h_1(\wh a_0(xt))+\frac{\wh a_1(xt)}{\sqrt{xt}}h_0'(\wh a_0(xt))&= -\pi \wh a_1(xt)
\\
h_2(\wh a_0(xt))+\frac{\wh a_1(xt)}{\sqrt {xt}}h_1'(\wh a_0(xt))+\frac 12\frac{\wh a_1(xt)^2}{xt}h_0''(\wh a_0(xt))+\frac{\wh a_2(xt)}{\sqrt{xt}} h_0'(\wh a_0(xt))&= -\pi \wh a_2(xt)
\end{align*}
and they are in particular bounded for $xt\geq\delta$, and the $\bigO(t^{5/2}x^{-7/2})$ is also easily checked to be uniform for $xt\geq\delta$ and $t \leq t_{0}$ for any $\delta>0$ and $t_{0}>0$.
These equation{s} imply
\begin{align}
\nonumber
\wh a_0(xt)&=\left(\sqrt{1+\frac{c_+^2}{\pi^2 xt}}-\frac{c_+}{\pi\sqrt{xt}}\right)^2,\qquad
\wh a_1(xt)=-\frac{\log c_+'}{\pi\sqrt{\wh a_0(xt)}+\frac{c_+}{\sqrt{xt}}},\\
\nonumber
\wh a_2(xt)&=\frac{c_+\wh a_1(xt)^2+2 \wh a_1(xt) {\sqrt{xt}} \log c_+'-4 \pi {xt} j_\sigma}{4 \wh a_0(xt)\left(\pi\sqrt{\wh a_0(xt)}+ \frac{c_+}{\sqrt{xt}}\right){xt}},
\end{align}
and then \eqref{def:ainfty} is obtained by simple algebra. 

It only remains to prove \eqref{eq:claim}; to this end we write
\begin{equation}
h(a;\lambda) - 2c_+\sqrt a=\int_{-\infty}^a v(\lambda\zeta)\frac{d\zeta}{\sqrt{a-\zeta}} = \frac 1\lambda \left[\int_{-\infty}^{-a\lambda /2}\frac{v(r)\,dr}{\sqrt{a-\frac r\lambda}}+\int_{-a\lambda/2}^{a\lambda/2}\frac{v(r)\,dr}{\sqrt{a-\frac r\lambda}}+\int_{a\lambda/2}^{a\lambda}\frac{v(r)\,dr}{\sqrt{a-\frac r\lambda}} \right]
\end{equation}
and we reason for each term separately.
For the first one we have
\begin{equation}
\left|\int_{-\infty}^{-a\lambda/2}\frac{v(r)\,dr}{\sqrt{a-\frac r\lambda}}\right| \leq \sqrt{\frac 2{3a}}\left(\sup_{r<-{\frac{a \lambda}{2}}}|v(r)|^{1/2}\right)\int_{-\infty}^{-a\lambda/2}|v(r)|^{1/2}\,d\zeta
\end{equation}
which is exponentially small as $\lambda\to+\infty$ by Assumptions \ref{assumptions}, see \eqref{eq:vminusinfty}.
For the second term we use the expansion
\begin{equation}
\frac{1}{\sqrt{a-\frac r\lambda}} = \frac 1{\sqrt a}+\lambda^{-1}\frac r{2a^{3/2}}+\bigO (r^2\lambda^{-2}),\qquad\lambda\to+\infty,
\end{equation}
which is uniform for $|r|<a\lambda/2$, to obtain
\begin{equation}
\int_{-a\lambda/2}^{a\lambda/2}\frac{v(r)\,dr}{\sqrt{a-\frac r\lambda}}
=\frac 1{\sqrt a}\int_{-a\lambda/2}^{a\lambda/2}v(r)\,dr+\lambda^{-1}\frac 1{2a^{3/2}}\int_{-a\lambda/2}^{a\lambda/2}rv(r)\,dr+\bigO(\lambda^{-2})
\end{equation}
and so, since $v(r)$ has exponential decay as $r \to \pm \infty$, see \eqref{eq:vplusinfty} and \eqref{eq:vminusinfty}, we have
\begin{equation}
\int_{-a\lambda/2}^{a\lambda/2}\frac{v(r)\,dr}{\sqrt{a-\frac r\lambda}}
=\frac 1{\sqrt a}\int_{-\infty}^{+\infty}v(r)\,dr+\lambda^{-1}\frac 1{2a^{3/2}}\int_{-\infty}^{+\infty}rv(r)\,dr+\bigO(\lambda^{-2}).
\end{equation}
For the third term we estimate
\begin{equation}
\int_{a\lambda/2}^{a\lambda}\frac{v(r)\,dr}{\sqrt{a-\frac r\lambda}} \leq\lambda\left(\sup_{r>{\frac{a \lambda}{2}}}|v(r)|\right)\int_{a/2}^a\frac{d\zeta}{\sqrt{a-\zeta}}=\lambda\sqrt{2a}\left(\sup_{r>{\frac{a \lambda}{2}}}|v(r)|\right)
\end{equation}
which is exponentially small as $\lambda \to + \infty$ by \eqref{eq:vplusinfty}.
Uniformity for $a$ in compact subsets of $(0,+\infty)$ is clear in all these estimates.
The proof is complete by the following integrations by parts;
\begin{align*}
\int_{-\infty}^{+\infty}v(r)dr&=\int_{-\infty}^0(\log F)'(r)dr+\int_{{0}}^{{+\infty}}(\log F(r)-c_+r)'dr = \lim_{r\to+\infty}(\log F(r)-c_+r) = \log c_+'
\\
\int_{-\infty}^{+\infty}rv(r)dr&=\int_{-\infty}^0r\,(\log F)'(r)dr+\int_{{0}}^{{+\infty}}r\,(\log F(r)-c_+r-\log c_+')'dr 
\\
&= -\int_{-\infty}^{+\infty}\left[\log F(r)-(c_+r+\log c_+')1_{(0,+\infty)}(r)\right]\,dr = 2\pi j_\sigma
\end{align*}
where in the last equality we use \eqref{eq:Isigma}.
\end{proof}

\begin{remark}
We can construct $g'$ equivalently  as 
\begin{align}
g'(w;x,t) = -\sqrt{w-a(x,t)} \bigg(1 + \frac{1}{2\pi\sqrt{xt}}\int_{-\infty}^{a(x,t)}  (\log F)'(\tfrac{x}{t} \zeta) \frac{1}{\sqrt{a(x,t)-\zeta}} \frac{d\zeta}{\zeta-w} \bigg), \label{gprime 2 CC}
\end{align}
where $a(x,t)$ is found by requiring that $g'(w;x,t) = -w^{1/2}+\frac{1}{2}w^{-\frac{1}{2}} + \bigO(w^{-\frac{3}{2}})$ as $w \to \infty$. Here too, we see that $a(x,t)$ must satisfy \eqref{a def CC}. Using \eqref{a def CC}, we easily show that the right-hand sides of \eqref{gprime 1 CC} and \eqref{gprime 2 CC} are indeed equal.
\end{remark}
Let us define
\begin{align*}
g(w;x,t)=\int_{a(x,t)}^wg'(s;x,t)ds.
\end{align*}
Since $g(a(x,t);x,t)=0$, we must choose 
\begin{align*}
V_0(x,t)=-\log(1-\sigma(\tfrac{x}{t}a(x,t)))= \log F(\tfrac{x}{t}a(x,t)).
\end{align*}
By \eqref{gprime 1 CC}, since $(\log F)'$ decays rapidly at $-\infty$, it is straightforward to check that the coefficient $g_{1}$ in the large $w$ expansion of $g$ and $g'$ is given by
\begin{align}
g_1(x,t)&=\frac{a(x,t)^2}4-\frac{a(x,t)}2+\frac{1}{\pi \sqrt{xt}} \int_{-\infty}^{a(x,t)}(\log F)'(\tfrac{x}{t}\zeta)\sqrt{a(x,t)-\zeta}\,d\zeta
\label{eq:identityg1}
\end{align}
Also, in view of condition (b) for $g$, $g_{0}(x,t)$ is given by
\begin{align*}
g_{0}(x,t) = - \frac{t^{1/2}}{2x^{3/2}}V_0(x,t).
\end{align*}

\subsection{Normalization of the RH problem}
Let us define 
\begin{align*}
T(w;x,t) := \begin{pmatrix}
1 & ig_{1}(x,t)x^{3/2}t^{-1/2} \\
0 & 1
\end{pmatrix} \widetilde{\Psi}(w;x,t)e^{-x^{3/2}t^{-1/2}(g(w;x,t)-g_{0}(x,t))\sigma_3},
\end{align*}
and
\begin{align}\label{def of phi CC}
\phi(w;x,t):=2g(w;x,t)+\frac{t^{1/2}}{x^{3/2}}V_0(x,t)-\frac{t^{1/2}}{x^{3/2}}\log F(\tfrac{x}{t}w).
\end{align}
This function satisfies
\begin{align*}
\phi_\pm(w;x,t)=\pm \left(g_+(w;x,t)- g_-(w;x,t)\right),\qquad w\in(-\infty,a(x,t)).
\end{align*}
Then, $T$ satisfies the following RH problem. 
\subsubsection*{RH problem for $T$}
\begin{itemize}
\item[(a)] $T: \mathbb{C}\setminus  \mathbb{R} \to \mathbb{C}^{2\times 2}$ is analytic.
\item[(b)] $T$ has the jumps
\begin{align*}
& T_{+}(w;x,t) = T_{-}(w;x,t) \begin{pmatrix}
e^{-x^{3/2}t^{-1/2}\phi_+(w;x,t)} & 1 \\
0 & e^{-x^{3/2}t^{-1/2}\phi_-(w;x,t)}
\end{pmatrix}, & & w \in (-\infty,a(x,t)), \\
& T_{+}(w;x,t) = T_{-}(w;x,t) \begin{pmatrix}
1 & e^{x^{3/2}t^{-1/2} \phi(w;x,t)} \\
0 & 1
\end{pmatrix}, & & w\in (a(x,t),+\infty).
\end{align*}
\item[(c)] We have
\begin{equation}
\label{eq:asTCC}
T(w;x,t) = \bigg( I + \frac{1}{w}T_{1}(x,t) + \bigO(w^{-2}) \bigg) w^{\frac{1}{4}\sigma_{3}}A^{-1}, \qquad \mbox{as } w \to \infty,
\end{equation}
where
\begin{equation}
\label{eq:T1CC}
T_1(x,t) = 
\begin{pmatrix}
 -\frac{x^3}{2 t}g_1^2 - xg_1 p+\frac{t}{x}q & 
 i\frac{x^{5/2}}{t^{1/2}} g_1^2 p-2 i (xt)^{1/2} g_1 q+i\frac{x^{9/2}}{3 t^{3/2}} g_1^3-i\frac{x^{3/2}}{t^{1/2}} g_2-i\left(\frac{t}{x}\right)^{3/2}r\\
 i \frac{x^{3/2} }{t^{1/2}}g_1+i \frac{t^{1/2}}{x^{1/2}} p &  \frac{x^3}{2 t}g_1^2 + xg_1 p-\frac{t}{x}q  \\
\end{pmatrix},
\end{equation}
where $g_i=g_i(x,t)$, $p=p(x,t)$, $q=q(x,t)$, and $r=r(x,t)$.
\item[(d)] As $w\to a(x,t)$, $T(w;x,t)=\bigO(1)$.
\end{itemize}

\subsection{Opening of the lenses}
For $w \in (-\infty,a(x,t))$, note that the jump matrix can be factorized as follows:
\begin{multline}\label{factorization of the jump CC}
\begin{pmatrix}
e^{-x^{3/2}t^{-1/2} \phi_{+}(w;x,t)} & 1\\
0 & e^{-x^{3/2}t^{-1/2} \phi_{-}(w;x,t)}
\end{pmatrix} = \begin{pmatrix}
1 & 0 \\
e^{-x^{3/2}t^{-1/2} \phi_{-}(w;x,t)} & 1
\end{pmatrix}  \\
{\times}\begin{pmatrix}
0 & 1 \\ -1 & 0
\end{pmatrix} \begin{pmatrix}
1 & 0 \\ e^{-x^{3/2}t^{-1/2} \phi_{+}(w;x,t)} & 1
\end{pmatrix}.
\end{multline}
This factorization will allow us to split the jump for $T$ on $(-\infty,a(x,t))$ into three jumps for a new matrix function $S$ with jump contour $\mathbb R\cup (i\mathbb R+a(x,t))$, exactly as in Section \ref{section:3}.
To this end, we define $S$ as
\begin{align}
\label{eq:StoTCC}
S(w;x,t) := \begin{cases}
T(w;x,t)\begin{pmatrix}
1 & 0 \\
\mp e^{-x^{3/2}t^{-1/2}\phi(w;x,t)} & 1
\end{pmatrix}, & \mbox{if } \re w<a(x,t)\mbox{ and }\pm\im w >0, \\
T(w;x,t), & \mbox{if } \re w>a(x,t).
\end{cases}
\end{align}
The following RH conditions are direct consequences of this definition and of the RH problem for $T$.
\subsubsection*{RH problem for $S$}
\begin{itemize}
\item[(a)] $S: \mathbb{C}\setminus \big( \mathbb{R} \cup (i\mathbb R+a(x,t))\big) \to \mathbb{C}^{2\times 2}$ is analytic.
\item[(b)] $S$ has the jumps
\begin{align*}
& S_{+}(w;x,t) = S_{-}(w;x,t)\begin{pmatrix}
1 & 0 \\
e^{-x^{3/2}t^{-1/2} \phi(w;x,t)} & 1
\end{pmatrix}, & & w \in i\R+a(x,t), \\
& S_{+}(w;x,t) = S_{-}(w;x,t) \begin{pmatrix}
0 & 1 \\
-1 & 0
\end{pmatrix}, & & w \in (-\infty,a(x,t)), \\
& S_{+}(w;x,t) = S_{-}(w;x,t) \begin{pmatrix}
1 & e^{x^{3/2}t^{-1/2} \phi(w;x,t)} \\
0 & 1
\end{pmatrix}, & & w \in (a(x,t),+\infty),
\end{align*}
where $\R$ is oriented from left to right, and the vertical half-lines $a(x,t)\pm i\R^+$ are oriented towards $a(x,t)$.
As usual, $+$ and $-$ denote the boundary values from the left and right sides of the contour, respectively. 
\item[(c)] We have
\begin{align*}
S(w;x,t) = \bigg( I + \frac{1}{w}T_{1}(x,t) + \bigO(w^{-2}) \bigg) w^{\frac{1}{4}\sigma_{3}}A^{-1}, \qquad \mbox{as } w\to \infty,
\end{align*}
where $T_1(x,t)$ is the same as in \eqref{eq:T1CC}.
\item[(d)] As $w \to a(x,t)$, $S(w;x,t) = \bigO(1)$.
\end{itemize}

{The next proposition shows} that the jump matrix for $S$ is close to the identity except on $(-\infty,a(x,t))$ and near $a(x,t)$.

\begin{proposition}\label{prop:lenses2}
Let $\sigma$ satisfy Assumptions \ref{assumptions}. Then, we have the inequalities
\begin{align*}
\exp\left(x^{3/2}t^{-1/2}\phi(w;x,t)\right)&\leq\exp\left(-\frac{4}{3}x^{3/2}t^{-1/2}(w-a(x,t))^{3/2}\right),&&\mbox{for $w>a(x,t)$,}\\
\exp\left(-x^{3/2}t^{-1/2}\phi(w;x,t)\right)&\leq \exp\left(-\frac{2\sqrt{2}}{3}x^{3/2}t^{-1/2}|w-a(x,t)|^{3/2}\right),&&\mbox{for $w\in a(x,t)+i\mathbb R$.}
\end{align*}
\end{proposition}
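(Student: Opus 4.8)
The plan is to take logarithms — legitimate since $x^{3/2}t^{-1/2}>0$, reading the left-hand side of the second inequality via its modulus $|\exp(-x^{3/2}t^{-1/2}\phi)|=\exp(-x^{3/2}t^{-1/2}\re\phi)$ — and thus reduce the statement to the two pointwise estimates $\phi(w;x,t)\le-\tfrac43(w-a(x,t))^{3/2}$ for real $w>a(x,t)$, and $\re\phi(a(x,t)+iv;x,t)\ge\tfrac{2\sqrt2}3|v|^{3/2}$ for $v\in\R$. Abbreviate $a=a(x,t)$ and put $\psi(w):=\frac1{\sqrt{xt}}(\log F)'(\tfrac xt w)$; by Assumptions~\ref{assumptions}, $\psi$ is non-negative and non-decreasing on $\R$. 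I would introduce $\Phi(w):=\phi(w;x,t)+\tfrac43(w-a)^{3/2}$, the branch of $(w-a)^{3/2}$ being analytic on $\C\setminus(-\infty,a]$ and positive on $(a,+\infty)$; then $\Phi$ is analytic on $\C\setminus(-\infty,a]$ and $\Phi(a)=0$, because $g(a;x,t)=0$ and $V_0(x,t)=\log F(\tfrac xt a)$. Using the representation \eqref{gprime 2 CC} of $g'$, the definition \eqref{def of phi CC} of $\phi$, and the identity $\frac{(w-a)^{1/2}}\pi\int_{-\infty}^a\frac{d\zeta}{(a-\zeta)^{1/2}(w-\zeta)}=1$ (valid for $w\notin(-\infty,a]$), a short computation yields
\begin{equation*}
\Phi'(w)=\frac{(w-a)^{1/2}}\pi\int_{-\infty}^a\frac{\psi(\zeta)-\psi(w)}{(a-\zeta)^{1/2}(w-\zeta)}\,d\zeta .
\end{equation*}

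The first estimate follows at once: for real $w>a$, the factors $(w-a)^{1/2}$, $(a-\zeta)^{1/2}$, $w-\zeta$ are positive when $\zeta<a$, while $\psi(\zeta)-\psi(w)\le0$ by monotonicity of $\psi$; hence $\Phi'(w)\le0$, and integrating from $a$ (where $\Phi=0$) gives $\Phi(w)\le0$, i.e.\ $\phi(w;x,t)\le-\tfrac43(w-a)^{3/2}$.

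For the second estimate, Schwarz symmetry ($\phi$ is real on $(a,+\infty)$) reduces matters to $v>0$; since $\re[\tfrac43(iv)^{3/2}]=-\tfrac{2\sqrt2}3v^{3/2}$, the claim is equivalent to $\re\Phi(a+iv)\ge0$. Integrating $\Phi'$ over the vertical segment, $\re\Phi(a+iv)=-\int_0^v\im\Phi'(a+is)\,ds$. Plugging $w=a+is$ into the formula above and using $\int_0^\infty\frac{du}{\sqrt u\,(u+is)}=\pi/\sqrt{is}$, one finds $\Phi'(a+is)=\frac{e^{i\pi/4}\sqrt s}{\pi}\int_0^\infty\frac{\psi(a-u)}{\sqrt u\,(u+is)}\,du-\psi(a+is)$, and therefore
\begin{equation*}
\int_0^v\im\Phi'(a+is)\,ds=\frac1{\sqrt2\,\pi}\int_0^v\!\sqrt s\Bigl(\int_0^\infty\frac{\psi(a-u)(u-s)}{\sqrt u\,(u^2+s^2)}\,du\Bigr)\,ds-\int_0^v\im\psi(a+is)\,ds .
\end{equation*}
Since $u\mapsto\psi(a-u)$ is non-negative and non-increasing and $\int_0^\infty\frac{u-s}{\sqrt u\,(u^2+s^2)}\,du=0$, comparing $\psi(a-u)$ with the constant $\psi(a-s)$ separately on $(0,s)$ and on $(s,\infty)$ shows that the inner integral in the first term is $\le0$. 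For the second term, $\psi$ is the $w$-derivative of $\frac{t^{1/2}}{x^{3/2}}\log F(\tfrac xt w)$, whence $\int_0^v\im\psi(a+is)\,ds=\frac{t^{1/2}}{x^{3/2}}(\log F(\tfrac xt a)-\log|F(\tfrac xt(a+iv))|)$. Combining, $\re\Phi(a+iv)\ge\frac{t^{1/2}}{x^{3/2}}(\log F(\tfrac xt a)-\log|F(\tfrac xt(a+iv))|)$, and the argument concludes by invoking $|F(z)|\le F(\re z)$ for $\re z>0$, a consequence of Assumptions~\ref{assumptions} (e.g.\ since $F$ is a Laplace transform of a non-negative measure, cf.\ Remark~\ref{remark:logFdecay}).

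The step I expect to require the most care is this last inequality $|F(\tfrac xt(a+iv))|\le F(\tfrac xt a)$; granting it, the formula for $\Phi'$ and the sign analyses are routine once the $g$-function is in hand, and the first estimate is essentially immediate. A secondary technical point is to check that the Cauchy-type integrals converge and that $\Phi'$ extends continuously (with value $0$) to $w=a$, which follow from the exponential decay of $(\log F)'$ at $-\infty$ — see \eqref{eq:vminusinfty} — together with analyticity of $F$.
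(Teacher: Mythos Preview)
Your argument for the first inequality (real $w>a$) is clean and correct: the representation of $\Phi'$ as a Cauchy-type integral with kernel of constant sign, combined with the monotonicity of $\psi=(\log F)'(\tfrac xt\,\cdot)/\sqrt{xt}$ on $\R$ (Assumption~\ref{assumptions}, part~2), immediately gives $\Phi'\le0$ and hence $\Phi\le0$. This is essentially the argument one would extract from \cite{CaCl2019}.

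For the second inequality your reduction is also correct, and the sign analysis of the first integral (via the vanishing of $\int_0^\infty\frac{u-s}{\sqrt u(u^2+s^2)}\,du$ and monotonicity of $u\mapsto\psi(a-u)$) is fine. The gap is in the very last step. You invoke $|F(z)|\le F(\re z)$ and justify it by saying $F$ is the Laplace transform of a non-negative measure. But Remark~\ref{remark:logFdecay}, part~2, does \emph{not} assert this for all $\sigma$ satisfying Assumptions~\ref{assumptions}; it only exhibits Laplace transforms as a \emph{subclass} of admissible $F$. Nothing in Assumptions~\ref{assumptions} forces a general admissible $F$ to satisfy $|F(z)|\le F(\re z)$ pointwise. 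What part~4 actually gives is the weaker bound $|F(z)|=\mathcal O(e^{c_+\re z})$ as $\re z\to+\infty$, and this is precisely the hypothesis the paper singles out (``relies on Assumptions~\ref{assumptions}, part~4''). Feeding that bound into your inequality yields only $\re\Phi(a+iv)\ge -C\,t^{1/2}x^{-3/2}$ for some constant $C$ (and only once $\tfrac xt a$ is large enough that part~4 applies), which gives the proposition's conclusion up to a harmless multiplicative constant in the exponential---exactly what the subsequent small-norm argument needs---but not the exact inequality as stated. So either you must supply an independent proof of $|F(z)|\le F(\re z)$ valid under Assumptions~\ref{assumptions} (not via the Laplace representation), or you should weaken the conclusion to carry a constant and note that this suffices for the application.
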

\begin{proof}
The proof is identical to that of \cite[Proposition 3.5]{CaCl2019}, and relies on Assumptions \ref{assumptions}, part 4.
\end{proof}

\subsection{Global parametrix}
Ignoring the exponentially small jumps of $S$ on the vertical line and a small, but fixed, disk centered at $a(x,t)$, we obtain a RH problem with jumps only on $(-\infty,a(x,t))$.

\subsubsection*{RH problem for $P^{(\infty)}$}
\begin{itemize}
\item[(a)] $P^{(\infty)}: \mathbb{C}\setminus (-\infty,a(x,t)) \to \mathbb{C}^{2\times 2}$ is analytic.
\item[(b)] $P^{(\infty)}$ has the jumps
\begin{align}
\label{eq:jumpPCC}
& P^{(\infty)}_{+}(w;x,t) = P^{(\infty)}_{-}(w;x,t) \begin{pmatrix}
0 & 1 \\
-1 & 0
\end{pmatrix}, & & w \in (-\infty,a(x,t)).
\end{align}
\item[(c)] There exists a matrix $P^{(\infty)}_1(x,t)$ such that
\begin{align}\label{asymp for Pinf CC}
P^{(\infty)}(w;x,t) = \bigg( I + \frac{1}{w}P^{(\infty)}_1(x,t) + \bigO(w^{-2}) \bigg) w^{\frac{1}{4}\sigma_{3}}A^{-1}, \qquad \mbox{as } w \to \infty.
\end{align}
\end{itemize}
If we impose in addition the condition
\begin{itemize}
\item[(d)] As $w\to a(x,t)$, $P^{(\infty)}(w;x,t) = \bigO(|w-a(x,t)|^{-\frac 14})$,
\end{itemize}
there is a unique solution to this RH {problem}, given by
\begin{equation}
\label{PinfCC}
P^{(\infty)}(w;x,t) = (w-a(x,t))^{\frac{1}{4}\sigma_{3}}A^{-1},
\end{equation}
and then the matrix $P^{(\infty)}_1(x,t)$ is given by
\begin{align}\label{P1inf CC}
P_{1}^{(\infty)}(x,t) = -\frac{1}{4}a(x,t)\sigma_3.
\end{align}

\subsection{Local Airy parametrix near $a(x,t)$}
As in {Section \ref{subsection:Airy local param}}, we now want to construct a local parametrix $P^{(a)}$ which has exactly the same jumps as $S$ in a disk $|w-a(x,t)|<\rho$. The local parametrix is constructed again in terms of the Airy model RH problem (see Appendix \ref{app:airy}), and it is given by
\begin{align*}
P^{(a)}(w;x,t) := E(w;x,t) \Phi_{\mathrm{Ai}}^{[j]}(xt^{-1/3}f(w;x,t))e^{-\frac{x^{3/2}t^{-1/2}}{2}\phi(w;x,t)\sigma_{3}},
\end{align*}
where $j=1$ for $0<\arg(w-a(x,t))<\pi/2$, $j=2$ for $\pi/2<\arg(w-a(x,t))<\pi$, $j=3$ for $-\pi<\arg(w-a(x,t))<-\pi/2$, and $j=4$ for $-\pi/2<\arg(w-a(x,t))<0$, with
$\Phi_{\mathrm{Ai}}^{[1]},\ldots,{\Phi}_{\mathrm{Ai}}^{[4]}$ the entire functions defined in Appendix \ref{app:airy} which together form the solution to the Airy model RH problem.
The functions $f$ and $E$ are given by
\begin{align}\label{def of f and E CC}
f(w;x,t) := \bigg( -\frac{3}{4}\phi(w;x,t) \bigg)^{2/3},\qquad
E(w;x,t) := P^{(\infty)}(w;x,t) A^{-1} \big( xt^{-1/3}f(w;x,t) \big)^{\frac{\sigma_{3}}{4}}.
\end{align}
Similarly as in {Section \ref{subsection:Airy local param}}, we will now show that there exists a sufficiently small $\rho>0$ such that $f(w;x,t)$ is a conformal transformation of $w\in\mathcal D :=\{w\in\mathbb C:\ |w-a(x,t)|<\rho\}$ onto a neighborhood of $0$, satisfying $f'(a(x,t);x,t)>0$, and that $E(w;x,t)$ is a holomorphic function of $w\in\mathcal D$.
For the first statement we start by rewriting \eqref{gprime 2 CC} as
\begin{align*}
g'(w;x,t) = & \; -\sqrt{w-a(x,t)}\left(1+\frac{1}{2\pi\sqrt{xt}}\int_{-\infty}^{a(x,t)}\frac{(\log F)'(\tfrac xt\zeta)-(\log F)'(\tfrac xtw)}{\zeta-w}\frac{d\zeta}{\sqrt{a(x,t)-\zeta}}\right) \\
& \; +\frac 1{2\sqrt{xt}}(\log F)'(\tfrac xt w).
\end{align*}
The Taylor series at $w=a(x,t)$
\begin{align*}
\frac{(\log F)'(\tfrac xt\zeta)-(\log F)'(\tfrac xt w)}{\zeta-w} &= \sum_{\ell = 0}^{+\infty}\frac{\varphi_\ell(\zeta;x,t)}{(\zeta-a(x,t))^{\ell+1}}(w-a(x,t))^\ell,
\\
\varphi_\ell(\zeta;x,t)&=(\log F)'\left(\frac xt\zeta\right)-\sum_{j=0}^{\ell}\left(\frac xt\right)^j(\log F)^{(j+1)}\left(\frac xt a(x,t)\right)\frac{(\zeta-a(x,t))^j}{j!},
\end{align*}
implies the identity
\begin{align}
\label{tobeintegrated}
g'(w;x,t)=& -\sqrt{w-a(x,t)}+\frac {(\log F)'(\tfrac xt w)}{2\sqrt{xt}} \nonumber \\
& {-}\frac 1{2\pi\sqrt{xt}}\sum_{\ell=0}^{+\infty}(w-a(x,t))^{\ell+\frac 12}\int_{-\infty}^{a(x,t)}\frac{\varphi_\ell(\zeta;x,t)\,d\zeta}{(\zeta-a(x,t))^{\ell+1}\sqrt{a(x,t)-\zeta}},
\end{align}
where the last term is a Puiseux series as $w\to a(x,t)$ away from the cut $(-\infty,a(x,t))$.

To simplify the coefficients let us integrate by parts
\begin{align*}
(-1)^{\ell+1}\int_{-\infty}^{a(x,t)}\frac{\varphi_\ell(\zeta;x,t)\,d\zeta}{(a(x,t)-\zeta)^{\ell+\frac 32}}
&=\frac{2^{\ell+1}}{(2\ell+1)!!}\int_{-\infty}^{a(x,t)}\varphi_\ell(\zeta;x,t)\left[(-\partial_\zeta)^{\ell+1}\frac 1{\sqrt{a(x,t)-\zeta}}\right]\,d\zeta
\\
&=\frac{2^{\ell+1}}{(2\ell+1)!!}\left(\frac xt\right)^{\ell+1}\int_{-\infty}^{a(x,t)}(\log F)^{(\ell+2)}\left(\frac xt\zeta\right)\frac{d\zeta}{\sqrt{a(x,t)-\zeta}} .
\end{align*}
Using this relation, integrating \eqref{tobeintegrated} in $w$ (by construction $g(a(x,t);x,t)=0$) and applying the definition \eqref{def of phi CC} of $\phi(w;x,t)$ we finally obtain
\begin{multline}
\label{eq:seriesphiCC}
\phi(w;x,t)=-\frac 43(w-a(x,t))^{3/2} \\
{\times}\left[1+\frac{3}{2\pi\sqrt{xt}}\sum_{\ell=0}^{+\infty}\left(\frac{2x}t\right)^{\ell+1}\frac{(w-a(x,t))^\ell}{(2\ell+3)!!}\int_{-\infty}^{a(x,t)}(\log F)^{(\ell+2)}\left(\frac xt\zeta\right)\frac{d\zeta}{\sqrt{a(x,t)-\zeta}}\right].
\end{multline}
Consequently, $f$ defined in \eqref{def of f and E CC} is holomorphic in a neighborhood of $a(x,t)$, maps $a(x,t)$ to the origin, and $f'(a(x,t);x,t)>0$.
The first few terms in the Taylor expansion of $f$ near $w=a(x,t)$ are
\begin{align}
\label{eq:f1f2CC}
f(w;x,t)&=f_1(x,t)(w-a(x,t))+f_2(x,t)(w-a(x,t))^2+\bigO((w-a(x,t))^3),
\\
\nonumber
f_1(x,t)&=\left(1+\frac {x^{1/2}}{\pi t^{3/2}}\int_{-\infty}^{a(x,t)}(\log F)''(\tfrac xt \zeta)\frac{d\zeta}{\sqrt{a(x,t)-\zeta}}\right)^{2/3},
\\
\nonumber
f_2(x,t)&=\frac {4x^{3/2}}{15\pi t^{5/2}f_1(x,t)^{1/2}}\int_{-\infty}^{a(x,t)}(\log F)'''(\tfrac xt \zeta)\frac{d\zeta}{\sqrt{a(x,t)-\zeta}}.
\end{align}
Finally, the proof that $E$ is holomorphic in $\mathcal D$ is a straightforward adaptation of the argument in Section \ref{subsection:Airy local param}.

\subsection{Small norm RH problem}

Define
\begin{align}\label{def of R large xt}
R(w;x,t) := \begin{cases}
S(w;x,t)P^{(\infty)}(w;x,t)^{-1}, & w\in \mathbb{C}\setminus \overline{\mathcal{D}}, \\
S(w;x,t)P^{(a)}(w;x,t)^{-1}, & w\in \mathcal{D}.
\end{cases}
\end{align}
The RH conditions for $R$ are detailed below and follow directly from those of $S$, $P^{(\infty)}$, and $P^{(a)}$. 
Let us denote $\Gamma_R:=\partial \mathcal D \cup \Gamma_{-}\cup\Gamma_0\cup\Gamma_{+}$ where $\Gamma_0:=(a(x,t)+\rho,+\infty)$, $\Gamma_\pm := (a(x,t)\pm i\rho,a(x,t)\pm i\infty)$; we consider $\Gamma_{0}$ oriented from left to right, $\Gamma_\pm$ oriented towards $a(x,t)$, and $\partial \mathcal D$ oriented clockwise (see Figure \ref{figcontourR} for the analogous contour of Section \ref{section:3}).
As usual, boundary values are labeled by $+$ (resp. $-$) when we approach the oriented contour from the left (resp. right).

\subsubsection*{RH problem for $R$}
\begin{itemize}
\item[(a)] $R: \mathbb{C}\setminus\Gamma_R \to \mathbb{C}^{2\times 2}$ is analytic.
\item[(b)] $R$ has the jumps
\begin{align}
&\label{eq:jumpRconditions1CC}
R_{+}(w;x,t) = R_{-}(w;x,t)P^{(\infty)}(w;x,t)\begin{pmatrix}
1 & 0\\e^{-x^{3/2}t^{-1/2} \phi(w;x,t)} & 1
\end{pmatrix}P^{(\infty)}(w;x,t)^{-1}, & & w\in \Gamma_{\pm},
\\
& R_{+}(w;x,t) = R_{-}(w;x,t) P^{(\infty)}(w;x,t)\begin{pmatrix}
1 & e^{x^{3/2}t^{-1/2} \phi(w;x,t)} \\
0 & 1
\end{pmatrix}P^{(\infty)}(w;x,t)^{-1}, & & w \in \Gamma_0,
\\
& R_{+}(w;x,t) = R_{-}(w;x,t)P^{(a)}(w;x,t)P^{(\infty)}(w;x,t)^{-1}, & & w\in \partial\mathcal D.
\label{eq:jumpRconditionsCC}
\end{align}
\item[(c)] We have
\begin{align}\label{asymp for R CC}
R(w;x,t) = I + \frac{1}{w}R_{1}(x,t) + \bigO(w^{-2}), \qquad \mbox{as } w \to \infty.
\end{align}
\item[(d)] As $w\to a(x,t)+\rho$ and as $w\to a(x,t)\pm i\rho$, $R(w;x,t)=\bigO(1)$.
\end{itemize}

In particular, by construction of the global parametrix $P^{(\infty)}$ there is no jump on $(-\infty,a(x,t))$, and by construction of the local parametrix $P^{(a)}$ there is no jump inside $\mathcal D$. Furthermore, since both $S$ and $P^{(a)}$ remain bounded near $a(x,t)$, $R$ has no pole at $a(x,t)$.

We now show that the jump for $R$ satisfies a small norm RH problem; to this end we introduce the matrix function $J_R:\Gamma_R\to\C^{2\times 2}$, piece-wise defined according to \eqref{eq:jumpRconditions1CC}--\eqref{eq:jumpRconditionsCC} so that $R_+=R_-J_R$ on $\Gamma_R$.

\begin{lemma}
We have $\|J_R-I\|_{p} = \mathcal O(x^{-3/2}t^{1/2})$ for $p=1,2,\infty$, uniformly in $x\geq K$, $xt\geq\delta${, $t \leq t_{0}$}; here $\|\cdot\|_p$ denotes the norm in $L^p(\Gamma_R,\C^{2\times 2})$ with respect to any matrix norm on $\C^{2\times 2}$.
\end{lemma}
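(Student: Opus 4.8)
The plan is to estimate the three pieces of $J_R - I$ on the contour $\Gamma_R = \partial\mathcal D \cup \Gamma_- \cup \Gamma_0 \cup \Gamma_+$ separately, exactly paralleling the corresponding Lemma in Section \ref{section:3}, but now tracking the scaling factor $x^{3/2}t^{-1/2}$ instead of $x^2$. First I would treat the contributions on $\Gamma_0$ and $\Gamma_\pm$. On these parts the jump matrix is $I$ plus a conjugation of an off-diagonal matrix with a single entry $e^{\pm x^{3/2}t^{-1/2}\phi(w;x,t)}$ by $P^{(\infty)}(w;x,t)$. Since $P^{(\infty)}(w;x,t) = (w-a(x,t))^{\frac14\sigma_3}A^{-1}$ and its inverse are $\bigO(|w|^{1/4}+1)$ for $|w-a(x,t)|\geq\rho$, the conjugation costs at most a polynomial factor $\sqrt{|w|+1}$. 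The exponential smallness comes from Proposition \ref{prop:lenses2}: on $\Gamma_0$ we have $e^{x^{3/2}t^{-1/2}\phi(w;x,t)} \leq e^{-\frac43 x^{3/2}t^{-1/2}(w-a(x,t))^{3/2}}$, and on $\Gamma_\pm$ we have $e^{-x^{3/2}t^{-1/2}\phi(w;x,t)} \leq e^{-\frac{2\sqrt2}{3}x^{3/2}t^{-1/2}|w-a(x,t)|^{3/2}}$. Since $xt\geq\delta$ forces $x^{3/2}t^{-1/2} = x\sqrt{xt}\geq x\sqrt\delta \to \infty$, these bounds show the jump on $\Gamma_0\cup\Gamma_\pm$ is $I + \bigO(\sqrt{|w|+1}\,e^{-\eta x^{3/2}t^{-1/2}|w-a(x,t)|^{3/2}})$ for some $\eta>0$; integrating in $w$ (the polynomial factor is harmless against the stretched-exponential decay) gives an $L^1\cap L^2\cap L^\infty$ bound of order $e^{-\eta' x^{3/2}t^{-1/2}\rho^{3/2}}$, which is $\bigO(x^{-3/2}t^{1/2})$ — in fact far smaller — uniformly in $x\geq K$, $xt\geq\delta$, $t\leq t_0$.

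The dominant contribution is on $\partial\mathcal D$, a bounded contour. Here $J_R(w;x,t) - I = P^{(a)}(w;x,t)P^{(\infty)}(w;x,t)^{-1} - I$, and I would use the known large-argument asymptotics of the Airy model solution $\Phi_{\mathrm{Ai}}^{[j]}$ from Appendix \ref{app:airy}: since the argument of $\Phi_{\mathrm{Ai}}$ is $xt^{-1/3}f(w;x,t)$ and $|f(w;x,t)|$ is bounded below on $\partial\mathcal D$ (from the series \eqref{eq:seriesphiCC} and the definition \eqref{def of f and E CC}), while $xt^{-1/3} = x^{3/2}t^{-1/2}\cdot(xt)^{-1/6}$ is large because $x\to\infty$ and $xt\geq\delta$ keeps $(xt)^{-1/6}$ bounded, the standard expansion yields
\begin{align*}
J_R(w;x,t)-I = \frac{P^{(\infty)}(w;x,t)\Phi_{\mathrm{Ai},1}P^{(\infty)}(w;x,t)^{-1}}{f(w;x,t)^{3/2}}\,(xt^{-1/3})^{-3/2} + \bigO\big((xt^{-1/3})^{-3}\big),
\end{align*}
uniformly for $w\in\partial\mathcal D$, where $(xt^{-1/3})^{-3/2} = x^{-3/2}t^{1/2}$. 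To make this genuinely uniform I need $P^{(\infty)}(w;x,t) = (w-a(x,t))^{\frac14\sigma_3}A^{-1}$ and its inverse to be bounded on $\partial\mathcal D$ uniformly in the parameters; this holds because $a(x,t)$ stays in the compact interval $[a_{\mathrm{min}},a_{\mathrm{max}}]$ by Proposition \ref{prop:endpointCC}, so $\partial\mathcal D$ stays in a fixed compact set bounded away from $a(x,t)$, and $w-a(x,t)$ stays bounded away from $0$ and $\infty$. Likewise $f(w;x,t)^{3/2}$ is bounded away from $0$ on $\partial\mathcal D$ uniformly: the bracket in \eqref{eq:seriesphiCC} is $1 + \bigO((xt)^{-1})$ since each integral $\int_{-\infty}^{a}(\log F)^{(\ell+2)}(\tfrac xt\zeta)\frac{d\zeta}{\sqrt{a-\zeta}}$ is $\bigO(\sqrt{t/x})$ by the argument in the proof of Proposition \ref{prop:endpointCC} (exponential decay of $(\log F)^{(\ell+2)}$), and $(2x/t)^{\ell+1}\sqrt{t/x} = 2^{\ell+1}(x/t)^{\ell+1/2}$ is tamed by the extra $\sqrt{xt}$ in the denominator only at leading order — more carefully, one checks $f_1(x,t) = 1 + \bigO(t/x\cdot\ldots)$ remains bounded and bounded away from zero as stated implicitly around \eqref{eq:f1f2CC}. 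On this bounded contour, $L^1$, $L^2$, and $L^\infty$ norms are all comparable, so $\|J_R - I\|_p = \bigO(x^{-3/2}t^{1/2})$ for $p = 1,2,\infty$.

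Combining the two regions: on $\Gamma_0\cup\Gamma_\pm$ the norm is exponentially small (hence $o(x^{-3/2}t^{1/2})$), and on $\partial\mathcal D$ it is exactly of order $x^{-3/2}t^{1/2}$, so the total is $\bigO(x^{-3/2}t^{1/2})$, uniformly in $x\geq K$, $xt\geq\delta$, $t\leq t_0$. The main obstacle I anticipate is not any single estimate but the bookkeeping of uniformity: one must verify that \emph{all} implied constants — in the Airy asymptotics, in the lower bound for $|f|$ on $\partial\mathcal D$, in the boundedness of $P^{(\infty)}$, and in the exponential decay on the lens boundaries — can be chosen independently of $(x,t)$ throughout the region $x\geq K$, $xt\geq\delta$, $t\leq t_0$. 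The crucial enabling facts are Proposition \ref{prop:endpointCC} (which pins $a(x,t)$ into a fixed compact interval and gives the expansion \eqref{def:ainfty}, so the geometry of $\Gamma_R$ is essentially parameter-independent after the shift $\widetilde w = w - a(x,t)$) and the observation that $xt^{-1/3}\to\infty$ whenever $x\to\infty$ with $xt\geq\delta$, which is what makes the Airy parametrix a good approximation. Apart from that, the argument is a close transcription of the Section \ref{section:3} lemma and of \cite[Proposition 3.5 and the subsequent small-norm analysis]{CaCl2019}.
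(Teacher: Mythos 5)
Your proof follows the same route as the paper: the same split of $\Gamma_R$ into the lens/ray parts $\Gamma_0\cup\Gamma_\pm$ (controlled by Proposition \ref{prop:lenses2} and the polynomial growth of $P^{(\infty)}$) and the circle $\partial\mathcal D$ (controlled by the large-argument Airy expansion with the conformal variable $xt^{-1/3}f$), the same observation that the leading contribution on $\partial\mathcal D$ is $(xt^{-1/3}f)^{-3/2}=\bigO(x^{-3/2}t^{1/2})$, and the same appeal to the uniform boundedness of $a(x,t)$ and of $f$ on $\partial\mathcal D$ to make all constants uniform. The only flaws are cosmetic: the identity ``$xt^{-1/3}=x^{3/2}t^{-1/2}\cdot(xt)^{-1/6}$'' is false (the right side is $x^{4/3}t^{-2/3}$; the correct relation is $xt^{-1/3}=(x^{3/2}t^{-1/2})^{2/3}$, which still forces $xt^{-1/3}\to\infty$), and the aside ``$f_1=1+\bigO(t/x\cdot\ldots)$'' misstates the size of $f_1-1$ (it is $\bigO((xt)^{-1/2})$); neither slip affects the conclusion or changes the method.
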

\begin{proof}
By construction of the global parametrix $P^{(\infty)}$ we see that $P^{(\infty)}(w;x,t)$ and $P^{(\infty)}(w;x,t)^{-1}$ are $\bigO(\sqrt[4]{|w|+1})$ for $|w-a(x,t)|>\rho$, uniformly in $xt\geq \delta${, $x\geq K$, $t \leq t_{0}$}.
From this and Proposition \ref{prop:lenses2}, it follows that $J_R(w)=I+\bigO(\sqrt{|w|+1}\,e^{-\frac{2\sqrt 2}3 x^{3/2}t^{-1/2}|w-a(x,t)|^{3/2}})$ for $|w-a(x,t)|>\rho$, uniformly in $xt\geq \delta${, $x\geq K$, $t \leq t_{0}$}.
It follows that the $L^p(\Gamma_R,\C^{2\times 2})$-norms of $J_R-I$ for $p=1,2,\infty$ are $\bigO(e^{- x^{3/2}t^{-1/2}\rho^{3/2}})$ uniformly in $xt\geq\delta${, $x\geq K$, $t \leq t_{0}$}.
On the remaining part of the contour we have, as $x\to+\infty$,
\begin{align}
\nonumber
J_R(w;x,t)-I&=P^{(a)}(w;x,t)P^{(\infty)}(w;x,t)^{-1} -I
\\
\nonumber
&= \frac{1}{x^{3/2}t^{-1/2} f(w;x,t)^{3/2}}P^{(\infty)}(w;x,t) \Phi_{\mathrm{Ai},1} P^{(\infty)}(w;x,t)^{-1} + \bigO(x^{-3}t)
\\
\label{eq:estimatejumpRsec4}
&=\frac{i}{48x^{3/2}t^{-1/2} f(w;x,t)^{3/2}}
\begin{pmatrix}
0&7(w-a(x,t))^{1/2}\\
5(w-a(x,t))^{-1/2}&0
\end{pmatrix}
+\bigO(x^{-3}t),
\end{align}
uniformly for $w\in \partial \mathcal D$, where we use \eqref{P1inf CC} and $\Phi_{\mathrm{Ai},1}=\frac 18\begin{pmatrix}
\frac 16 & i \\ i & -\frac 16
\end{pmatrix}$, see Appendix \ref{app:airy}.
Moreover this asymptotic relation is uniform for $x\geq K${, $t \leq t_{0}$} and $xt\geq \delta$; indeed, it is clear by \eqref{eq:seriesphiCC} and \eqref{def of f and E CC} that $f(w;x,t)$ is bounded away from zero for $w\in\partial\mathcal D$, uniformly for the relevant values of $x,t$.
\end{proof}
Similarly as in Section \ref{subsection:small norm with xt small}, it should be noted that the contour $\Gamma_R$ depends on $x$ and $t$ through $a(x,t)$.
This is not so convenient to apply  the standard theory of small-norm RH problems; however one observes easily that the jump contour is independent of $x$ and $t$ after the shift $\widetilde w=w-a(x,t)$ which is uniformly bounded by Proposition \ref{prop:endpointCC}, so that the standard theory applies to the RH problem in that variable.
Back in our variable $w$, we obtain that, as $x \to + \infty$,
\begin{align}
\label{asymp for R easy sector CC}
R(w;x,t) = I + \frac{1}{x^{3/2}t^{-1/2}}R^{(1)}(w;x,t) + \bigO(x^{-3}t),
\qquad R^{(1)}(w;x,t) = \frac{1}{2\pi i}\int_{\partial \mathcal D} \frac{J_{R}^{(1)}(s;x,t)}{s-w}ds
\end{align}
uniformly in $w$, where (see \eqref{eq:estimatejumpRsec4})
\begin{align}
J_{R}^{(1)}(w;x,t) := \frac{i}{48f(w;x,t)^{3/2}}
\begin{pmatrix}
0&7(w-a(x,t))^{1/2}\\
5(w-a(x,t))^{-1/2}&0
\end{pmatrix}.
\end{align}
It is straightforward to check, by using the definition of $f(w;x,t)$ in \eqref{def of f and E CC}, that $J_{R}^{(1)}(w;x,t)$ is meromorphic for $w\in\mathcal D$ with a double pole at $w=a(x,t)$ only.
Thus we can compute $R^{(1)}$ in \eqref{asymp for R easy sector CC} for $w\in\C\setminus\overline{\mathcal D}$ by extracting the polar part of $J^{(1)}_R$ at $w=a(x,t)$; namely, if $|w-a(x,t)|>\rho$,
\begin{align}
\label{eq:Rp1pCC}
R^{(1)}(w;x,t) &= \frac{i}{16 f_1^{5/2}}\begin{pmatrix}
0 & \frac{7}{3}f_1 \\
- \frac{5}{2}f_2 & 0
\end{pmatrix} \frac{1}{w-a(x,t)} + \frac{5i}{48 f_1^{3/2}} \begin{pmatrix}
0 & 0 \\ 1 & 0
\end{pmatrix}\frac{1}{(w-a(x,t))^{2}},
\end{align}
where $f_1=f_1(x,t)$ and $f_2=f_2(x,t)$ are given in \eqref{eq:f1f2CC}.
Since \eqref{asymp for R easy sector CC} is uniform in $w$ we also obtain
\begin{equation}
\label{R1infCC}
R_1(x,t) =\frac{i t^{1/2}}{16 x^{3/2} f_{1}(x,t)^{5/2}}\begin{pmatrix}
0 & \frac{7}{3}f_1(x,t) \\
- \frac{5}{2}f_2(x,t) & 0
\end{pmatrix} +\bigO(x^{-3}t).
\end{equation}

\subsection{Asymptotics for $u_\sigma(x,t)$ and $\partial_x\log Q_\sigma(x,t)$}

\begin{proposition}
\label{prop:asympusect4}
As $x\to+\infty$ we have, uniformly in $xt\geq\delta${, $t \leq t_{0}$} for any $\delta>0$ {and $t_{0}>0$},
\begin{align}
u_\sigma(x,t) = \frac x{2t}a(x,t)+\bigO(x^{-2}).
\end{align}
where $a(x,t)$ is the unique solution to \eqref{a def CC}.
In particular we have
\begin{align}
u_\sigma(x,t) = \frac x{2t}a_0\left(\frac{\pi^2}{c_+^2}xt\right)+\frac 1{2\sqrt{xt}}a_1\left(\frac{\pi^2}{c_+^2}xt\right)+\frac{t^{1/2}}{2x^{3/2}}a_2\left(\frac{\pi^2}{c_+^2}xt\right)+\bigO(x^{-2})
\end{align}
as $x\to+\infty$, uniformly in $xt\geq \delta${, $t \leq t_{0}$}, where $a_i$ are {as in \eqref{def:ainfty} and in Theorem \ref{theorem:main}}.
\end{proposition}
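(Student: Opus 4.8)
The plan is to extract $u_\sigma(x,t)$ from the large-$w$ behavior of the RH solution, exactly as in the parallel computation of Proposition \ref{prop:asympusect3}. First I would recall from \eqref{eq:upq} and \eqref{eq:Psi1tildepq}--\eqref{eq:T1CC} the algebraic identity expressing $u_\sigma$ in terms of the entries of $T_1(x,t)$; combining $u_\sigma = -2q - p^2$ with the expressions for $p,q$ in terms of $T_{1,11}, T_{1,21}$ coming from \eqref{eq:T1CC}, one obtains $u_\sigma(x,t) = \frac{t}{x}\big(T_{1,21}(x,t)^2 - 2 T_{1,11}(x,t)\big)$ (the analogue of the identity $u_\sigma = x^2(T_{1,21}^2 - 2T_{1,11})$ used in Section \ref{section:3}, with the rescaling factor $t/x$ replacing $x^2$). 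Then, for $w$ with $\re w > a(x,t)+\rho$, the chain of transformations gives $T(w;x,t) = S(w;x,t) = R(w;x,t)P^{(\infty)}(w;x,t)$, so matching the $1/w$ coefficients in the large-$w$ expansions \eqref{eq:asTCC}, \eqref{asymp for R CC}, \eqref{asymp for Pinf CC} yields $T_1(x,t) = P_1^{(\infty)}(x,t) + R_1(x,t)$.

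Next I would substitute the explicit forms $P_1^{(\infty)}(x,t) = -\tfrac14 a(x,t)\sigma_3$ from \eqref{P1inf CC} and $R_1(x,t)$ from \eqref{R1infCC}. The diagonal part of $R_1$ is $\bigO(x^{-3}t)$ (the leading $1/w$-coefficient in \eqref{R1infCC} is strictly off-diagonal), so $T_{1,11}(x,t) = -\tfrac14 a(x,t) + \bigO(x^{-3}t)$; the $(2,1)$ entry $T_{1,21}(x,t) = R_{1,21}(x,t) = \bigO(x^{-3/2}t^{1/2})$, hence $T_{1,21}^2 = \bigO(x^{-3}t)$. Plugging into the identity above gives $u_\sigma(x,t) = \frac{t}{x}\big(\bigO(x^{-3}t) - 2(-\tfrac14 a(x,t)) + \bigO(x^{-3}t)\big) = \frac{x}{2t}a(x,t) + \frac{t}{x}\bigO(x^{-3}t)$. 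Since $xt\geq\delta$ and $t\leq t_0$, the error term $\frac{t^2}{x^4}$ is $\bigO(x^{-2})$ (indeed $t^2/x^4 \leq t_0 t/x^4 \leq t_0 \cdot (x t) /x^5 \le$ bounded $\cdot x^{-5}$, even better), so the first displayed formula follows. The second displayed asymptotic expansion is then immediate by inserting the expansion \eqref{def:ainfty} of $a(x,t)$ from Proposition \ref{prop:endpointCC}: the $\frac{x}{2t}$ times the $a_0, a_1\sqrt t/x^{3/2}, a_2 t^{3/2}/x^{5/2}$ terms produce exactly $\frac{x}{2t}a_0 + \frac{1}{2\sqrt{xt}}a_1 + \frac{t^{1/2}}{2x^{3/2}}a_2$, and the remaining $\bigO(t^{5/2}x^{-7/2})$ term in \eqref{def:ainfty}, multiplied by $\frac{x}{2t}$, is $\bigO(t^{3/2}x^{-5/2}) = \bigO(x^{-2})$ in the regime $xt\geq\delta$, $t\le t_0$, and is absorbed into the error.

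The main obstacle, such as it is, lies in bookkeeping the uniformity of the error terms: one must ensure that the $\bigO(x^{-3}t)$ bound on $R_1(x,t)$ from \eqref{R1infCC} is genuinely uniform for $x\geq K$, $xt\geq\delta$, $t\leq t_0$, which in turn rests on $f_1(x,t)$ being bounded away from zero (already observed via \eqref{eq:seriesphiCC}) and on $a(x,t)$ staying in a compact subset of $(0,\infty)$ (Proposition \ref{prop:endpointCC}); and one must track that converting $\frac{t}{x}\cdot\bigO(x^{-3}t)$ and $\frac{x}{2t}\cdot\bigO(t^{5/2}x^{-7/2})$ into $\bigO(x^{-2})$ uses $t\leq t_0$ to bound the extra powers of $t$. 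None of this is deep, but it requires care to state cleanly; the actual RH analysis has already been done in the preceding subsections, so this proposition is essentially a corollary of $T_1 = P_1^{(\infty)}+R_1$ together with the explicit parametrices.
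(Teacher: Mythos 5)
Your overall strategy matches the paper's proof exactly: extract $u_\sigma$ from the $1/w$ coefficient $T_1$ via an algebraic identity, use $T_1 = P_1^{(\infty)} + R_1$, and substitute the explicit parametrices and the endpoint expansion. However, there is a concrete arithmetic error in the key identity. You write $u_\sigma(x,t) = \tfrac{t}{x}\big(T_{1,21}^2 - 2T_{1,11}\big)$, but the correct prefactor is $\tfrac{x}{t}$, not $\tfrac{t}{x}$. One can verify this directly from \eqref{eq:T1CC}: with $T_{1,11} = -\tfrac{x^3}{2t}g_1^2 - x g_1 p + \tfrac{t}{x}q$ and $T_{1,21} = i\tfrac{x^{3/2}}{t^{1/2}}g_1 + i\tfrac{t^{1/2}}{x^{1/2}}p$, one finds
\begin{align*}
T_{1,21}^2 - 2T_{1,11} = -\frac{t}{x}p^2 - \frac{2t}{x}q = \frac{t}{x}\,u_\sigma(x,t),
\end{align*}
so $u_\sigma = \tfrac{x}{t}\big(T_{1,21}^2 - 2T_{1,11}\big)$, which is precisely the identity the paper uses. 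Your own subsequent line contradicts your stated prefactor, since $\tfrac{t}{x}\cdot\tfrac{1}{2}a(x,t)$ would give $\tfrac{ta}{2x}$, not the $\tfrac{x}{2t}a(x,t)$ you write.

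Once the factor is corrected to $\tfrac{x}{t}$, the error bookkeeping simplifies considerably: $R_{1,11} = \bigO(x^{-3}t)$ and $R_{1,21}^2 = \bigO(x^{-3}t)$ give $\tfrac{x}{t}\cdot\bigO(x^{-3}t) = \bigO(x^{-2})$ directly, with no need for the regime conditions at that stage. (Your argument trying to show $t^2/x^4 = \bigO(x^{-2})$ is moot, and moreover contains a small slip, namely treating $xt$ as bounded, which it is not when $t$ is fixed and $x\to\infty$.) The rest of the proof, including the substitution of \eqref{def:ainfty} and the absorption of the $\bigO(t^{3/2}x^{-5/2}) = \bigO(x^{-2})$ tail using $t\leq t_0$, is correct and is exactly what the paper does.
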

\begin{proof}
By combining \eqref{eq:upq}, {and \eqref{eq:T1CC}} we obtain the identity
\begin{equation*}
u_\sigma(x,t) = \frac xt(T_{1,21}(x,t)^2-2T_{1,11}(x,t)).
\end{equation*}
For $w$ satisfying $\re w>a(x,t)+\rho$, using \eqref{eq:StoTCC} {and \eqref{def of R large xt}} we have
\begin{align*}
T(\zeta;x,t)=S(w;x,t)=R(w;x,t)P^{(\infty)}(w;x,t).
\end{align*}
Substituting the large $w$ asymptotics \eqref{eq:asTCC} for $T$, \eqref{asymp for R CC} for $R$, and \eqref{asymp for Pinf CC} for $P^{(\infty)}$, we get
\begin{align}
\label{eq:TRPCC}
T_1(x,t)=P_1^{(\infty)}(x,t)+R_1(x,t).
\end{align}
The proof is complete by \eqref{P1inf CC} and \eqref{R1infCC}.
\end{proof}

\begin{proposition}
{As $x\to+\infty$ we have, uniformly in $xt\geq\delta$, $t \leq t_{0}$ for any $\delta>0$ and $t_{0}>0$,}
\begin{align}
\label{eq:finaldiffidCC}
\partial_x\log Q_\sigma(x,t)=-\frac{x^2}{t}\left(\frac 14+g_1(x,t)\right)-\frac{5}{32x}\frac{f_2(x,t)}{f_1(x,t)^{5/2}}+\bigO(x^{-5/2}t^{1/2}),
\end{align}
where $g_1(x,t)$ is given in \eqref{eq:identityg1} and {$f_1(x,t), f_2(x,t)$} in \eqref{eq:f1f2CC}.
\end{proposition}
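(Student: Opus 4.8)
The strategy mirrors that of Corollary~\ref{corollary:Qsec3}: we extract $\partial_x\log Q_\sigma$ from the large-$w$ expansion coefficient $T_1(x,t)$ of the normalized RH solution. By the differential identity \eqref{xdiffid} we have $\partial_x\log Q_\sigma(x,t)=p(x,t)-\tfrac{x^2}{4t}$, so the task is to compute $p(x,t)$ from the RH data. Reading off the $(2,1)$ entry of $T_1(x,t)$ from \eqref{eq:T1CC}, namely $T_{1,21}(x,t)=i\tfrac{x^{3/2}}{t^{1/2}}g_1(x,t)+i\tfrac{t^{1/2}}{x^{1/2}}p(x,t)$, and solving for $p$ (using $1/i=-i$), I get
\[
p(x,t)=-\frac{x^2}{t}\,g_1(x,t)-i\,\frac{x^{1/2}}{t^{1/2}}\,T_{1,21}(x,t).
\]
Substituting into \eqref{xdiffid} already produces the term $-\tfrac{x^2}{t}\bigl(\tfrac14+g_1(x,t)\bigr)$, so it remains to show that $-i\tfrac{x^{1/2}}{t^{1/2}}T_{1,21}(x,t)=-\tfrac{5}{32x}\tfrac{f_2(x,t)}{f_1(x,t)^{5/2}}+\bigO(x^{-5/2}t^{1/2})$.

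To this end I would unwind the chain of transformations $T\rightsquigarrow S\rightsquigarrow R$. For $\re w>a(x,t)+\rho$ one has $T(w;x,t)=S(w;x,t)=R(w;x,t)P^{(\infty)}(w;x,t)$ by \eqref{eq:StoTCC} and \eqref{def of R large xt}; comparing the large-$w$ expansions \eqref{eq:asTCC}, \eqref{asymp for R CC} and \eqref{asymp for Pinf CC} gives the identity \eqref{eq:TRPCC}, namely $T_1(x,t)=P_1^{(\infty)}(x,t)+R_1(x,t)$. Since $P_1^{(\infty)}(x,t)=-\tfrac14 a(x,t)\sigma_3$ is diagonal by \eqref{P1inf CC}, it follows that $T_{1,21}(x,t)=R_{1,21}(x,t)$. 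Then \eqref{R1infCC} yields $R_{1,21}(x,t)=-\tfrac{5i\,t^{1/2}f_2(x,t)}{32\,x^{3/2}f_1(x,t)^{5/2}}+\bigO(x^{-3}t)$, and multiplying by $-i\tfrac{x^{1/2}}{t^{1/2}}$ (using $(-i)\cdot i=1$) and collecting the powers of $x$ and $t$ gives exactly the claimed expression, with error $-i\tfrac{x^{1/2}}{t^{1/2}}\cdot\bigO(x^{-3}t)=\bigO(x^{-5/2}t^{1/2})$.

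The only point that is not pure bookkeeping is the uniformity of the error in the region $x\ge K$, $xt\ge\delta$, $t\le t_0$, and this is already in place: the preceding lemma gives $\|J_R-I\|_p=\bigO(x^{-3/2}t^{1/2})$ uniformly there, so the small-norm analysis makes \eqref{asymp for R easy sector CC}, and hence \eqref{R1infCC}, uniform; moreover $f_1(x,t)\ge 1$ in this region, since $(\log F)''\ge 0$ by Assumptions~\ref{assumptions} makes the integral in the formula \eqref{eq:f1f2CC} for $f_1$ nonnegative (and $f_1$ bounded away from $0$ for $w\in\partial\mathcal D$ is also noted in the proof of the small-norm lemma via \eqref{eq:seriesphiCC}), with $a(x,t)$ controlled by Proposition~\ref{prop:endpointCC}. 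I do not expect any genuine obstacle here: all the analytic work — the construction of the $g$-function, of the global and Airy parametrices, and the small-norm reduction — has already been carried out, and this proposition is simply the transcription of that analysis into an $x$-differential identity for $\log Q_\sigma$, parallel to the corresponding statement in Section~\ref{section:3}.
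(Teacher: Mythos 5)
Your proof is correct and follows exactly the paper's route: the differential identity \eqref{xdiffid} combined with \eqref{eq:T1CC} expresses $\partial_x\log Q_\sigma$ through $T_{1,21}$, the identity $T_1=P_1^{(\infty)}+R_1$ together with the fact that $P_1^{(\infty)}$ is diagonal reduces this to $R_{1,21}$, and \eqref{R1infCC} supplies the asymptotics. Your additional remarks on uniformity (in particular that $f_1\geq 1$ follows from $(\log F)''\geq 0$) are a welcome bit of care the paper leaves implicit.
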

\begin{proof}
By combining \eqref{xdiffid} and \eqref{eq:T1CC} we obtain the identity
\begin{align}
\partial_x\log Q_\sigma(x,t)=-\frac{x^2}{4t}-\frac{x^2}tg_1(x,t)-i\sqrt{\frac xt}T_{1,21}(x,t).
\end{align}
By \eqref{eq:TRPCC} we get
\begin{align}
\partial_x\log Q_\sigma(x,t)=-\frac{x^2}{4t}-\frac{x^2}tg_1(x,t)-i\sqrt{\frac xt}P_{1,21}^{(\infty)}(x,t)-i\sqrt{\frac xt}R_{1,21}(x,t),
\end{align}
and it suffices to use \eqref{P1inf CC} and \eqref{R1infCC} to obtain the result.
\end{proof}

We now study the asymptotics of the two terms on the right-hand side of \eqref{eq:finaldiffidCC}; this is done in the next two lemmas.

\begin{lemma}
As $x\to+\infty$ we have, uniformly in $xt\geq\delta$ {and $t \leq t_{0}$},
\begin{align}
\nonumber
\frac{x^2}t\left(\frac 14+g_1(x,t)\right) &=
(a_0(y)-1)^2\frac{x^2}{4 t}
+\frac{2c_+}{3\pi}a_0(y)^{3/2}\frac{x^{3/2}}{t^{3/2}}
+\left( \frac{(a_0(y)-1) a_1(y)}2+\frac{\log c_+'}\pi\sqrt{a_0(y)}\right)\frac{x^{1/2}}{t^{1/2}}
\\
\nonumber
&\qquad
+\frac{c_+}{\pi}a_1(y)\sqrt{a_0(y)}\frac 1t
+\left(\frac{(a_0(y)-1)a_2(y)}2-\frac{j_\sigma}{\sqrt{a_0(y)}}\right)\frac{t^{1/2}}{x^{1/2}}
\\
\nonumber
&\qquad
+\left(\frac{a_1(y)^2}4+\frac{c_+}{\pi}a_2(y)\sqrt{a_0(y)}+\frac{\log c_+'}{2\pi\sqrt{a_0(y)}}a_1(y)\right)\frac 1x
\\
\label{x2g/t}
&\qquad
+\frac{c_+}{4\pi}\frac{a_1(y)^2}{\sqrt{a_0(y)}}\frac 1{t^{1/2}x^{3/2}}+\bigO(t^{3/2}x^{-3/2})
\end{align}
where $y=\pi^2xt/c_+^2$, {$a_0(y),a_1(y),a_2(y)$} are {as in \eqref{def:ainfty} and in Theorem \ref{theorem:main}}, and $j_\sigma$ is as in \eqref{eq:Isigma}.
\end{lemma}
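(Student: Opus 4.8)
The plan is to start from the exact formula \eqref{eq:identityg1}, which we rewrite as
\begin{equation*}
\frac{x^2}{t}\left(\frac14 + g_1(x,t)\right) = \frac{x^2}{4t}(a(x,t)-1)^2 + \frac{x^{3/2}}{\pi t^{3/2}}\int_{-\infty}^{a(x,t)}(\log F)'\!\left(\tfrac xt\zeta\right)\sqrt{a(x,t)-\zeta}\,d\zeta ,
\end{equation*}
and then to expand the two pieces separately, inserting the endpoint asymptotics \eqref{def:ainfty} of Proposition \ref{prop:endpointCC}.

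First I would analyze the integral. Writing $(\log F)' = c_+ 1_{(0,+\infty)} + v$, with $v := (\log F)' - c_+ 1_{(0,+\infty)}$ exponentially decaying at $\pm\infty$ by \eqref{eq:vplusinfty}--\eqref{eq:vminusinfty}, the indicator part contributes exactly $c_+\int_0^{a(x,t)}\sqrt{a(x,t)-\zeta}\,d\zeta = \tfrac{2c_+}{3}a(x,t)^{3/2}$ (recall $a(x,t)>0$). For $\int_{-\infty}^{a(x,t)} v(\tfrac xt\zeta)\sqrt{a(x,t)-\zeta}\,d\zeta$ I would substitute $r=\tfrac xt\zeta$, split the integration range at $\pm\tfrac12\tfrac xt a(x,t)$, bound the two outer pieces as exponentially small (using the decay of $v$ and the fact that $a(x,t)$ is bounded above and away from $0$ for $xt\geq\delta$, $t\leq t_0$, by Proposition \ref{prop:endpointCC}), and Taylor-expand $\sqrt{a(x,t)-\tfrac tx r}$ on the middle piece. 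Using $\int_{\mathbb R} v(r)\,dr = \log c_+'$ and $\int_{\mathbb R} r\,v(r)\,dr = 2\pi j_\sigma$ — both obtained via the integrations by parts already carried out in the proof of Proposition \ref{prop:endpointCC}, together with \eqref{eq:Isigma} — this yields
\begin{equation*}
\int_{-\infty}^{a(x,t)}(\log F)'\!\left(\tfrac xt\zeta\right)\sqrt{a(x,t)-\zeta}\,d\zeta = \frac{2c_+}{3}a(x,t)^{3/2} + \frac tx\sqrt{a(x,t)}\,\log c_+' - \frac{\pi j_\sigma t^2}{x^2\sqrt{a(x,t)}} + \bigO(t^3 x^{-3}),
\end{equation*}
uniformly in $xt\geq\delta$ and $t\leq t_0$.

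Then I would substitute \eqref{def:ainfty}, $a(x,t) = a_0(y) + \tfrac{t^{1/2}}{x^{3/2}}a_1(y) + \tfrac{t^{3/2}}{x^{5/2}}a_2(y) + \bigO(t^{5/2}x^{-7/2})$ with $y = \pi^2 xt/c_+^2$, into $(a-1)^2$, $a^{3/2}$, $\sqrt a$ and $1/\sqrt a$, multiply by the prefactors $\tfrac{x^2}{4t}$ and $\tfrac{x^{3/2}}{\pi t^{3/2}}$, and collect terms. The bookkeeping is most transparent if one works with the two small quantities $\mu := t^{1/2}x^{-3/2}$ and $\lambda^{-1} := t/x$, so that $a(x,t) = a_0(y) + a_1(y)\mu + a_2(y)\mu\lambda^{-1} + \bigO(\mu\lambda^{-2})$ and each monomial appearing in \eqref{x2g/t} is one of the scales $x^2/t$, $x^{3/2}/t^{3/2}$, $x^{1/2}/t^{1/2}$, $t^{1/2}/x^{1/2}$ times a product of powers of $\mu$ and $\lambda^{-1}$. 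One then checks directly that every discarded contribution is $\bigO(t^{3/2}x^{-3/2})$ uniformly on $xt\geq\delta$, $t\leq t_0$ (e.g.\ $\tfrac{x^2}{4t}$ times the error in \eqref{def:ainfty} is $\bigO(t^{3/2}x^{-3/2})$, and $\tfrac{x^{3/2}}{\pi t^{3/2}}$ times the $\bigO(t^3x^{-3})$ error above is again $\bigO(t^{3/2}x^{-3/2})$; the other omitted terms carry an extra factor $(xt)^{-1/2}$ or $(xt)^{-1}$), and reading off the coefficients of $x^2/t$, $x^{3/2}/t^{3/2}$, $x^{1/2}/t^{1/2}$, $1/t$, $t^{1/2}/x^{1/2}$, $1/x$ and $t^{-1/2}x^{-3/2}$ reproduces the seven coefficients in \eqref{x2g/t}; for instance the $t^{1/2}/x^{1/2}$-coefficient receives $\tfrac12(a_0(y)-1)a_2(y)$ from the $\mu\lambda^{-1}$-term of $(a-1)^2$ and $-j_\sigma/\sqrt{a_0(y)}$ from the $j_\sigma$-term of the integral.

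The step I expect to be the main obstacle is precisely this last bookkeeping: since the leading behavior lives on two distinct scales, one must retain the $a_2(y)$-contributions to $(a-1)^2$ and to $a^{3/2}$, the $a_1(y)$-correction to $\sqrt a$ in the $\log c_+'$ term, and the $j_\sigma$ term, while correctly discarding everything that is uniformly $\bigO(t^{3/2}x^{-3/2})$; once the accounting of scales is in place, the individual computations are elementary.
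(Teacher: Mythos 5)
Your proposal is correct and follows essentially the same route as the paper: start from \eqref{eq:identityg1}, extract the explicit contribution generating $a^{3/2}$, $\sqrt{a}\log c_+'$, and the $j_\sigma$ term, then substitute \eqref{def:ainfty} and collect terms by scale. The only cosmetic difference is that the paper first integrates by parts to turn $(\log F)'\sqrt{a-\zeta}$ into $\log F/\sqrt{a-\zeta}$ and then subtracts $(c_+r+\log c_+')1_{(0,+\infty)}(r)$ from $\log F(r)$, whereas you skip the integration by parts and subtract $c_+1_{(0,+\infty)}$ directly from $(\log F)'$ — these are equivalent modulo boundary terms, and the Taylor expansion of $\sqrt{a-tr/x}$ vs.\ $1/\sqrt{a-tr/x}$ is the dual step; the ensuing bookkeeping and error control on $xt\geq\delta$, $t\leq t_0$ is identical.
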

\begin{proof}
We start by rewriting the expression \eqref{eq:identityg1} as
\begin{align}
\nonumber
g_1(x,t)&=
\frac{a(x,t)^2}4-\frac{a(x,t)}2+\frac{1}{\pi \sqrt{xt}} \int_{-\infty}^{a(x,t)}(\log F)'(\tfrac{x}{t}\zeta)\sqrt{a(x,t)-\zeta}\,d\zeta
\\
\nonumber
&=\frac{a(x,t)^2}4-\frac{a(x,t)}2+\frac{1}{2\pi}\sqrt{\frac t{x^3}}\int_{-\infty}^{a(x,t)}\frac{\log F(\tfrac xt\zeta)}{\sqrt{a(x,t)-\zeta}}d\zeta
\\
\nonumber
&=\frac{a(x,t)^2}4-\frac{a(x,t)}2
+\frac{2c_+}{3\pi\sqrt{xt}} a(x,t)^{3/2}+\frac{ \log c_+'}{\pi}\sqrt{\frac t{x^3}a(x,t)}
\\
&\qquad\qquad\qquad
+\frac{1}{2\pi}\frac {t^{3/2}}{x^{5/2}}\int_{-\infty}^{xa(x,t)/t}\frac{\log F(r)-(c_+r+\log c_+')1_{(0,+\infty)}(r)}{\sqrt{a(x,t)-\tfrac tx r}}dr
\end{align}
where in the first step we integrate by parts, and in the last step we also change integration variable $r=\tfrac xt \zeta$.
Similar arguments to those used previously {in the proof of Proposition \ref{prop:endpointCC}} allow to estimate the last integral as
\begin{equation}
\frac 1{2\pi}\int_{-\infty}^{xa(x,t)/t}\frac{\log F(r)-(c_+r+\log c_+')1_{(0,+\infty)}(r)}{\sqrt{a(x,t)-\tfrac tx r}}dr =- \frac{j_\sigma}{\sqrt{a(x,t)}}+\bigO(t/x).
\end{equation}
It then suffices to apply \eqref{def:ainfty}.
\end{proof}

\begin{lemma}
We have the following asymptotics as $x\to+\infty$, uniformly for $xt\geq\delta$ and $t \leq t_{0}$:
\begin{align}
\label{eq:R121asympCC}
-\frac{5f_2(x,t)}{32f_1(x,t)^{5/2}} = \frac{1+\sqrt{1+y}}{48(1+y)}+\bigO(t^{1/2}x^{-{3}/2}),
\end{align}
{where $y=\pi^2xt/c_+^2$.}
\end{lemma}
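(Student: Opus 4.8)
The plan is to reduce the statement to the asymptotics of the two integrals
\[
I_2(x,t):=\int_{-\infty}^{a(x,t)}(\log F)''(\tfrac xt\zeta)\frac{d\zeta}{\sqrt{a(x,t)-\zeta}},\qquad
I_3(x,t):=\int_{-\infty}^{a(x,t)}(\log F)'''(\tfrac xt\zeta)\frac{d\zeta}{\sqrt{a(x,t)-\zeta}},
\]
which are exactly the integrals entering the formulas \eqref{eq:f1f2CC}. Writing $a=a(x,t)$, those formulas read $f_1^{3/2}=1+\frac{x^{1/2}}{\pi t^{3/2}}I_2$ and $f_2=\frac{4x^{3/2}}{15\pi t^{5/2}f_1^{1/2}}I_3$, so that
\[
-\frac{5f_2(x,t)}{32f_1(x,t)^{5/2}}=-\frac{x^{3/2}}{24\pi t^{5/2}}\,\frac{I_3(x,t)}{f_1(x,t)^3}.
\]
It therefore suffices to find the leading behavior of $I_2$, $I_3$, and of $f_1^3$.

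For $I_2$ and $I_3$ I would substitute $r=\tfrac xt\zeta$, obtaining $I_k=\tfrac tx\int_{-\infty}^{\frac xt a}(\log F)^{(k)}(r)(a-\tfrac tx r)^{-1/2}\,dr$. By Proposition \ref{prop:endpointCC}, $a(x,t)$ stays bounded and bounded away from $0$ uniformly for $x\geq K$, $xt\geq\delta$, $t\leq t_0$, while $\tfrac xt\geq\tfrac x{t_0}\to+\infty$. By Assumptions \ref{assumptions} and Remark \ref{remark:logFdecay} (cf.\ \eqref{eq:vplusinfty}--\eqref{eq:vminusinfty}, which by Cauchy estimates for the derivatives of $\log F$ in a fixed strip around $\R$ — legitimate since $F$ is entire, $F\geq 1$ on $\R$, and $|F(z)|=\bigO(e^{c_+\re z})$ — extend to $(\log F)''$ and $(\log F)'''$), both $(\log F)''$ and $(\log F)'''$ decay exponentially as $r\to\pm\infty$; hence the contribution of $|r|>\tfrac xt a/2$ is exponentially small in $x/t$, and on $|r|\leq\tfrac xt a/2$ one expands $(a-\tfrac tx r)^{-1/2}=a^{-1/2}+\tfrac t{2xa^{3/2}}r+\bigO((t/x)^2r^2)$. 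Combining this with the moment identities $\int_\R(\log F)''=c_+$, $\int_\R(\log F)'''=0$, and $\int_\R r(\log F)'''=-c_+$ (all obtained by integration by parts using $(\log F)'(r)\to c_+,0$ and $(\log F)''(r)\to 0$ as $r\to+\infty,-\infty$), one gets
\[
I_2(x,t)=\frac{c_+ t}{x\sqrt a}+\bigO\!\big(t^2x^{-2}\big),\qquad
I_3(x,t)=-\frac{c_+ t^2}{2x^2a^{3/2}}+\bigO\!\big(t^3x^{-3}\big),
\]
uniformly in the regime considered. Note that the vanishing $\int_\R(\log F)'''=0$ is essential: otherwise $I_3$ would be of order $t/x$ and the right-hand side of the lemma would grow like $x^{1/2}t^{-3/2}$ instead of being $\bigO(1)$.

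It then remains to assemble the pieces. From the $I_2$ estimate, $\tfrac{x^{1/2}}{\pi t^{3/2}}I_2=\tfrac{c_+}{\pi\sqrt{xt\,a}}+\bigO(t^{1/2}x^{-3/2})$, hence $f_1^3=\big(1+\tfrac{c_+}{\pi\sqrt{xt\,a}}\big)^2+\bigO(t^{1/2}x^{-3/2})$; substituting $a=a_0(y)+\bigO(t^{1/2}x^{-3/2})$ from Proposition \ref{prop:endpointCC}, with $y=\pi^2xt/c_+^2$, and using the identity $xt\,a_0(y)=c_+^2(\sqrt{y+1}-1)^2/\pi^2$ (equivalently Remark \ref{rem:aalpha} together with \eqref{def:a}), gives $f_1^3=\tfrac{y+1}{(\sqrt{y+1}-1)^2}+\bigO(t^{1/2}x^{-3/2})$. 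Since $y$ is bounded away from $0$, the quantities $a_0(y)$, $\sqrt{y+1}-1$ and $f_1^3$ are all bounded and bounded away from $0$, so inserting the $I_3$ estimate into the second display above yields $-\tfrac{5f_2}{32f_1^{5/2}}=\tfrac{c_+}{48\pi t^{1/2}x^{1/2}a^{3/2}f_1^3}+\bigO(t^{1/2}x^{-3/2})$; substituting the values of $a$ and $f_1^3$, using $a_0(y)^{3/2}=c_+^3(\sqrt{y+1}-1)^3/(\pi^3(xt)^{3/2})$, and simplifying via $\tfrac{y}{\sqrt{y+1}-1}=1+\sqrt{y+1}$ produces exactly $\tfrac{1+\sqrt{1+y}}{48(1+y)}$, as claimed. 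The main, and essentially only non-routine, obstacle is the uniform control of $I_2$ and $I_3$: deducing the exponential decay of $(\log F)''$ and $(\log F)'''$ from Assumptions \ref{assumptions} (which bear directly only on $\log F$ and $(\log F)'$), verifying the moment identities — in particular the cancellation $\int_\R(\log F)'''=0$ — and checking that the error terms are uniform all the way down to $xt=\delta$, which rests on the two-sided bound for $a(x,t)$ in Proposition \ref{prop:endpointCC}. Everything else parallels the manipulations already carried out for $g_1(x,t)$ in the preceding lemma.
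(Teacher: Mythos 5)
Your argument is essentially identical to the paper's proof: reduce to the integrals of $(\log F)''$ and $(\log F)'''$ via \eqref{eq:f1f2CC}, establish the asymptotics $I_2\sim c_+ t/(x\sqrt a)$ and $I_3\sim -c_+ t^2/(2x^2a^{3/2})$ using the moment identities $\int_\R(\log F)''=c_+$, $\int_\R(\log F)'''=0$, $\int_\R r(\log F)'''=-c_+$, and then substitute together with the expansion of $a(x,t)$ from Proposition \ref{prop:endpointCC}. One small caveat: your justification of the exponential decay of $(\log F)''$ and $(\log F)'''$ via Cauchy estimates ``in a fixed strip around $\R$'' is not quite watertight as stated, because $F$ may vanish off the real axis (e.g.\ $F(z)=1+e^z$ has zeros at $i\pi(2k+1)$), so $\log F$ need not be holomorphic in a uniform strip; the paper sidesteps this by simply citing ``similar estimates as in Proposition \ref{prop:endpointCC}'', and a cleaner route is to derive the decay directly from the convexity $(\log F)''\geq 0$ together with \eqref{eq:vplusinfty}--\eqref{eq:vminusinfty}.
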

\begin{proof}
For any $a>0$ we have
\begin{align*}
\int_{-\infty}^{a}(\log F)''(\tfrac xt\zeta)\frac{d\zeta}{\sqrt{a-\zeta}}= {\frac{t}{x}} \frac 1{\sqrt a}\int_{-\infty}^{+\infty}(\log F)''(r)\,dr+\bigO(t{^{2}}/x{^{2}})
\end{align*}
as $x\to+\infty$, where we use similar estimates as in the proof of Proposition \ref{prop:endpointCC}.
Moreover
\begin{align*}
\int_{-\infty}^{+\infty}(\log F)''(r)\,dr = \lim_{r_1,r_2\to+\infty} \left[(\log F)'(r_1)-(\log F)'(-r_2)\right] = c_+.
\end{align*}
Similarly, for any $a>0$ we have
\begin{align*}
\int_{-\infty}^{a}(\log F)'''(\tfrac xt\zeta)\frac{d\zeta}{\sqrt{a-\zeta}}=\frac 1{2a^{3/2}}\frac{t^2}{x^2}\int_{-\infty}^{+\infty}r\,(\log F)'''(r)\,dr+\bigO(t^{3}/x^{3})
\end{align*}
as $x\to+\infty$.
Moreover we can integrate by parts to obtain
\begin{align*}
\int_{-\infty}^{+\infty}r\,(\log F)'''(r)\,dr = -\int_{-\infty}^{+\infty}(\log F)''(r)\,dr=-c_+.
\end{align*}
The proof is complete by the explicit formulae \eqref{eq:f1f2CC} for $f_1,f_2$ and the expansion \eqref{def:ainfty} for $a(x,t)$.
\end{proof}

\begin{corollary}\label{corollary:Qsec4}
We have
\begin{align*}
\partial_x\log Q_\sigma(x,t) =
\partial_x\left[-\frac{c_+^6}{\pi^6t^4}F_1\left(\frac{\pi^2}{c_+^2}xt\right)-\frac{c_+^3\log c_+'}{\pi^4t^2}F_2\left(\frac{\pi^2}{c_+^2}xt\right)+F_3\left(\frac{\pi^2}{c_+^2}xt\right)\right]
+\bigO(t^{3/2}x^{-3/2}),
\end{align*}
where $F_1, F_{2}, F_{3}$ are defined in Theorem \ref{theorem:main}.
\end{corollary}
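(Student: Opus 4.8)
The plan is to substitute the two preceding lemmas into the differential identity \eqref{eq:finaldiffidCC} and to recognize the resulting explicit expression as the claimed $x$-derivative. First I would insert the expansion \eqref{x2g/t} of $\frac{x^2}{t}\bigl(\frac14+g_1(x,t)\bigr)$ and the asymptotics \eqref{eq:R121asympCC} of $-\frac{5f_2(x,t)}{32f_1(x,t)^{5/2}}$ into \eqref{eq:finaldiffidCC}. This writes $\partial_x\log Q_\sigma(x,t)$ as a finite sum of explicit elementary functions of $x$, $t$ and $y:=\pi^2xt/c_+^2$, built only from $a_0(y),a_1(y),a_2(y)$, $j_\sigma$ and $\log c_+'$, plus a remainder. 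The first, easy step is to note that in the sector $x\geq K$, $xt\geq\delta$, $t\leq t_0$ all the error contributions — the $\bigO(x^{-5/2}t^{1/2})$ of \eqref{eq:finaldiffidCC}, the $\bigO(t^{3/2}x^{-3/2})$ of \eqref{x2g/t}, and the $\bigO(t^{1/2}x^{-5/2})$ produced by multiplying \eqref{eq:R121asympCC} by $x^{-1}$ — collapse to $\bigO(t^{3/2}x^{-3/2})$, since $x^{-5/2}t^{1/2}=(xt)^{-1}\,x^{-3/2}t^{3/2}\leq\delta^{-1}\,x^{-3/2}t^{3/2}$ throughout this sector.

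What remains is the purely algebraic identity asserting that the explicit part equals
\[
\partial_x\Bigl[-\tfrac{c_+^6}{\pi^6t^4}F_1(y)-\tfrac{c_+^3\log c_+'}{\pi^4t^2}F_2(y)+F_3(y)\Bigr]=-\tfrac{c_+^4}{\pi^4t^3}F_2(y)-\tfrac{c_+\log c_+'}{\pi^2t}\bigl(\sqrt{1+y}-1\bigr)+\tfrac{y}{x}F_3'(y),
\]
where I used $\partial_x=\tfrac{\pi^2t}{c_+^2}\partial_y$, $F_1'=F_2$ and $F_2'=F_1''=\sqrt{1+y}-1$. I would verify this by grouping the terms of the explicit part by their order in $x$: the $\frac{x^2}{t}$- and $\frac{x^{3/2}}{t^{3/2}}$-terms should reproduce $-\frac{c_+^4}{\pi^4t^3}F_2(y)$; the $\frac1t$- and $\frac{x^{1/2}}{t^{1/2}}$-terms should reproduce $-\frac{c_+\log c_+'}{\pi^2t}(\sqrt{1+y}-1)$; and the $\frac{t^{1/2}}{x^{1/2}}$-, $\frac1x$- and $\frac{1}{t^{1/2}x^{3/2}}$-terms, together with the contribution $\frac1x\cdot\frac{1+\sqrt{1+y}}{48(1+y)}$ coming from \eqref{eq:R121asympCC}, should reproduce $\frac{y}{x}F_3'(y)$. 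After substituting the closed forms $a_0(y)=\frac{(\sqrt{y+1}-1)^2}{y}$, $a_1(y)=-\frac{\log c_+'}{\pi}\sqrt{\frac{y}{1+y}}$ and $a_2(y)$ from \eqref{eq:thm-a2}, and setting $s=\sqrt{1+y}$, each of the three claims reduces to an elementary rational identity in $s$. For the first two it is in fact quicker to reuse the integrations already performed in the proof of Corollary \ref{corollary:Qsec3}, together with Remark \ref{rem:aalpha} (which gives $a_0(y)=xt\,\alpha(xt)$).

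The third identity is the laborious one and is the main obstacle. Here I would split the verification according to parameter dependence, checking separately that the coefficient of $j_\sigma$, the coefficient of $\log^2c_+'$, and the remaining $\sigma$-independent part each match $\frac{y}{x}F_3'(y)$. For instance, the $j_\sigma$-contributions coming from the $\frac{j_\sigma}{\sqrt{a_0}}$-term of \eqref{x2g/t} and from the $j_\sigma$-part of $a_2(y)$ combine, after simplification, to $\frac{c_+j_\sigma y}{\pi(\sqrt{1+y}-1)}$, which is precisely $y$ times the $j_\sigma$-part of $F_3'(y)$; in particular the two $a_2$-terms (one weighted by $a_0-1$, one by $\sqrt{a_0}$) cancel. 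The $\log^2c_+'$-part is handled the same way, and the $\sigma$-independent part is the one where the Airy contribution $\frac{1+\sqrt{1+y}}{48(1+y)}$ from \eqref{eq:R121asympCC} is exactly what is needed to close the identity (one checks $\frac{1+\sqrt{1+y}}{48(1+y)}=y\bigl[-\frac1{48(1+y)}+\frac1{48\sqrt{1+y}(\sqrt{1+y}-1)}\bigr]$). The only real difficulty throughout is the bookkeeping of these elementary terms and cancellations; no new analytic input is required beyond the two preceding lemmas and the uniformity already established for the quantities entering them.
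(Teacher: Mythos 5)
Your proposal is correct and follows exactly the route the paper takes: the paper's own proof is simply the one-line statement that the result is a direct verification based on \eqref{eq:finaldiffidCC} and the two preceding lemmas, which is precisely the substitution-and-bookkeeping you carry out, and your cancellations (the vanishing of the $a_2$ coefficient, the error-term absorption via $x^{-5/2}t^{1/2}=(xt)^{-1}x^{-3/2}t^{3/2}$, and the three order-by-order checks in $s=\sqrt{1+y}$) are all correct.
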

\begin{proof}
It is a direct verification based on \eqref{eq:finaldiffidCC} and the last two lemmas.
\end{proof}

\subsection{Proof of Theorem \ref{theorem:main}}\label{finalsec}
The asymptotics \eqref{thm:asu} for $u_\sigma$ 
follow from Proposition \ref{prop:asympusect3} if $0<xt\leq \delta$. Indeed, in this regime, it is easy to see that the term
$\frac{t^{1/2}}{2x^{3/2}}a_2\left(\frac{\pi^2}{c_+^2}xt\right)$ is $\mathcal O(x^{-2})$, since $a_2(y)=\mathcal O(y^{-1/2})$ as $y\to 0$ by \eqref{eq:thm-a2}.
If $xt\geq \delta$, \eqref{thm:asu} follows immediately from Proposition \ref{prop:asympusect4}.

\medskip

For $Q_\sigma$, we first observe that Corollary 
\ref{corollary:Qsec3} and Corollary \ref{corollary:Qsec4} together imply that
\[\partial_x\log Q_\sigma(x,t)=
\partial_x\left[-\frac{c_+^6}{\pi^6t^4}F_1\left(\frac{\pi^2}{c_+^2}xt\right)-\frac{c_+^3\log c_+'}{\pi^4t^2}F_2\left(\frac{\pi^2}{c_+^2}xt\right)+F_3\left(\frac{\pi^2}{c_+^2}xt\right)\right]
+\bigO(t^{3/2}x^{-3/2})+\mathcal O(x^{-3}),
\]
uniformly for $x\geq K$, $t\leq t_0$.
Let us now integrate this expression in $x$, starting at the point $x=K$. Since the error terms are integrable in $x$ between $x=K$ and $x=\infty$, we then obtain
\begin{multline}\log Q_\sigma(x,t)=\log Q_\sigma(K,t)
+\frac{c_+^6}{\pi^6t^4}F_1\left(\frac{\pi^2}{c_+^2}Kt\right)+\frac{c_+^3\log c_+'}{\pi^4t^2}F_2\left(\frac{\pi^2}{c_+^2}Kt\right)-F_3\left(\frac{\pi^2}{c_+^2}Kt\right)\\
-\frac{c_+^6}{\pi^6t^4}F_1\left(\frac{\pi^2}{c_+^2}xt\right)-\frac{c_+^3\log c_+'}{\pi^4t^2}F_2\left(\frac{\pi^2}{c_+^2}xt\right)+F_3\left(\frac{\pi^2}{c_+^2}xt\right)
+\bigO(1).
\end{multline}
Then, we distinguish two cases. First, if $t$ does not tend to $0$, then $\log Q_\sigma(K,t)$ remains bounded as well as the other terms that do not depend on $x$, and the result is proven. Secondly, if $t\to 0$, we know from \cite[Theorem 1.14 (iii)]{CaClR2020}, see \eqref{logQexpansion}, that
\begin{align*}
\log Q_\sigma(K,t) & = -\frac{K^3}{12t}-\frac{1}{8}\log (Kt^{-1/3}) + \bigO(1) = -\frac{K^3}{12t}+\frac{1}{24}\log t + \bigO(1), \qquad \mbox{as } t \to 0.
\end{align*}
Furthermore, it is straightforward to see from \eqref{eq:thm-F1F2}--\eqref{eq:thm-F3}, particular from the asymptotics as $y\to 0$, that
\begin{align*}
& \frac{c_+^6}{\pi^6t^4}F_1 \hspace{-0.07cm}\left(\frac{\pi^2}{c_+^2}Kt\right)\hspace{-0.08cm}+\hspace{-0.06cm}\frac{c_+^3\log c_+'}{\pi^4t^2}F_2 \hspace{-0.07cm} \left(\frac{\pi^2}{c_+^2}Kt\right) \hspace{-0.05cm} -\hspace{-0.05cm} F_3 \hspace{-0.07cm} \left(\frac{\pi^2}{c_+^2}Kt\right) \hspace{-0.05cm}= \hspace{-0.05cm}
\frac{K^3}{12t} 
\hspace{-0.05cm} - \hspace{-0.05cm} \left(\frac{2c_+ j_\sigma}\pi+\frac{\log^2c'_+}{2\pi^2}+\frac{1}{24}\right)\log t + \bigO(1),
\end{align*}
and that yields the result also in this case.

\paragraph{Acknowledgements.}
The authors are grateful to Pierre Le Doussal for important remarks about related literature.
CC was supported by the European Research Council, Grant Agreement No. 682537, the Ruth and Nils-Erik Stenb\"ack Foundation, and the Novo Nordisk Fonden Project Grant 0064428.
TC and GR were supported by the Fonds de la Recherche Scientifique-FNRS under EOS project O013018F.

\appendix

\section{Airy model RH problem}
\label{app:airy}
Introduce the following $2\times 2$-matrix valued entire functions:
\begin{align*}
\Phi_\Ai^{[1]}(z):={}&
\sqrt{2 \pi}\, e^{\frac{\pi i}{6}} \begin{pmatrix}
1 & 0 \\ 0 & -i
\end{pmatrix}
\begin{pmatrix}
\Ai(z) & \Ai(\omega^2 z) \\ \Ai'(z) & \omega^2\Ai'(\omega^2 z)
\end{pmatrix}
e^{-i\pi\sigma_3/6},
\\
\Phi_\Ai^{[2]}(z):={}&
\sqrt{2 \pi}\, e^{\frac{\pi i}{6}} \begin{pmatrix}
1 & 0 \\ 0 & -i
\end{pmatrix}
\begin{pmatrix}
-\omega\Ai(\omega z) & \Ai(\omega^2 z) \\ -\omega^2\Ai'(\omega z) & \omega^2\Ai'(\omega^2 z)
\end{pmatrix}
e^{-i\pi\sigma_3/6},
\\
\Phi_\Ai^{[3]}(z):={}&
\sqrt{2 \pi}\, e^{\frac{\pi i}{6}} \begin{pmatrix}
1 & 0 \\ 0 & -i
\end{pmatrix}
\begin{pmatrix}
-\omega^{2}\Ai(\omega^{2} z) & -\omega^2\Ai(\omega z) \\ -\omega \Ai'(\omega^{2} z) & -\Ai'(\omega z)
\end{pmatrix}
e^{-i\pi\sigma_3/6},
\\
\Phi_\Ai^{[4]}(z):={}&
\sqrt{2 \pi}\, e^{\frac{\pi i}{6}} \begin{pmatrix}
1 & 0 \\ 0 & -i
\end{pmatrix}
\begin{pmatrix}
\Ai(z) & -\omega^2\Ai(\omega z) \\ \Ai'(z) & -\Ai'(\omega z)
\end{pmatrix}e^{-i\pi\sigma_3/6},
\end{align*}
where $\Ai$ is the Airy function and $\omega:=e^{2\pi i/3}$. 
Let the contour $\Sigma_{\Ai}$ be the union of four smooth arcs $\gamma_1,\dots,\gamma_4$ meeting at the origin only, each extending to infinity so that $\C\setminus\Sigma_{\Ai}$ is the disjoint union of four domains $\Omega_1,\dots,\Omega_4$ (such that the $\Omega_j$ is bounded by the arcs $\gamma_j,\gamma_{j+1}$) satisfying
\begin{align*}
&z\in\Omega_1\Rightarrow -\varepsilon<\arg z<\frac\pi 2+\varepsilon,
& &z\in\Omega_2\Rightarrow \frac\pi 2-\varepsilon<\arg z<\pi+\varepsilon,\\
&z\in\Omega_3\Rightarrow -\pi-\varepsilon<\arg z<-\frac\pi 2+\varepsilon,
& &z\in\Omega_4\Rightarrow -\frac\pi 2-\varepsilon<\arg z<\varepsilon,
\end{align*}
for some $\varepsilon>0$ sufficiently small ($\varepsilon<\pi/6$ is enough).
It can be shown by the connection formula $\Ai(z)+\omega\Ai(\omega z)+\omega^{2}\Ai(\omega^2 z)=0$ and standard asymptotic properties of the Airy function that the piecewise defined matrix
\begin{align}
\Phi_{\Ai}(z) := \Phi_{\Ai}^{[j]}(z),\qquad z\in\Omega_j,
\end{align}
satisfies the following RH model problem \cite{DKMVZ1}.
For the formulation of the latter we agree that $\gamma_2,\gamma_3,\gamma_4$ are oriented towards the origin and $\gamma_1$ towards infinity, and we denote by the subscripts $+$ and $-$ the boundary values from, respectively, the left and right sides of $\Sigma_{\Ai}$.
\subsubsection*{RH problem for $\Phi_\Ai$}
\begin{itemize}
\item[(a)] $\Phi_{\mathrm{Ai}} : \mathbb{C} \setminus \Sigma_{\Ai} \rightarrow \mathbb{C}^{2 \times 2}$ is analytic
\item[(b)] $\Phi_{\mathrm{Ai}}$ has the jump relations
\begin{align*}
& \Phi_{\mathrm{Ai},+}(z) = \Phi_{\mathrm{Ai},-}(z) \begin{pmatrix}
 1 & 1 \\
 0 & 1
\end{pmatrix}, & & z \in \gamma_1,
\\
& \Phi_{\mathrm{Ai},+}(z) = \Phi_{\mathrm{Ai},-}(z) \begin{pmatrix}
 1 & 0  \\ 1 & 1
\end{pmatrix}, & & z \in \gamma_2 \cup \gamma_3,
\\
& \Phi_{\mathrm{Ai},+}(z) = \Phi_{\mathrm{Ai},-}(z) \begin{pmatrix}
0 & 1 \\ -1 & 0
\end{pmatrix}, & & z \in \gamma_4.
\end{align*}
\item[(c)] As $z \to \infty$, $z \notin \Sigma_{\Ai}$, we have
\begin{equation}\label{Asymptotics Airy}
\Phi_{\mathrm{Ai}}(z) = z^{-\frac{\sigma_{3}}{4}}A \left( I + \frac 1{z^{3/2}}\Phi_{\mathrm{Ai},1} +\bigO(z^{-3})\right) e^{-\frac{2}{3}z^{3/2}\sigma_{3}},
\end{equation}
where $A$ is given in \eqref{eq:A} and $\Phi_{\mathrm{Ai},1} = \frac{1}{8}\begin{pmatrix}
\frac{1}{6} & i \\ i & -\frac{1}{6}
\end{pmatrix}$.

\item[(d)] As $z \to 0$, $\Phi_{\mathrm{Ai}}(z) = \bigO(1)$. 
\end{itemize}

\end{document}